\pdfoutput=1
\documentclass[11pt]{article}
\usepackage[protrusion=true,expansion=true]{microtype}
\usepackage{amsmath,amssymb,amsfonts,amsthm}
\usepackage{subcaption}
\usepackage{graphicx}
\usepackage{fullpage}
\usepackage[backref=page]{hyperref}
\usepackage{color}
\usepackage{wrapfig}
\usepackage{tikz}
\usetikzlibrary{decorations.pathreplacing}
\usepackage{setspace}
\usepackage{algorithm}
\usepackage[noend]{algpseudocode}
\usepackage[framemethod=tikz]{mdframed}
\usepackage{xspace}
\usepackage{pgfplots}
\usepackage{framed}
\usepackage{subcaption}
\usepackage{thmtools}
\usepackage{thm-restate}
\pgfplotsset{compat=1.5}

\newtheorem{theorem}{Theorem}[section]
\newtheorem{corollary}[theorem]{Corollary}
\newtheorem{lemma}[theorem]{Lemma}

\newtheorem{definition}[theorem]{Definition}
\newtheorem{remark}[theorem]{Remark}

\newtheorem{problem}[theorem]{Problem}
\newtheorem{assumption}[theorem]{Assumption}

\newenvironment{proofof}[1]{\begin{trivlist} \item {\bf Proof
#1:~~}}
  {\qed\end{trivlist}}

\newcommand{\namedref}[2]{\hyperref[#2]{#1~\ref*{#2}}}
\newcommand{\thmlab}[1]{\label{thm:#1}}
\newcommand{\thmref}[1]{\namedref{Theorem}{thm:#1}}
\newcommand{\lemlab}[1]{\label{lem:#1}}
\newcommand{\lemref}[1]{\namedref{Lemma}{lem:#1}}

\newcommand{\corlab}[1]{\label{cor:#1}}

\newcommand{\seclab}[1]{\label{sec:#1}}
\newcommand{\secref}[1]{\namedref{Section}{sec:#1}}

\newcommand{\alglab}[1]{\label{alg:#1}}
\renewcommand{\algref}[1]{\namedref{Algorithm}{alg:#1}}

\newcommand{\deflab}[1]{\label{def:#1}}
\newcommand{\defref}[1]{\namedref{Definition}{def:#1}}

\newcommand{\assumref}[1]{\namedref{Assumption}{assum:#1}}
\newcommand{\assumlab}[1]{\label{assum:#1}}

\def \HAM    {\mdef{\mathsf{HAM}}}
\def \detgapeq    {\mdef{\textsc{DetGapEQ}}}
\def \bernmg    {\mdef{\textsc{BernMG}}}
\def \bernhhh    {\mdef{\textsc{BernHHH}}}
\def \lzero    {\mdef{\textsc{Estimate-L0}}}
\def \misragries    {\mdef{\textsc{MisraGries}}}
\def \Gen    {\mdef{\mathsf{Gen}}}
\def \Eval    {\mdef{\mathsf{Eval}}}

\def \oreq    {\mdef{\textsc{OrEq}}}
\def \streamalg    {\mdef{\mathsf{StreamAlg}}}
\def \adversary    {\mdef{\mathsf{Adversary}}}

%---------------------------------------------------------------------------------------------------------------------------------------------------

%---------------------------------------------------------------------------------------------------------------------------------------------------

\newcommand{\PPPr}[2]{\ensuremath{\underset{#1}{\mathbf{Pr}}\left[#2\right]}}

\renewcommand{\O}[1]{\ensuremath{\mathcal{O}\left(#1\right)}}
\newcommand{\tO}[1]{\ensuremath{\tilde{\mathcal{O}}\left(#1\right)}}
\newcommand{\eps}{\varepsilon}

\def \calA    {\mdef{\mathcal{A}}}

\def \calQ    {\mdef{\mathcal{Q}}}

\def \bf    {\mdef{\mathbf{f}}}

%%% Unified Formating
\newcommand{\mdef}[1]{{\ensuremath{#1}}\xspace}  % Math Def which can also be used in normal text.
     % Distribution should use mathcal.
       % (Important) Sets should use mathbb.
       % Algorithms should use mathtt.
\newcommand{\myfunc}[1]{\mdef{\mathsf{#1}}}      % Functions denoted with text (e.g. size()) should use mathsf.
               % Vector
                         % Shorthand for ``displaystyle''.

\DeclareMathOperator*{\negl}{negl}

\DeclareMathOperator*{\poly}{poly}

%%% Text notation

     % 'The i-th entry it a list...' --> i\th
     % 'The 1-st entry it a list...' --> 1\st
     % 'The 2-nd entry it a list...' --> 2\nd
     % 'The 3-rd entry it a list...' --> 3\rd

%%% Common Sets & Symbols & Functions
                % Function outputing vector of cardinalities of a given vector of sets.
\def \negl     {\mdef{\myfunc{negl}}}                % Negliglbe function
%\def \polylog  {\mdef{\myfunc{polylog}}}             % Polylogarithm function
       % Any function that is polynomial in the logarithm of the logarithm of the argument
         % Absolute value
              % Absolute value
               % Absolute value
                     % Dimension of a vector
                        % Absolute value
 % Expected value

\newcommand{\ignore}[1]{}

\newif\ifnotes\notestrue %set this to true if notes are visible and to false (next line) if they should be hidden
% \newif\ifnotes\notesfalse
\ifnotes
\newcommand{\samson}[1]{\textcolor{purple}{{\bf (Samson:} {#1}{\bf ) }} \marginpar{\tiny\bf
             \begin{minipage}[t]{0.5in}
               \raggedright S:
            \end{minipage}}}            							
\else
\newcommand{\samson}[1]{}
\fi

\makeatletter
\renewcommand*{\@fnsymbol}[1]{\textcolor{mahogany}{\ensuremath{\ifcase#1\or *\or \dagger\or \ddagger\or
 \mathsection\or \triangledown\or \mathparagraph\or \|\or **\or \dagger\dagger
   \or \ddagger\ddagger \else\@ctrerr\fi}}}
\makeatother

\providecommand{\email}[1]{\href{mailto:#1}{\nolinkurl{#1}\xspace}}

\definecolor{mahogany}{rgb}{0.75, 0.25, 0.0}
\definecolor{darkblue}{rgb}{0.0, 0.0, 0.55}
\definecolor{darkpastelgreen}{rgb}{0.01, 0.75, 0.24}
\definecolor{darkgreen}{rgb}{0.0, 0.2, 0.13}
%\definecolor{darkgoldenrod}{rgb}{0.72, 0.53, 0.04}
%\definecolor{darkred}{rgb}{0.55, 0.0, 0.0}
\definecolor{bleudefrance}{rgb}{0.19, 0.55, 0.91}
\definecolor{forestgreen(web)}{rgb}{0.13, 0.55, 0.13}
\hypersetup{
     colorlinks   = true,
     citecolor    = mahogany,
	 linkcolor	  = forestgreen(web),
	 urlcolor     = mahogany
}

\usepackage[capitalise]{cleveref}

% Symbols

\newcommand{\te}{\text}

% Hats

% Letters

% Fancy
\newcommand{\cA}{\mathcal A}

% Fraktor

% Script

% Cases

\begin{document}

\title{The White-Box Adversarial Data Stream Model
%\thanks{E-mail: \email{miklos.ajtai@gmail.com}, \email{vbraverman@google.com}, \email{t.s.jayram@gmail.com}, \email{silwal@mit.edu}, \email{alecsun@andrew.cmu.edu}, \email{dwoodruf@andrew.cmu.edu}, \email{samsonzhou@gmail.com}}
}

\author{
Mikl\'{o}s Ajtai\thanks{Hungarian Academy of Sciences. E-mail: \email{miklos.ajtai@gmail.com}} 
\and
Vladimir Braverman\thanks{Google Research. E-mail: \email{vbraverman@google.com}}
\and
T.S. Jayram\thanks{Lawrence Livermore National Laboratories. E-mail: \email{t.s.jayram@gmail.com}}
\and
Sandeep Silwal\thanks{MIT. E-mail: \email{silwal@mit.edu}}
\and
Alec Sun\thanks{Carnegie Mellon University. E-mail: \email{alecsun@andrew.cmu.edu}}
\and
David P. Woodruff\thanks{Carnegie Mellon University. E-mail: \email{dwoodruf@andrew.cmu.edu}}
\and
Samson Zhou\thanks{Carnegie Mellon University. E-mail: \email{samsonzhou@gmail.com}}
}

\maketitle

\begin{abstract}
There has been a flurry of recent literature studying streaming algorithms for which the input stream is chosen adaptively by a black-box adversary who observes the output of the streaming algorithm at each time step. However, these algorithms fail when the adversary has access to the internal state of the algorithm, rather than just the output of the algorithm. 

We study streaming algorithms in the \emph{white-box adversarial model}, where the stream is chosen adaptively by an adversary who observes the entire internal state of the algorithm at each time step. We show that nontrivial algorithms are still possible. We first give a randomized algorithm for the $L_1$-heavy hitters problem that outperforms the optimal deterministic Misra-Gries algorithm on long streams. If the white-box adversary is computationally bounded, we use cryptographic techniques to reduce the memory of our $L_1$-heavy hitters algorithm even further and to design a number of additional algorithms for graph, string, and linear algebra problems. The existence of such algorithms is surprising, as the streaming algorithm does not even have a secret key in this model, i.e., its state is entirely known to the adversary. One algorithm we design is for estimating the number of distinct elements in a stream with insertions and deletions achieving a multiplicative approximation and sublinear space; such an algorithm is impossible for deterministic algorithms. 

We also give a general technique that translates any two-player {\it deterministic} communication lower bound to a lower bound for {\it randomized} algorithms robust to a white-box adversary. In particular, our results show that for all $p\ge 0$, there exists a constant $C_p>1$ such that any $C_p$-approximation algorithm for $F_p$ moment estimation in insertion-only streams with a white-box adversary requires $\Omega(n)$ space for a universe of size $n$. Similarly, there is a constant $C>1$ such that any $C$-approximation algorithm in an insertion-only stream for matrix rank requires $\Omega(n)$ space with a white-box adversary. These results do not contradict our upper bounds since {\it they assume the adversary has unbounded computational power}. Our algorithmic results based on cryptography thus show a separation between computationally bounded and unbounded adversaries. 

Finally, we prove a lower bound of $\Omega(\log n)$ bits for the fundamental problem of deterministic approximate counting in a stream of 0's and 1's, which holds even if we know how many total stream updates we have seen so far at each point in the stream. Such a lower bound for approximate counting with additional information was previously unknown, and in our context, it shows a separation between multiplayer deterministic maximum communication and the white-box space complexity of a streaming algorithm. 
\end{abstract}

\section{Introduction}\seclab{introduction}
In the streaming model of computation, one wants to compute or approximate a predetermined function on a dataset. The dataset is implicitly defined through a sequence of updates, and the goal is to use total space that is sublinear in the size of the dataset. 
The streaming model captures key resource requirements of algorithms for many database and network tasks where the size of the data is significantly larger than the available storage, such as logs for network traffic, IoT sensors, financial markets, commercial transactions, and scientific data, e.g., astronomy or bioinformatics. 

In the classical \emph{oblivious} streaming model, there exists a stream $S$ of updates $u_1,\ldots,u_m$ that defines an underlying dataset, such as a frequency vector, a graph, or a set of points in Euclidean space. 
The sequence of updates may be worst-case, but the dataset is fixed in advance and is oblivious to any algorithmic design choices. 
Although there are examples of fundamental streaming algorithms that are deterministic, many streaming algorithms crucially utilize randomness to achieve meaningful guarantees in sublinear space. 
For example, the famous AMS sketch \cite{AlonMS99} for $F_2$ estimation initializes a random sign vector $Z$, maintains $\langle Z, f \rangle$ in the stream, and outputs $\langle Z,f\rangle^2$ which is an unbiased estimator to $\|f\|_2^2$, where $f$ is the underlying frequency vector defined by the stream. 
However, the analysis demands that the randomness used to generate the sign vector $Z$ is independent of the frequency vector $f$ and in general, the analysis of many randomized algorithms assumes that the randomness of the algorithm is independent of the input. 
However, such an assumption may not be reasonable~\cite{MironovNS11,GilbertHSWW12,BogunovicMSC17,NaorY19,AvdiukhinMYZ19,CherapanamjeriN20}; even if the stream is not adversarially generated, a user may need to repeatedly query and update a database based on the responses to previous queries. 
For example in typical optimization procedures such as stochastic gradient descent, each time step can update the eventual output by an amount based on a previous query. 
In recommendation systems, a user may choose to remove some suggestions based on personal preference and then query for a new list of recommendations.

\paragraph{(Black-box) adversarial streaming model.} 
Recently, a large body of research has been devoted to studying the (black-box) adversarial streaming model as a means of modeling adversarial data streams. In the (black-box) adversarial streaming model~\cite{Ben-EliezerY20,Ben-EliezerJWY21,HassidimKMMS20,WoodruffZ21,AlonBDMNY21,KaplanMNS21,BravermanHMSSZ21,MenuhinN21,AttiasCSS21,Ben-EliezerEO22,ChakrabartiGS22}, the sequence of updates $u_1,\ldots,u_m$ is chosen adaptively rather than being fixed. 
In particular, the input is chosen by an adversary who repeatedly queries the streaming algorithm for a fixed property of the underlying dataset at each time $t\in[m]$ and determines the update $u_{t+1}$ only after seeing the output of the algorithm after time $t$. The streaming algorithm must still be correct at all times. 
In the black-box adversarial streaming model, \cite{Ben-EliezerY20,AlonBDMNY21} show that Bernoulli sampling and reservoir sampling can approximately preserve statistics such as densities of certain subsets of the universe and \cite{BravermanHMSSZ21} shows that importance sampling can use independent public randomness to approximately solve problems such as $k$-means centering, linear regression, and graph sparsification. 

However, for other important problems such as $F_p$ moment estimation, matrix rank, or estimating the number of distinct elements in the stream, \cite{Ben-EliezerJWY21,HassidimKMMS20,WoodruffZ21,AttiasCSS21,Ben-EliezerEO22,ChakrabartiGS22} crucially use the fact that the adversary who chooses the input can only see the output of the algorithm. These algorithms essentially work by arguing that it is possible to have the output of the algorithm change only a small number of times, and so only a small amount of internal randomness is revealed, which allows such algorithms to still be correct.
However, these algorithms completely fail if the internal state of the algorithm at each point in time is also revealed to an adversary.

\paragraph{White-box adversarial streaming model.} 
In this paper, we introduce the \emph{white-box} adversarial streaming model, where the sequence of updates $u_1,\ldots,u_m$ is chosen adaptively by an adversary who sees the full internal state of the algorithm at all times, including the parameters and the previous randomness used by the algorithm. 
More formally, we define the white-box adversarial model as a two-player game between $\streamalg$, the streaming algorithm, and $\adversary$. 
Prior to the beginning of the game, a query $\calQ$ is fixed, which asks for a fixed function of some underlying dataset implicitly defined by the stream. 
The game then proceeds across $m$ rounds, where in the $t$-th round:
\begin{enumerate}
\item
$\adversary$ computes an update $u_t$ for the stream, which depends on all previous stream updates, all previous internal states of $\streamalg$, and all previous randomness used by $\streamalg$ (and thus also, all previous outputs of $\streamalg$).  
\item
$\streamalg$ uses $u_t$ to update its data structures $D_t$, acquires a fresh batch $R_t$ of random bits, and outputs a response $A_t$ to the query $\calQ$.
\item
$\adversary$ observes the response $A_t$, the internal state $D_t$ of $\streamalg$, and the random bits $R_t$.
\end{enumerate}
The goal of $\adversary$ is to make $\streamalg$ output an incorrect response $A_t$ to the query $\calQ$ at some time $t\in[m]$ throughout the stream. 
By nature of the game, only a single pass over the stream is permitted. 

\paragraph{Applications of white-box adversaries.} 
The white-box adversarial model captures the ability of an adversary to adapt to internal processes of an algorithm, which the black-box adversarial model is incapable of capturing. This property allows us to model richer adversarial scenarios. 
For example in the area of dynamic algorithms, the goal is to maintain a data structure that always outputs a correct answer at all times $t\in[m]$ across updates $u_1,\ldots,u_m$ that arrive sequentially, while minimizing either the overall running time or the worst-case update time. 
In some settings, the dynamic model also places a premium on space so that algorithms must use space sublinear in the size of the input, but generally this may not be required. 
The dynamic model often considers an adaptive adversary~\cite{Chan10,Wajc20,ChanH21,RoghaniSW22} that generates the updates $u_1,\ldots,u_m$ upon seeing the entire data structure maintained by the algorithm after the previous update, i.e., a white-box adversary. 

In general, the algorithm's internal state can be used as part of a procedure that will ultimately generate future inputs. 
For example, consider a distributed streaming setting where a centralized server wants to collect statistics on a database generated by a number of remote users. 
The centralized server wants to minimize its space usage and therefore does not want to store each update by the remote users. 
Moreover, the server may want to limit communication over the network and thus it sends components of its internal state $S$ (such as initialized random variables) to the remote users in order to optimize the information sent from the remote users back to the centralized server.  
The remote users may use $S$ in some process that ultimately affects how the data downstream is generated. 
Thus in this case, the future input data depends on (components) of the internal state $S$ of the streaming algorithm of the central coordinator; this scenario is captured by the white-box adversarial model. 
Furthermore, one of the remote users could be malicious and would like to use the state $S$ to cause the central coordinator to fail. 
In this case, the data is not only dependent on $S$ but also adversarially generated; this scenario is also captured by the white-box adversarial model. 
Finally, in the case that there is no central coordinator, the entire internal state may be stored on a cloud, which would be visible to all users. 

The pan-private streaming model~\cite{DworkNPRY10} lets the internal state of the algorithm be partially or completely revealed. 
This model is often motivated by distributed users such as hospitals, government agencies, or search engine providers. 
\cite{DworkNPRY10} notes that any data curator ``can be pressured to permit data to be used for purposes other than that for which they are collected'', including uses that may ultimately affect the distribution of future input data to the curator. 
In fact, \cite{MirMNW11} specifically considers the problem of counting distinct elements and detecting heavy-hitters on a data stream when the internal state of the algorithm is revealed, giving the motivating example of an insider manipulating traffic flow while trying to find flaws in a systems administration database that tracks user visit statistics. 
It could be argued that although the goal of \cite{MirMNW11} is just to preserve the privacy of the users, they should also consider the white-box adversarial model where the future inputs depend on previous information rather than their assumption that the input is independent of the internal information.

In machine learning, robust algorithms and adversarial attacks have captured the attention of recent research. 
In 2017, Google Brain organized a competition at NeurIPS 2017 for producing effective adversarial attacks, in which many of the most successful attacks used knowledge of the internal algorithmic parameters and training weights to minimize some loss function in a small neighborhood around the original input~\cite{BiggioCMNSLGR13,SzegedyZSBEGF14,GoodfellowSS14}. 
More recently, white-box attacks have generated adversarial inputs by modifying existing data in such a minor way that is almost imperceptible to the human eye, either in images~\cite{ SzegedyZSBEGF14,HuangPGDA17} or in the physical world~\cite{KurakinGB16,SharifBBR16,AthalyeEIK18}. 
However, the modified data results in an incorrect classification by a machine learning algorithm. 
As a result, a large body of recent literature has focused on adversarial robustness of machine learning models against white-box attacks, e.g.,~\cite{IlyasEM18,MadryMSTV18,SchmidtSTTM18,TramerKPGBM18,CubukZMVL18,KurakinGB17,LiuCLS17}.

In persistent data structures, the goal is to provide version control to an evolving data structure while minimizing either the space or time to view each version, e.g.,~\cite{DriscollSST89,FiatK03,Kaplan04}.
For example, the ability to quickly access previous versions of information stored in an online repository shared across multiple collaborators is an invaluable tool that many services already provide. 
Moreover, the internal persistent data structures used to provide version control may be accessible and thus visible to all users of the repository. 
These users may then update the persistent data structure in a manner that is not independent of the previous states.

\subsection{Our Contributions}
In this paper, we study the abilities and limitations of streaming algorithms robust to white-box adversaries, which are significantly more powerful than black-box adversaries. 
An insightful example of this is the work \cite{HardtW13}, which develops a sophisticated attack for a black-box adversary to iteratively learn the matrix used for a linear sketch in the black-box adversarial model. On the other hand, the white-box adversary immediately sees the sketching matrix when the algorithm is initiated. 
More generally, techniques such as differential privacy, which are widely employed in black-box adversarial settings to hide internal information, will not work against white-box adversaries. 

The main contributions of this paper can be summarized as introducing general tools to design algorithms robust to white-box adversaries as well as presenting a framework for proving strong lower bounds in the white-box adversarial model. In more detail, our contributions are the following:
\begin{itemize}
    \item \textbf{White-box adversarial streaming model}: We introduce and formalize the white-box adversarial model for data streams as a means of modeling richer adversarial settings found in a wide variety of application areas which are not captured by the black-box adversarial model.
    
    \item \textbf{Robust algorithms and diverse applications}: We provide streaming algorithms robust against white-box adversaries for problems across many different domains, e.g., statistical problems such as heavy-hitters, graph algorithms, applications in numerical linear algebra, and string algorithms, with wide-ranging applications. For example, estimating  $F_p$  moments has  applications  in  databases,  computer  networks,  data  mining,  and  other  contexts such as in determining data skewness, which is important in parallel database applications \cite{DeWittNSS92} or determining the output of self-joins in databases \cite{Good1989C332SI}. $L_0$ estimation is used by query  optimizers to find the number  of unique values of some attribute without having to perform an expensive sort.  This statistic is further useful for selecting a minimum-cost query plan \cite{SelingerACLP79}, database design \cite{Lightstone18}, OLAP \cite{PadmanabhanBMCH03, ShuklaDNR96}, data integration \cite{BrownHMPRS05, DasuJMS02}, data warehousing \cite{AcharyaGPR99}, and packet tracing and database auditing \cite{CormodeDIM03}. 
    %See the papers citing \cite{KaneNW10} for further applications. 
    For more details, see \secref{sec:intro:robust}.

    \item \textbf{Use of cryptography in robust algorithms}: A key toolkit we widely employ to design our robust algorithms comes from cryptography. Leveraging computational assumptions commonly used for the design of cryptographic protocols allows us to use powerful algorithmic tools such as collision resistant hash functions and sketching matrices for which it is computationally hard to find a ``short'' vector in their kernel.
    %, which aid in the design of robust algorithms for heavy-hitters (see \secref{sec:ub:stat}) as well as linear algebra applications (see Section \secref{sec:ub:nla). 
    We believe our work opens up the possibility of using cryptography much more broadly for streaming algorithms, beyond the white-box adversarial setting.

    \item \textbf{A general lower bound framework}: We give a general reduction from two-player communication problems to space lower bounds in the white-box adversarial model. Corollaries of our reduction include lower bounds for $F_p$ moment estimation in data streams. For more details, see \secref{sec:lb:fp}.
    
    \item \textbf{Lower bounds for deterministic counting}:
    Lastly, we provide a space lower bound for deterministic algorithms which count the number of ones in a binary stream in the oblivious model, even if the algorithm has access to a timer that reports how many total stream updates it has seen so far. This lower bound serves two purposes. First, it shows that our general lower bound framework of \secref{sec:lb:fp} does not extend to multiparty (greater than two) communication protocols. Second, it provides strong lower bounds for the fundamental problem of approximately counting in a stream, which has wide applications (see \secref{sec:counting_lb}).
\end{itemize}

\subsubsection{Robust Algorithms}
\seclab{sec:intro:robust}
Any deterministic algorithm is naturally robust in the white-box adversarial streaming model, but deterministic algorithms are often inefficient in a stream. 
We first show that the $\eps$-$L_1$-heavy hitters problem can be solved using space strictly less than that of any deterministic algorithm in the white-box adversarial streaming model. 
In this problem, the goal is to output all indices $i$ such that $f_i>\eps\,L_1$, where $L_1=\|f\|_1$ is the $\ell_1$ norm of the underlying freqeuncy vector defined by the stream. 

\begin{restatable}{theorem}{thmlonehh}
\thmlab{thm:lone:hh}
There exists a white-box adversarially robust algorithm that reports all $\eps$-$L_1$-heavy hitters with probability at least $3/4$ and uses space $\O{\frac{1}{\eps}\left(\log n+\log\frac{1}{\eps}\right)+\log\log m}$. 
\end{restatable}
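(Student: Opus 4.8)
The plan is to start from the deterministic Misra--Gries algorithm, whose only wasteful feature is that each of its $\O{1/\eps}$ counters can reach $m$ and thus naively needs $\Theta(\log m)$ bits. Instead I would maintain all of these counters at a common, slowly growing \emph{scale} $\sigma$, always a power of two of order $t/2^{B}$ after $t$ updates, where $B$ is a width parameter to be fixed; a counter storing the value $v$ represents the weight $v\cdot\sigma$, so each stored value is $\O{2^{B}}$ and fits in $\O{B}$ bits. To increment a counter by one \emph{true} unit when $\sigma>1$, I increment the stored value with probability $1/\sigma$; whenever $\sigma$ should double, I halve every stored value (with randomized rounding, to keep it unbiased), and the Misra--Gries ``decrement all counters'' step is carried out in these scaled units as well. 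To know when $\sigma$ should double I keep a single approximate counter of the stream length $t$ — this is the source of the additive $\O{\log\log m}$ term — and I also store $\log_2\sigma$, which is $\O{\log\log m}$ bits. The $\O{1/\eps}$ key identities cost $\O{\frac1\eps\log n}$, the stored counters cost $\O{\frac1\eps B}$, and I will argue $B=\O{\log\frac1\eps}$ suffices, giving the claimed bound.

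The correctness argument has two parts. First, for the scaled, rounded Misra--Gries, I would track the signed error $\widehat f_j - f_j$ of each tracked counter; it is a martingale, since every probabilistic increment and every halving is unbiased, with per-step jumps at most $\sigma_{\max}=\O{t/2^{B}}$ and total quadratic variation $\O{t\cdot\sigma_{\max}}=\O{t^2/2^{B}}$ over the stream. A Freedman/Azuma-type bound then gives $|\widehat f_j-f_j|\le \eps t/4$ except with probability $\exp(-\Omega(\eps^2 2^{B}))$, so choosing $B=\Theta(\log\frac1\eps)$ (up to lower-order terms absorbed by union-bounding over the $\O{1/\eps}$ keys and over the $\O{\log m}$ scale epochs, within each of which the error is an ordinary fixed-step random walk whose running maximum is controlled by a reflection argument) makes the bound hold at all times. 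Since this slack is strictly below the $\eps t$ budget of Misra--Gries, no key $j$ with $f_j>\eps t$ is ever evicted, so reporting all currently tracked keys at every time step is correct. Second, I need the approximate stream-length counter to stay within a constant factor of $t$ at \emph{all} $t\in[m]$, so that $\sigma$ stays within a constant factor of $t/2^{B}$; for this a Morris-type counter with a modest number of repetitions suffices, using that the value only matters when it crosses a power of two, i.e.\ on $\O{\log m}$ events.

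Finally, white-box robustness. The only randomness consumed in round $t$ is the fresh batch $R_t$, and the adversary commits to $u_t$ \emph{before} seeing $R_t$; the choice of $u_t$ only determines \emph{which} counter is touched, never the outcome of a rounding coin, so conditioned on the past the update is performed exactly as in the oblivious analysis, and the martingale and quadratic-variation computations above are untouched by adaptivity. The stream-length counter is even easier: it is incremented on \emph{every} update regardless of $u_t$, so it is literally an ordinary oblivious Morris counter and is trivially robust; and the Misra--Gries bookkeeping and the stored key identities are deterministic, hence robust by definition.

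I expect the main obstacle to be the error analysis of the scaled structure: making the quadratic-variation bound trade off against the counter width so that $B$ comes out as $\O{\log\frac1\eps}$ with only an \emph{additive} $\O{\log\log m}$ (rather than a multiplicative $\frac1\eps\cdot\log\log m$ blowup), while implementing Misra--Gries's decrement step in scaled, randomly rounded arithmetic without corrupting the martingale property, and while guaranteeing the bound at all $m$ time steps against the adaptive adversary. A secondary subtlety is engineering the approximate stream-length counter to be accurate at all times using only $\O{\log\log m}$ bits.
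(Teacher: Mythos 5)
Your proposal takes a genuinely different route from the paper. The paper reduces the stream length rather than the counter width: it runs the \emph{exact} Misra--Gries data structure on a Bernoulli‑subsampled substream of length $\poly(1/\eps,\log n,\log\log m)$, invoking the result of Ben‑Eliezer and Yogev that Bernoulli sampling at rate $\Theta\big(\tfrac{\log(n/\delta)}{\eps^2 m}\big)$ preserves $L_1$‑heavy hitters even against an adaptive adversary (and observing that since the sampler keeps no secret state, the black‑box argument carries over to the white‑box setting verbatim). It then handles the unknown $m$ by keeping a single Morris counter for the stream length and maintaining exactly two parallel sampling instances at scales that are consecutive powers of $16/\eps$, discarding the older one when the Morris counter crosses the next threshold; a heavy hitter that appears after a new instance starts can have missed at most an $\eps/16$‑fraction of the prefix and so remains $\Omega(\eps)$‑heavy in the suffix. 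You instead keep \emph{every} update but compress the Misra--Gries counters themselves into a shared Morris‑style scale $\sigma\approx t/2^B$ with probabilistic increments and unbiased halvings, and control the per‑counter error directly via a Freedman/Azuma bound over the adversary's filtration. Your quadratic‑variation accounting ($\text{Var}\approx\sum_{t'}\sigma(t')=\O{t^2/2^B}$, not $\O{t^3/2^{2B}}$) is correct since the increments are Bernoulli, and your single‑instance scale‑doubling mechanism replaces the paper's two‑instance trick. One gap you correctly flag but should not underestimate: implementing the MG ``decrement all counters'' step in scaled arithmetic without biasing the martingale is genuinely delicate, because capping a probabilistic decrement at $0$ is biased, while a deterministic decrement of one scaled unit removes $\sigma$ true units and changes the decrement‑rate accounting ($\#\text{decrements}\cdot\tfrac1\eps\le t$) that MG's correctness relies on; and the set of tracked keys in your approximate structure evolves via a different decrement schedule than exact MG, so there is no direct coupling to exact MG — you must instead argue the $\eps t$ additive guarantee from scratch, as you sketch. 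Also, solving for $B$ in $\exp(-\Omega(\eps^2 2^B))\le\delta$ with $\delta$ small enough to union bound over $\O{\log m}$ epochs gives $B=\O{\log\tfrac1\eps+\log\log\log m}$, so like the paper's own instantiation (visible in their explicit HHH bound), your approach carries a subdominant $\tfrac1\eps\log\log\log m$ term that is elided in the theorem statement. The paper's route buys a much shorter correctness proof by offloading all the probabilistic work onto one citable sampling lemma; your route buys a single evolving data structure and no external sampling lemma at the cost of a bespoke martingale and coupling argument.
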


By comparison, the well-known Misra-Gries data structure~\cite{MisraG82} is a deterministic algorithm and thus robust against white-box adversaries, but uses $\O{\frac{1}{\eps}(\log m+\log n)}$ bits of space. 
Our algorithm offers similar guarantees to Misra-Gries in the sense that it returns a list of $\O{\frac{1}{\eps}}$ items that contains all $\eps$-$L_1$-heavy hitters as well as an approximate frequency for each item in the list with additive error $\eps\,L_1$. 
Note that in the $\eps$-$L_1$-heavy hitters problem, the gap between the frequencies of items that appear in the list can be as large as $\Omega(\eps)\cdot L_1$. 
On the other hand, the $(\varphi,\eps)$-$L_1$-heavy hitter problem demands that we find all $\varphi$-$L_1$-heavy hitters, but report no items whose frequency is below $(\varphi-\eps)\,L_1$, thereby parametrizing the threshold of the ``false positives'' that are reported in the list. 
We give an algorithm with the following guarantees for the $(\varphi,\eps)$-$L_1$-heavy hitter problem against white-box adversaries with total runtime $T$, i.e., $T$-time bounded adversaries: 
\begin{restatable}{theorem}{thmphiepshh}
\thmlab{thm:phi:eps:hh}
There exists an algorithm robust against white-box $T$-time bounded adversaries that solves the $(\varphi,\eps)$-$L_1$-heavy hitter problem with probability at least $3/4$ and uses total space \[\O{\frac{1}{\eps}\left(\log\log n+\log\frac{1}{\eps}\right)+\frac{1}{\varphi}\log n+\log\log m}\]
for $T<\poly\left(\log n,\frac{1}{\eps}\right)$ and total space \[\O{\frac{1}{\eps}\min(\log n,\log T)+\frac{1}{\varphi}\log n+\log\log m}\]
for $T\ge\poly\left(\log n,\frac{1}{\eps}\right)$. 
\end{restatable}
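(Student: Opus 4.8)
The plan is to reduce to the $\eps$-$L_1$-heavy hitters algorithm of \thmref{thm:lone:hh} and save the $\O{\frac1\eps\log n}$ cost of storing item identities by compressing them with a collision-resistant hash function, which is sound precisely because the adversary is $T$-time bounded. Concretely, I would run the algorithm of \thmref{thm:lone:hh} with accuracy parameter $\eps'=c\eps$ for a small absolute constant $c$, but modified so that wherever it stores an item identity $i\in[n]$ (which, as with Misra--Gries and as we may assume of the structure underlying \thmref{thm:lone:hh}, is used only as an opaque label for equality tests and for the final report) it instead stores $h(i)$, where $h\colon[n]\to\{0,1\}^\ell$ is drawn from a collision-resistant hash family with a publicly chosen seed and output length $\ell=\Theta(\log T+\log\frac1\eps+\log\frac1\delta)$ truncated to $\log n$ (if this bound is $\ge\log n$, keep identities uncompressed); all equality tests between stored identities become equality tests between hashes. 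In parallel I would maintain a set $C$ of at most $\O{1/\varphi}$ \emph{full} identities together with an $\O{\log\log m+\log\frac1\eps}$-bit approximate counter $\widehat{L_1}$ for $L_1$: when an update to item $i$ arrives, after updating the hashed structure I query it (via $h(i)$) for the current estimate $\widehat{f_i}$ and insert $i$ into $C$ if $\widehat{f_i}\ge(\varphi-\eps)\widehat{L_1}$; whenever $C$ threatens to overflow I rescan it, recompute $\widehat{f_j}$ for each $j\in C$, and evict every $j$ whose estimate has fallen below this threshold. A query is answered by outputting exactly those $i\in C$ with $\widehat{f_i}\ge(\varphi-\tfrac\eps2)\widehat{L_1}$.

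For space, the modified \thmref{thm:lone:hh} structure pays $\O{\ell}$ per slot instead of $\O{\log n}$; with $\delta$ a small constant this is $\O{\min(\log n,\;\log T+\log\frac1\eps)}$ per slot, which for $T<\poly(\log n,\frac1\eps)$ becomes $\O{\log\log n+\log\frac1\eps}$ and for $T\ge\poly(\log n,\frac1\eps)$ becomes $\O{\min(\log n,\log T)}$; summing over the $\O{1/\eps}$ slots, adding $\O{\frac1\varphi\log n}$ for $C$, the $\O{\log\log m+\log\frac1\eps}$ counter, and the (lower-order) hash seed gives the two stated bounds. For correctness I would argue three things. First, conditioned on $h$ being injective on the (at most $T$) items the adversary ever submits, the modified structure performs exactly the same sequence of operations as the unmodified structure of \thmref{thm:lone:hh} on the same stream, so at every time $t$ its estimates satisfy $|\widehat{f_i}-f_i|\le\eps' L_1$. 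Second, $C$ never overflows: after a rescan every surviving $j$ has $f_j\ge(\varphi-\eps)\widehat{L_1}-\eps' L_1\ge(\varphi-O(\eps))L_1$, so (for $\eps$ a small enough fraction of $\varphi$, which we may assume, replacing $\eps$ by $\varphi/4$ otherwise) at most $\O{1/\varphi}$ such items exist, and at most one insertion occurs before the next rescan. Third, $C$ contains every $\varphi$-heavy hitter at query time: in an insertion-only stream the ratio $f_i/L_1$ can increase only on an update to $i$ itself, so if $f_i\ge\varphi L_1$ at the end then $f_i\ge\varphi L_1$ already at the last update to $i$ — at which point $i$ was stored (being an $\eps'$-heavy hitter) with estimate $\ge(\varphi-\eps')L_1\ge(\varphi-\eps)\widehat{L_1}$ and hence inserted into $C$ — and its estimate stays above $(\varphi-\eps)\widehat{L_1}$ forever after, so it is never evicted. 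Combining these, the final test reports all $\varphi$-heavy hitters (estimate $\ge(\varphi-\eps')L_1\ge(\varphi-\tfrac\eps2)\widehat{L_1}$) and no item with $f_i<(\varphi-\eps)L_1$ (estimate $<(\varphi-\eps+\eps')L_1\le(\varphi-\tfrac\eps2)\widehat{L_1}$), as required.

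The main obstacle is the first correctness point: justifying that $h$ is injective on the submitted items \emph{even though the white-box adversary sees the seed of $h$} — this is exactly where the computational bound is essential, since a non-cryptographic hash with a public seed is trivially collidable. I would prove it by a reduction: from a $T$-time white-box adversary that makes the algorithm err too often, build an $O(T)$-time algorithm that, given a random seed for $h$, simulates the streaming algorithm against the adversary and outputs the first pair of distinct submitted items colliding under $h$; the subtlety that the adversary may produce such a collision ``unintentionally'' is irrelevant, because having run in time $\le T$ it submitted $\le T$ items and any colliding pair among them is an explicit collision the reduction simply reads off. After conditioning on the ``good'' events for \thmref{thm:lone:hh} and $\widehat{L_1}$, any excess error probability must come from a collision, so this reduction succeeds with probability $>\delta$, contradicting collision resistance once $\ell\ge c(\log T+\log\frac1\delta)$. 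Finally, taking $\delta$ and the failure probabilities of \thmref{thm:lone:hh} and of $\widehat{L_1}$ to be small constants — amplifying \thmref{thm:lone:hh} and the counter by a constant number of independent repetitions if necessary — a union bound yields overall success probability at least $3/4$.
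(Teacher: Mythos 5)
Your proposal is correct and follows essentially the same route as the paper: replace the $\log n$-bit item identities in the Bernoulli-sample-plus-Misra-Gries structure of \thmref{thm:lone:hh} with $\O{\log T}$-bit CRHF digests (sound because the adversary is $T$-bounded), and separately keep an $\O{1/\varphi}$-slot table of full identities for the items that will actually be reported, paying $\O{\frac{1}{\varphi}\log n}$. The paper's writeup is far terser --- it simply states that the theorem ``follows from applying \thmref{thm:crhf} to the sampled items,'' leaning on the cited technique of \cite{BhattacharyyaDW19} to justify the $\O{1/\varphi}$-identity table and the eviction bookkeeping --- whereas you have reconstructed those details explicitly, including the set $C$ with periodic rescans, the case split on $T$ vs.\ $\poly(\log n, 1/\eps)$ for the per-slot cost, and, most importantly, the reduction showing that any adversary exceeding the allowed error probability yields an explicit collision pair extractable in time $\O{T}$ (correctly handling ``unintentional'' collisions among the $\le T$ submitted items). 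This elaboration matches the paper's intent and fills genuine gaps it leaves to the reader.
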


We remark that the proof of \thmref{thm:phi:eps:hh} uses collision-resistant hash functions and hence only guarantees robustness against white-box adversaries with polynomially bounded computation time.  
\thmref{thm:phi:eps:hh} is not information-theoretically secure against white-box adversaries with unbounded computation time.

We obtain qualitatively similar results to that of \thmref{thm:lone:hh} for the \emph{Hierarchical Heavy Hitters} problem, which generalizes $L_1$-heavy-hitters. We also obtain similar results for the vertex neighborhood identification problem, where the task is for an algorithm to identify all vertices of a graph with identical neighborhoods, in the vertex arrival model, where each update of the stream is a vertex of the graph along with a list of all of its neighbors. 

\begin{restatable}{theorem}{thmvertexid}
\thmlab{thm:vertex:id}
There exists an algorithm robust against white-box polynomial-time adversaries that reports all vertices with identical neighborhoods with probability at least $3/4$, using space $\O{n\log n}$. 
\end{restatable}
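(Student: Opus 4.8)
The plan is to store, in place of the full neighborhood of each vertex, a single short fingerprint produced by a collision-resistant hash function, and to report a group of vertices as having identical neighborhoods exactly when their fingerprints coincide. The only place randomness enters is the one-time sampling of the hash function; a white-box adversary sees this function from the first round on, but since it is polynomially bounded it cannot produce two distinct neighborhoods with the same fingerprint — and that is precisely the event that a false report reduces to. (This is why the result is stated only against polynomial-time adversaries: unlike ordinary fingerprinting, plain pairwise independence is useless here because the adversary knows the hash.)

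In detail, the algorithm samples $h\gets\calH$ from a collision-resistant hash family with output length $\Theta(\log n)$ and maintains a dictionary $D$ whose keys are fingerprint values and whose entry $D[y]$ is the list of vertices whose neighborhood hashes to $y$. On an update $(v,N(v))$ it canonically encodes the neighbor list — sort, remove duplicates, concatenate, so that $\mathrm{enc}(\cdot)$ is injective as a function of the \emph{set} $N(v)$ — computes $y=h(\mathrm{enc}(N(v)))$, appends $v$ to $D[y]$, and then discards the neighbor list; on a query it outputs $D[y]$ for every $y$ with $\abs{D[y]}\ge2$. For space: the at most $n$ vertices contribute one $\Theta(\log n)$-bit fingerprint key and one $\Theta(\log n)$-bit identifier each, the description of $h$ costs $\poly(\log n)$ bits, and sorting one neighbor list is a transient $\O{n\log n}$ cost, for a total of $\O{n\log n}$ (with a more conservative hardness assumption one uses $\polylog(n)$-bit fingerprints, inflating the space by a $\polylog(n)$ factor).

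For correctness, completeness is immediate: if $N(u)=N(v)$ then $\mathrm{enc}(N(u))=\mathrm{enc}(N(v))$, so $u$ and $v$ land in the same bucket and are reported together at every time after both have arrived. For soundness, suppose the algorithm ever reports two vertices $u,v$ with $N(u)\ne N(v)$; then $\mathrm{enc}(N(u))\ne\mathrm{enc}(N(v))$ yet $h(\mathrm{enc}(N(u)))=h(\mathrm{enc}(N(v)))$, i.e., a collision for $h$ has been exhibited. Since $\streamalg$ is deterministic once $h$ is fixed (it ignores the per-round random bits the model hands it), the adversary's whole interaction, and in particular the two offending neighbor lists, can be computed in time $\poly(n)$ from $h$ alone; a white-box adversary that causes a false report with non-negligible probability therefore yields a $\poly(n)$-time collision finder for $\calH$, contradicting collision resistance. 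A union bound over the $m\le n$ steps keeps the failure probability negligible, hence below $1/4$.

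The step I expect to be the main obstacle is making this reduction airtight in the adaptive, interactive setting: one has to argue that the adversary's view at every round — all of $\streamalg$'s intermediate states, outputs, and fresh random bits — is efficiently simulable given only the sampled $h$, so that ``the adversary forces a wrong answer at \emph{some} time $t$'' is a single negligible-probability event rather than merely a statement about one non-adaptive query. The remaining points are routine but worth care: fixing the canonical set-encoding so that completeness holds with no assumption on the order in which the adversary lists neighbors, keeping the transient sorting space inside the $\O{n\log n}$ budget, and stating the exact (sub-exponential) hardness assumption on $\calH$ needed to support $\Theta(\log n)$-bit outputs against arbitrary polynomial-time adversaries. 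The same scheme, applied to closed neighborhoods $N[v]$, handles the ``true twin'' variant with no change.
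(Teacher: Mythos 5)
Your approach is the same as the paper's: hash each vertex's neighborhood with a collision-resistant hash function, store only the $\O{\log n}$-bit fingerprints, and argue that a polynomial-time white-box adversary producing a false report would yield an efficient collision-finder. You are somewhat more careful than the paper's short proof — in particular about the canonical set encoding, the efficient-simulation/reduction step in the adaptive setting, and the observation that $\Theta(\log n)$-bit outputs require a strong (sub-exponential) hardness assumption, with $\polylog(n)$-bit fingerprints under a more standard assumption — but these are refinements of the same argument, not a different route.
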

We also prove a lower bound for deterministic algorithms for the vertex neighborhood identification problem, even on oblivious data streams: 
\begin{restatable}{theorem}{thmdetvidlb}
\thmlab{thm:det:vid:lb}
Any deterministic algorithm that reports all vertices with identical neighborhoods uses space $\Omega\left(\frac{n^2}{\log n}\right)$. 
\end{restatable}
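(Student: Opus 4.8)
The plan is to prove this by an encoding (equivalently, one‑way communication) argument. Since the algorithm is deterministic, its memory state after reading a prefix of the stream is a fixed function of that prefix; if the state uses $S$ bits, then a prefix carrying $N$ bits of ``recoverable'' information forces $S \ge N$. So the whole task reduces to (i) designing a family of legal oblivious vertex‑arrival streams on $n$ vertices that encodes $\Omega(n^2/\log n)$ bits, and (ii) showing those bits can be read back off the reported set of identical‑neighborhood vertices after appending a short, data‑independent suffix.

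For the construction I would partition $[n]$ into a set $L$ of $\Theta(n)$ \emph{data} vertices and a set $R$ of $\Theta(n)$ \emph{reference} vertices. The encoder picks, for each $\ell\in L$, a neighborhood drawn from a fixed admissible family $\calC$ of subsets of $R$ with $|\calC| = 2^{\Theta(n/\log n)}$; since there are $\Theta(n)$ data vertices each carrying $\log|\calC| = \Theta(n/\log n)$ bits, the total encoded information is $\Theta(n^2/\log n)$. The family $\calC$ must be chosen so that codewords are ``separable'' by reference gadgets: because two vertices are reported together exactly when their neighborhoods coincide, a reference vertex whose neighborhood realizes a codeword $c\in\calC$ is grouped with precisely the data vertices carrying $c$. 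To recover the encoding one appends, in a second phase of the stream, a small collection of such reference vertices together with an addressing gadget on $R$ that prevents spurious collisions with unrelated data vertices; the reported partition, restricted to these gadget vertices and the data vertices, then reveals which codeword each relevant $\ell\in L$ carries, hence the encoded bits. One must also verify that the entire instance (encoding phase plus recovery phase) is a single legal oblivious stream of at most $n$ vertex arrivals, so that the bound holds ``even on oblivious data streams'' with no appeal to adaptivity.

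Concretely I would carry out the steps in this order: (1) set up the prefix $\to$ state $\to$ decoder reduction and reduce the theorem to an information‑recoverability claim; (2) specify $\calC$ and the $R$‑addressing scheme, proving the rate bound (per‑vertex information $\Theta(n/\log n)$, total $\Theta(n^2/\log n)$) and the ``no spurious twins'' property; (3) exhibit the recovery suffix and prove that the reported identical‑neighborhood set determines every encoded bit regardless of the rest of the encoding; (4) combine to get $2^{S}\ge 2^{\Omega(n^2/\log n)}$, i.e.\ $S=\Omega(n^2/\log n)$.

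The hard part will be step (2)–(3): designing $\calC$ and the reference gadgets so that a short, data‑\emph{independent} recovery suffix pins down each queried codeword without the decoder knowing the rest of the encoding, while all gadget vertices fit inside the $n$‑vertex universe and do not accidentally become twins of the wrong data vertices. This tension is exactly what forces the $\log n$ loss: the reference vertices must be few enough to coexist with the $\Theta(n)$ data vertices, yet expose enough of the structure of $\calC$ to be identifiable, and balancing these constraints caps the admissible code at $|\calC|=2^{\Theta(n/\log n)}$ rather than a constant‑rate $2^{\Theta(n)}$, yielding $\Omega(n^2/\log n)$ rather than $\Omega(n^2)$.
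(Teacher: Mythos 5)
Your high-level plan (one-way communication / encoding argument with a bipartite ``data vertices vs.\ reference vertices'' construction) is in the same spirit as the paper's, but the paper takes a cleaner and black-box route: it encodes an instance of the \emph{OR Equality} problem $\oreq_{n,k}$ with $k=n/\log n$, connecting $u_i$ to $r_j$ iff $x_i[j]=1$ and $v_i$ to $r_j$ iff $y_i[j]=1$, so that $u_i$ and $v_i$ are twins iff $x_i=y_i$, and then invokes the Kushilevitz--Weinreb deterministic one-way lower bound of $\Omega(nk)$ for $k\le n/\log n$. This immediately yields $\Omega(n^2/\log n)$ with no codebook design and no recovery gadget.

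Your version has a concrete quantitative gap. You insist on a \emph{single, data-independent} recovery suffix. But the output of the problem is just the partition of the arrived vertices into equal-neighborhood classes, and each gadget vertex tests exactly one codeword $c$ by being a twin of precisely the data vertices with neighborhood $c$. Since a single suffix can contain at most $\O{n}$ gadget vertices (the universe has $n$ vertices total), the decoder can distinguish at most $\O{n}$ codewords, so each data vertex contributes only $\O{\log n}$ recoverable bits, for a total of $\O{n\log n}$ --- nowhere near the $\Omega(n^2/\log n)$ you claim, and inconsistent with your stated per-vertex rate of $\Theta(n/\log n)$. The ``balancing constraint that caps $|\calC|$ at $2^{\Theta(n/\log n)}$'' is not a feature of the problem; it is an artifact of the universal-suffix requirement, and even accepting it the arithmetic does not reach $n^2/\log n$.

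The fix is to drop the universal-suffix/decoder framing and argue injectivity directly, or reduce from a known hard communication problem as the paper does. For the injectivity route: since the algorithm is deterministic and must be correct on every prefix, if two encodings $E_1\ne E_2$ reached the same state, then for \emph{every} legal continuation the algorithm would produce identical outputs; but choosing a single gadget vertex $g$ with neighborhood $E_1(\ell)$ for any coordinate $\ell$ where $E_1(\ell)\ne E_2(\ell)$ makes $(\ell,g)$ twins under $E_1$ and non-twins under $E_2$, so the correct outputs differ and the states must differ. The distinguishing suffix is allowed to depend on $(E_1,E_2)$, which removes the counting bottleneck entirely. This is essentially what the OR Equality reduction packages up for you, and it is the step your plan is missing.
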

Together, \thmref{thm:vertex:id} and \thmref{thm:det:vid:lb} show a strong separation between deterministic algorithms and randomized algorithms robust against white-box adversaries with polynomial runtime. 
We further utilize cryptographic tools to obtain robust algorithms for other fundamental streaming problems. 
In particular, assuming the hardness of the Short Integer Solution (SIS) problem of lattice cryptography (see \defref{def:SIS} and \thmref{thm:SIS_hardness}), we can obtain a sublinear space algorithm for the $L_0$ estimation problem on \emph{turnstile streams}, where the stream updates are allowed to be positive and negative.
\begin{restatable}{theorem}{thmlzeroub}
\thmlab{thm:lzero:ub}
Let $c\in(0,1/2)$ and assume the adversary cannot solve the SIS problem of \defref{def:SIS} with parameter $\beta_{\infty} = \poly(n)$ in \thmref{thm:SIS_hardness} for sufficiently large $n$. Then \algref{alg:l0:ub} returns a $n^{\eps}$ multiplicative approximation to the $L_0$ estimation problem on turnstile streams.  
Furthermore, the algorithm uses space $\tilde{\mathcal{O}}(n^{1-\eps+c\eps} + n^{(1+c)\eps})$. In the random oracle model, the algorithm uses space $\tilde{\mathcal{O}}(n^{1-\eps+c\eps})$.
\end{restatable}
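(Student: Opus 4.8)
\medskip
\noindent\textbf{Proof plan.} The plan is to combine a purely \emph{deterministic} coarse partition of the coordinates with a single \emph{cryptographic} object --- a shared $\mathsf{SIS}$ matrix used as a compact ``is this subvector zero?'' test that a computationally bounded white-box adversary cannot fool. Partition $[n]$ into $n^{1-\eps}$ contiguous blocks $B_1,\dots,B_{n^{1-\eps}}$, each of size $n^{\eps}$, and sample one matrix $A\in\mathbb{Z}_q^{r\times n^{\eps}}$ from the distribution of \defref{def:SIS}, where $q$ and $r$ are the parameters from \thmref{thm:SIS_hardness} for bound $\beta_{\infty}=\poly(n)$, chosen as functions of $c\in(0,1/2)$ and $\eps$ so that $r\log q=\tilde{\mathcal{O}}(n^{c\eps})$; the parameter $c$ governs the trade-off between the space and the strength of the hardness invoked. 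For every block $B_k$ the algorithm (\algref{alg:l0:ub}) maintains the fingerprint $y_k\equiv Af_{B_k}\pmod q\in\mathbb{Z}_q^{r}$; each stream update changes one coordinate, hence one block, and is folded into the corresponding $y_k$ in $\mathcal{O}(r)$ time. On a query it outputs $\hat L_0:=\#\{k: y_k\neq 0\}$ (optionally rescaled by $n^{\eps/2}$ to make the estimate two-sided).

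The space bound is then immediate: the $n^{1-\eps}$ fingerprints occupy $n^{1-\eps}\cdot r\log q=\tilde{\mathcal{O}}(n^{1-\eps+c\eps})$ bits, and storing the shared matrix $A$ costs $r\cdot n^{\eps}\cdot\log q=\tilde{\mathcal{O}}(n^{(1+c)\eps})$ bits. In the random oracle model the entries of $A$ are produced on demand from the oracle and never stored, which drops the second term and leaves $\tilde{\mathcal{O}}(n^{1-\eps+c\eps})$.

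For correctness I would first record the deterministic ``clean-run'' estimate: if at a given query time every block satisfies $y_k=0\iff f_{B_k}=0$, then $\hat L_0=\#\{k: f_{B_k}\neq 0\}$, and since each block has exactly $n^{\eps}$ coordinates this gives $\hat L_0\le L_0\le n^{\eps}\hat L_0$ (with both sides $0$ simultaneously) --- an $n^{\eps}$-multiplicative approximation. This uses the standard convention that the turnstile stream has length $\poly(n)$, so that $\|f\|_{\infty}\le\beta_{\infty}$ at all times. Hence the only way the algorithm can err at some time $t\in[m]$ is that some block has $f_{B_k}\neq 0$ yet $Af_{B_k}\equiv 0\pmod q$, and because $\|f_{B_k}\|_{\infty}\le\beta_{\infty}$ such an $f_{B_k}$ is precisely a nonzero short solution of the $\mathsf{SIS}$ instance $A$. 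The key point is that the white-box model's extra power is illusory here: each $y_k$ is a deterministic function of $A$ and the adversary's own past updates, so the entire internal state is computable by the adversary from the public matrix $A$ alone, and the only available attack is to compute, as a function of $A$, a short nonzero kernel vector. I would formalize this by a reduction: any white-box adversary as in the hypothesis of the theorem that forces an error with probability $\delta$ yields an $\mathsf{SIS}$ solver for $A$ --- on input $A$, internally simulate the white-box game against the adversary using this $A$ (a \emph{perfect} simulation, precisely because nothing is secret), while also maintaining the full vector $f$ (the simulator is not space-bounded), and output $f_{B_k}$ at the first moment some block has $f_{B_k}\neq 0$ but $y_k=0$. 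This solver runs in time $\poly(n)$ plus the adversary's running time and succeeds whenever the adversary causes an error, so by \thmref{thm:SIS_hardness} the quantity $\delta$ is negligible; in particular $\delta<1/4$ for all sufficiently large $n$, and a union bound over all $t\in[m]$ leaves this unchanged.

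The main --- and essentially only non-routine --- obstacle is the correctness argument: convincingly reducing ``an adaptive, fully state-aware adversary defeats the estimator'' to ``that adversary solves $\mathsf{SIS}$ on the fixed public matrix $A$'', which goes through precisely because there is no secret around which to simulate. Everything else is routine: the deterministic partition needs neither storage nor analysis, the fingerprints update trivially, the space accounting is as above, and the one-sided $n^{\eps}$ guarantee follows from the block size. The only remaining bookkeeping is choosing $q$ and $r$ as functions of $c$ and $\eps$ so that $r\log q=\tilde{\Theta}(n^{c\eps})$ while keeping $A$ in the hard regime of \thmref{thm:SIS_hardness}, and, if a symmetric estimate is wanted, rescaling $\hat L_0$ by $n^{\eps/2}$ and adjusting $\eps$ by a constant factor.
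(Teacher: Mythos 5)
Your proposal is correct and follows essentially the same approach as the paper: same block decomposition of $[n]$ into $n^{1-\eps}$ chunks of size $n^{\eps}$, same per-block SIS fingerprint $Af_{B_k}\bmod q$, same count-nonzero-fingerprints estimator, and the same space accounting with the random-oracle savings for $A$. The only difference is that you spell out the reduction from an adaptive white-box adversary to an SIS solver more explicitly than the paper's terse ``the adversary cannot find a short nonzero kernel vector'' statement, which is a welcome but not substantively different elaboration.
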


Note that the above result also guarantees robustness against a computationally bounded adversary, formalized in \assumref{assumption:lattice}. With no such assumptions, we can provably show that approximating general $F_p$ moments up to constant factors in data streams is impossible in sublinear space  (see \thmref{thm:fp:lb} below). 
Using the same cryptographic assumption, we also obtain streaming algorithms for the rank-decision problem in linear algebra.

\begin{restatable}{theorem}{thmrankapprox}
\thmlab{thm:rank_approx}
Suppose $A$ is a $n \times n$ matrix. There exists a constant $0 < c < 1$ such that the rank decision problem for $A$ can be solved in $\tO{nk^2}$ bits of space for any $k \le n^c$ under the random oracle model assuming a computationally bounded adversary. 
\end{restatable}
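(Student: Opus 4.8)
The plan is to adapt the classical random-sketch rank estimator to the white-box model, but to draw the sketching matrix so that short vectors in its kernel are \emph{computationally} hard to find, and then to show that any polynomial-time white-box adversary that defeats the sketch can be converted into a solver for \mdef{\mathsf{SIS}}. Concretely: fix a modulus $q$ with $\log q = \tilde\Theta(k)$, chosen large enough that every $k\times k$ minor of every integer matrix with $\poly(n)$-bounded entries — in particular of every matrix realizable by a $\poly(n)$-length turnstile stream with bounded updates — has absolute value less than $q$. This forces $\mathrm{rank}_{\mathbb Z_q}(A)=\mathrm{rank}_{\mathbb Q}(A)$ \emph{deterministically} over all matrices that can occur, so the adversary gains nothing from knowing $q$; a small modulus would instead be fatal (e.g. the adversary could scale $A$ by $q$ to collapse the reported rank). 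Using the random oracle, sample $S\in\mathbb Z_q^{k\times n}$, which costs only the space of a short seed, and maintain $B:=SA\bmod q$: each entry-update to $A$ adds a scalar multiple of one column of $S$ to one column of $B$, i.e. $O(1)$ ring operations. On a query, output ``$\mathrm{rank}(A)\ge k$'' iff $\mathrm{rank}_{\mathbb Z_q}(B)=k$, computed by Gaussian elimination; the equality version $\mathrm{rank}(A)=k$ follows by running the threshold test at $k$ and $k+1$. The space is dominated by $B$: $kn$ entries of $\tilde\Theta(k)$ bits, i.e. $\tO{nk^2}$, and taking $c$ a sufficiently small constant keeps this sublinear in $n^2$ and, more importantly, keeps $n=\tilde\Omega(k\log q)$ large enough for \thmref{thm:SIS_hardness} to apply.

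Correctness splits into two directions. Soundness is unconditional: $\mathrm{rank}_{\mathbb Z_q}(SA)\le \mathrm{rank}_{\mathbb Z_q}(A)=\mathrm{rank}_{\mathbb Q}(A)$, so whenever the algorithm answers ``$\ge k$'' it is right. For completeness I must show that if $\mathrm{rank}_{\mathbb Q}(A)\ge k$ then $\mathrm{rank}_{\mathbb Z_q}(SA)=k$ except with negligible probability over $S$, even when $A$ is chosen after $S$ is revealed. For an oblivious $A$ this is the usual Schwartz--Zippel argument (a uniform $S$ preserves rank up to $k$ with probability $\ge 1-k/q$). Against a white-box adversary I reduce to \mdef{\mathsf{SIS}}: if a $T$-time adversary produces, at some step $t$, a matrix $A$ with $\mathrm{rank}_{\mathbb Q}(A)\ge k$ but $\mathrm{rank}_{\mathbb Z_q}(SA)<k$, then since $\mathrm{rank}_{\mathbb Z_q}(A)\ge k$ some $k$ columns $a_1,\dots,a_k$ of $A$ are $\mathbb Z_q$-independent while $Sa_1,\dots,Sa_k$ are $\mathbb Z_q$-dependent, and any dependency $\sum_\ell d_\ell Sa_\ell\equiv 0$ yields a nonzero $v=\sum_\ell d_\ell a_\ell\in\ker_{\mathbb Z_q}(S)$. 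One argues $v$ can be taken $\poly(n)$-bounded — exploiting that the $a_\ell$ are themselves $\poly(n)$-bounded and choosing the dependency to be minimal in support (and then in coefficient size) — so that $v$ is a legal solution to the \mdef{\mathsf{SIS}} instance with parameter $\beta_\infty=\poly(n)$ of \thmref{thm:SIS_hardness}. Since extracting $v$ from $(S,A)$ is linear algebra, a $T$-time adversary succeeding with non-negligible probability contradicts \mdef{\mathsf{SIS}} hardness (see \defref{def:SIS}); a union bound over the $\poly(n)$ stream steps pushes the failure probability below $1/4$.

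The main obstacle is precisely the highlighted step: bounding the norm of the extracted kernel vector. The generic dependency $v=\sum_\ell d_\ell a_\ell$ has coefficients $d_\ell\in\mathbb Z_q$ that may be as large as $q$, making $v$ worthless as a \mdef{\mathsf{SIS}} witness, which must be short \emph{relative to} $q$. The technical heart of the proof is thus to show that whenever the sketch of a $\poly(n)$-bounded matrix drops rank, the drop is already witnessed by a $\poly(n)$-bounded kernel vector — e.g. by passing to a minimal-support dependency and bounding its coefficients by ratios of small minors, or by a Minkowski-type argument on the lattice of integer dependencies among the bounded columns, possibly aided by giving $S$ slightly more than $k$ rows. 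A secondary subtlety is verifying that the random-oracle-generated $S$ may be treated as a fresh uniform matrix over $\mathbb Z_q$ in the reduction even though the adversary also observes $B=SA$ at every step; everything else — the space accounting, the $O(1)$-time maintenance, soundness, and the union bound — is routine.
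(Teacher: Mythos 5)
Your overall architecture matches the paper's: a random-oracle-generated sketch $S\in\mathbb Z_q^{k\times n}$ whose short kernel vectors are computationally hard to find, $\log q=\tilde\Theta(k)$, maintenance of $SA\bmod q$ in $\tO{nk^2}$ bits, and a reduction from an adversarially-induced failure to $\mathsf{SIS}$ with $\poly(n)$-bounded witnesses. Your soundness observation ($\mathrm{rank}_{\mathbb Z_q}(SA)\le\mathrm{rank}_{\mathbb Z_q}(A)=\mathrm{rank}_{\mathbb Q}(A)$ once $q$ dominates every $k\times k$ minor) is also correct. The divergence, and the genuine gap, is in the query rule. You propose to compute $\mathrm{rank}_{\mathbb Z_q}(SA)$ by Gaussian elimination and, in the completeness direction, to extract a $\mathsf{SIS}$ witness from a $\mathbb Z_q$-rank drop. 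As you yourself flag, this extraction is unresolved: a dependency $\sum_\ell d_\ell Sa_\ell\equiv 0\pmod q$ has coefficients $d_\ell$ as large as $q\approx n^{\Theta(k\log n)}$, so $v=\sum_\ell d_\ell a_\ell$ is nowhere near $\poly(n)$-bounded. Neither fallback you sketch closes this: minimal-support/Cramer bounds control integer dependencies over $\mathbb Q$ among columns of $A$, not $\mathbb Z_q$-dependencies among columns of $SA$; and a Minkowski bound would merely assert that $S$ \emph{possesses} a short kernel vector, which is precisely what $\mathsf{SIS}$ hardness says you cannot exhibit constructively. So the ``technical heart'' you identify really is a hole, not a routine step.

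The paper avoids the issue by never computing $\mathrm{rank}_{\mathbb Z_q}(HA)$ at all. It takes $q\ge n^{k\log n}$ and, after the stream ends, \emph{enumerates} every nonzero integer $x$ with $\|x\|_\infty\le n^k$ and tests $HAx\equiv 0\bmod q$, answering $\mathrm{rank}<k$ iff some such $x$ passes. Completeness is pure arithmetic: if $\mathrm{rank}(A)<k$, Cramer's rule over $\mathbb Z$ gives an integer dependency among $k$ columns with entries at most $\poly(n)^k<q$, so this $x$ lies in the enumerated range and is nonzero mod $q$. Soundness is where $\mathsf{SIS}$ enters: any $x$ in that range which passes yields $y=Ax$ with $\|y\|_\infty\le\poly(n)^k\ll q$, so $y$ is already short and, when nonzero, is itself a legal $\mathsf{SIS}$ witness for $H$ — no extraction from a $\mathbb Z_q$-dependency is ever attempted, and the enormous $q$ guarantees small integers never wrap. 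The price is a post-stream step running in time $n^{O(k)}$, which the theorem does not charge for since it only bounds space. To rescue your Gaussian-elimination variant you would need a lemma that a $\mathbb Z_q$-rank drop of $SA$ for $\poly(n)$-bounded $A$ always certifies a $\poly(n)$-bounded integer vector in $\ker_{\mathbb Z_q}(S)$; this does not appear to be true in general, and the paper's enumeration over small $x$ is precisely the device that replaces such a lemma.
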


Corollaries of this result include streaming algorithms for other linear algebra based applications such as computing a linearly independent basis. Lastly, we also obtain white-box robust algorithms for string pattern matching applications.

% and linear matroid problems (\corref{cor:matroid}). 

\begin{restatable}{theorem}{thmpattern}
For an input string $P$ with given period $p$, followed by a string $U$, there exists a streaming algorithm that, with probability at least $1-\frac{1}{\poly(n)}$, finds all instances of $P$ within $U$. This streaming algorithm is robust against $T$-time white-box adversaries and uses $\O{\log T}$ bits of space. 
\end{restatable}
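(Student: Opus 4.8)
Our starting point is the classical streaming exact-matching framework built on Karp--Rabin fingerprints, specialized to a \emph{periodic} pattern, which is in fact the easy case. When $P$ has period $p$ it is completely determined by its length $m=|P|$ together with its generator $Q=P[1..p]$; and, by the periodicity lemma, within each maximal period-$p$ block of $U$ the occurrences of $P$ form a single arithmetic progression with common difference $p$, while occurrences in distinct blocks are separated by a period break. Hence the entire ``state of the search'' at any point in the text can be encoded by $\O{1}$ counters and fingerprints: the current text position, the endpoints and generator-fingerprint of the maximal period-$p$ block of $U$ currently being extended, and the $\O{1}$-word descriptor (first term, common difference $p$, number of terms) of the arithmetic progression of occurrences whose right endpoints have not yet been reported. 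The only ingredient that fails in the white-box model is the fingerprint itself: a Karp--Rabin fingerprint relies on a secret random evaluation point, which a white-box adversary reads off the state and then uses to plant collisions. We repair this by replacing it with an incremental \emph{collision-resistant} hash $h$ that is secure against $T$-time adversaries (for instance the SIS-based family underlying \thmref{thm:SIS_hardness}); collision resistance holds even when the full description of $h$ is public, which is precisely the guarantee we need.

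\paragraph{The algorithm.}
In the first phase the algorithm reads $P$ and, exploiting the incrementality of $h$, stores $h(P)$, the generator fingerprint $h(Q)$, the lengths $m,p$, and the $\O{1}$ auxiliary hash constants (appropriate powers of the hash base) that will later let it derive in $\O{1}$ space the fingerprint of any substring, of any power $Q^k$, and of any cyclic rotation of $Q$; one fingerprint comparison verifies (or simply assumes) that $P$ has period $p$. In the second phase it reads $U$ one symbol at a time, maintaining a running fingerprint of the current period-$p$ block $U[s..t]$ alongside the fingerprint this block \emph{would} have if it were a perfect period-$p$ extension of its generator. While these agree the block extends for free; when they first disagree, at some position $t$, a period break has occurred, and because all of $U$ strictly before the break is periodic with a generator whose fingerprint is being tracked, the fingerprint of the new block's generator $U[t-p+1..t]$ can be computed from the old generator fingerprint and the single new symbol $U[t]$ by an $\O{1}$-space computation --- no individual generator symbols, and no $\Omega(p)$ history, are ever stored. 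Whenever the current block reaches length $\ge m$ and its generator is the correct cyclic rotation of $Q$ (tested by a fingerprint comparison), the phase of the arithmetic progression of $P$-occurrences inside that block is determined, and the algorithm reports each occurrence as soon as the text position passes its right endpoint; a period break, the end of the stream, and the initial prefix of $U$ shorter than one full period are all handled as boundary cases of this progression.

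\paragraph{Correctness, white-box robustness, and space.}
Every quantity the algorithm stores is a deterministic function of $h$ and of the stream seen so far, so a white-box adversary observing the internal state learns nothing beyond the description of $h$, which it could equally well have been handed at the outset. There are no false negatives: on a genuine occurrence of $P$ all fingerprint tests pass, since $h$ is a fixed deterministic function once chosen. A false positive occurs only if at some step two distinct strings that the algorithm compares via $h$ hash to the same value, i.e.\ only if the adversary --- running in total time $T$ --- has exhibited a collision of $h$; by collision resistance against $T$-time adversaries this has probability at most $\operatorname{negl}(\lambda)$, and taking the output length $\lambda=\Theta(\log T)$ large enough that $\operatorname{negl}(\lambda)\le 1/\poly(n)$ yields the stated success probability. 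Finally, the stream has length at most $T$, so every counter fits in $\O{\log T}$ bits, and the algorithm keeps only $\O{1}$ fingerprints of $\lambda=\O{\log T}$ bits each; the total space is $\O{\log T}$ bits.

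\paragraph{Main obstacle.}
The crux is the periodic-block bookkeeping of the second phase: showing that, across an arbitrary sequence of period breaks, a correct generator fingerprint and a correct description of the pending arithmetic progression of occurrences can be maintained using only $\O{1}$ words --- the key point being that all of $U$ between two consecutive breaks is a predictable period-$p$ extension, so every ``fingerprint of what comes next'' can be derived from the stored constants without rereading or storing symbols. Pinning down the edge cases (the regime $2p>m$; blocks too short to contain any occurrence; occurrences clipped by a break or by the end of the stream; the first $p$ symbols of $U$; and the possibility that $p$ is not the minimal period of $P$, so that the generator need not be primitive) is where most of the technical work sits, along with verifying that the chosen collision-resistant hash family supports the substring, power, and rotation fingerprint operations in $\O{1}$ space.
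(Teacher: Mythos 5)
Your approach is essentially the paper's: swap Karp--Rabin fingerprints for a collision-resistant hash whose description can be safely made public, and then run a standard streaming pattern-matching protocol that exploits the periodicity of $P$ (so candidate occurrences form an arithmetic progression with common difference $p$, which is precisely the content of \lemref{lem:pattern:per} that the paper invokes from \cite{PoratP09}). The paper's \algref{alg:pattern} does exactly this, using the discrete-log CRHF of \thmref{thm:crhf}.

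Two points where your write-up diverges, one cosmetic and one substantive. First, you suggest the SIS-based hash family as the instantiation, but SIS hashes $h(U)=AU$ act on fixed-length vectors and do not give you the streaming property the paper actually relies on and explicitly highlights, namely that $h(U\circ V)$ is computable from $h(U)$, $h(V)$, and $|V|$; the paper's discrete-log hash $h(U)=g^U\bmod p$ has this property (exponent laws), and that is the right choice here. Second, and more importantly, you lean on the claim that ``the fingerprint of the new block's generator $U[t-p+1..t]$ can be computed from the old generator fingerprint and the single new symbol $U[t]$,'' i.e.\ that fingerprints of cyclic rotations of the generator are derivable in $\O{1}$ space from the generator's fingerprint alone. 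That operation is \emph{not} supported by the discrete-log hash (or by any CRHF: it would amount to a partial preimage extraction), and you do flag ``rotation fingerprint operations in $\O{1}$ space'' as the thing that needs verification --- that is indeed the gap. The paper's protocol sidesteps rotations entirely by only ever comparing fingerprints of \emph{prefixes} of $P$ against fingerprints of substrings $U[m{+}1..t]$ that start at a tracked candidate $m$, at offsets $t-m$ that are multiples of $p$; those are reconstructible from stored prefix hashes via the concatenation identity, with no rotation or prefix-removal needed. If you restructure the block bookkeeping to track prefix-anchored fingerprints from the candidate $m$ rather than a rolling window anchored at $t-p+1$, the rest of your argument goes through and matches the paper's proof.
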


\subsubsection{A General Reduction for Lower Bounds}
A natural question is whether there exist robust streaming algorithms against white-box adversaries for more complex problems, such as $F_p$ estimation or matrix rank. 
To that end, we give a general technique to prove lower bounds for {\it randomized} algorithms robust to white-box algorithms through two-player {\it deterministic} communication problems. 

\begin{restatable}{theorem}{thmtwopinf}
\thmlab{thm:two:p:inf}
(Informal) Suppose there exists a white-box adversarially robust streaming algorithm using $S(n,\eps)$ space that can be used to solve a one-way two-player communication game with $S(n,\eps)$ bits of communication with probability $p\in(1/2,1]$.
Then there exists a deterministic protocol for the two-player communication game using $S(n,\eps)$ bits of communication.
\end{restatable}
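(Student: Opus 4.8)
The plan is to combine the textbook streaming-to-communication reduction with the observation that white-box robustness forces the algorithm to behave, for (almost) every fixing of its random coins, like a \emph{deterministic} algorithm that is correct on \emph{all} inputs. Concretely, suppose $\streamalg$ is the white-box robust streaming algorithm and that feeding it the stream prefix encoding Alice's input $x$ followed by the suffix encoding Bob's input $y$ lets one read the answer $g(x,y)$ off its output. The candidate deterministic protocol is the usual one: Alice simulates $\streamalg$ on her prefix and sends Bob the resulting internal state, which is $S(n,\eps)$ bits by hypothesis; Bob resumes the simulation on his suffix and outputs the answer. The only gap is that $\streamalg$ is randomized, so a priori this is merely a public-coin randomized protocol, and one cannot fix the coins by a union bound over the exponentially many inputs.

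First I would prove the key lemma: if $\streamalg$ is white-box robust with success probability above $1/2$ (in fact any positive constant suffices), then there is a fixing $\rho^{*}$ of its randomness for which $\streamalg_{\rho^{*}}$ is correct on \emph{every} admissible input stream. The point is that a white-box $\adversary$ observes the random bits $R_t$ as soon as $\streamalg$ draws them, so it can maintain a perfect forward simulation of $\streamalg$ and always knows exactly how $\streamalg$ will behave. If for more than a $(1-q)$-fraction of coin settings $\rho$ the deterministic algorithm $\streamalg_{\rho}$ were incorrect on some input stream, then $\adversary$ --- upon recognizing that it is interacting with $\streamalg_{\rho}$ for such a $\rho$ --- simply feeds the offending stream, making $\streamalg$ err with probability exceeding $1-q$, contradicting $q$-robustness. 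Hence $\streamalg_{\rho}$ is correct on all input streams for at least a $q>0$ fraction of $\rho$, and we select any such $\rho^{*}$.

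Given the lemma, I would hard-code $\rho^{*}$ into both players and run the reduction with the now fully deterministic algorithm $\streamalg_{\rho^{*}}$: Alice sends its $S(n,\eps)$-bit state, Bob finishes the simulation, and correctness on \emph{all} pairs $(x,y)$ is precisely the conclusion of the lemma. This yields a deterministic one-way protocol using $S(n,\eps)$ bits, as claimed. (The assumption that the reduction itself succeeds with probability $p\in(1/2,1]$ typically comes for free, since an oblivious stream is a special case of a white-box adversary, so the reduction inherits $\streamalg$'s robustness probability.)

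The hard part will be making the key lemma airtight against the precise mechanics of the white-box game: $\adversary$ must commit to each update $u_t$ \emph{before} seeing that round's fresh bits $R_t$, and never gets to act on the final round's bits $R_m$. This is bookkeeping rather than a conceptual obstacle --- I would dispatch it via the standard normalization that a streaming algorithm draws all of its randomness at initialization and uses no fresh randomness when emitting an output, so that the response after any prefix depends only on randomness already revealed to $\adversary$, together with, if needed, padding the reduction instance by a single harmless trailing update. Under this normalization $\adversary$'s forward simulation is exact at every point at which it must decide, and the lemma goes through.
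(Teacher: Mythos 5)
There is a genuine gap, and it sits exactly where you flagged the ``bookkeeping'': the normalization that pushes all of $\streamalg$'s randomness to initialization is \emph{not} without loss of generality, and your key lemma is false without it. In the white-box model the adversary only sees $R_t$ \emph{after} committing to $u_t$; moving all randomness to time $0$ hands the adversary the entire tape before it chooses a single update, which is a strictly stronger adversary, and a white-box robust algorithm need not remain robust against it. Your key lemma inherits this problem. Concretely, consider a two-update algorithm over alphabet $\{a,b\}$ that draws one fresh bit $r_t\in\{0,1\}$ per update and is wrong precisely on the one stream that ``matches'' its tape: $\streamalg_{(0,0)}$ is wrong only on $(a,a)$, $\streamalg_{(0,1)}$ only on $(a,b)$, $\streamalg_{(1,0)}$ only on $(b,a)$, and $\streamalg_{(1,1)}$ only on $(b,b)$. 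A white-box adversary commits to $u_1$ blind, then sees $r_1$, then commits to $u_2$ without seeing $r_2$; whatever it does, its failure probability is $1/4$, so the algorithm is robust with $p=3/4>1/2$. Yet \emph{every} tape $\rho$ is wrong on some stream, so no $\rho^*$ of the kind your lemma promises exists, and hard-coding a single $\rho^*$ into both players cannot work. (The parenthetical ``any positive constant suffices'' is even further off.)

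The repair is to ask for much less than a single tape that is correct everywhere, which is what the paper does. Split the tape into the portion $r_A$ revealed while Alice's prefix is being processed and the fresh portion $r_B$ drawn during Bob's suffix. Robustness against the adversary that, having seen $r_A$ and the state, picks the worst suffix gives $\E_{r_A}\bigl[\min_y \Pr_{r_B}[\text{correct on }y]\bigr]\ge p$. Since $p>1/2$, some $r_A^*$ satisfies $\min_y\Pr_{r_B}[\text{correct on }y]>1/2$; Alice selects such an $r_A^*$ by enumeration and sends the resulting $S(n,\eps)$-bit state, and Bob finishes the simulation over all choices of $r_B$ and outputs the majority. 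Only Alice's coins get fixed outright; Bob's fresh coins are handled by majority, which is exactly what the $p>1/2$ hypothesis buys you. Your deterministic protocol skeleton is otherwise the same as the paper's, but the single-$\rho^*$ lemma cannot be proved and needs to be replaced by this asymmetric fixing.
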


\begin{proof}
Over all choices of randomness, at least $p$ fraction of the possible random strings chosen by the first player is correct over all possible inputs to the second player and at least $p$ fraction of the possible random strings chosen by the second player for each input.
The first player can enumerate over all possible inputs to the second player as well as all possible random strings to select a state that uses $S(n,\eps)$ space to represent and always succeeds. 
The first player can then send the state of the algorithm to the second player, who will then update the algorithm with their input and a fixed string, thereby resulting in a deterministic protocol that solves the one-way two-player communication game with $S(n,\eps)$ bits of communication.  
\end{proof}

We remark that \thmref{thm:two:p:inf} is especially powerful because it can be used to prove lower bounds for randomized algorithms robust against white-box adversaries using reductions from deterministic communication problems, which can often have much higher communication complexity than their randomized counterparts. 
For example, the deterministic complexity of the Equality problem, in which Alice must send Bob a message to determine whether their strings $x\in\{0,1\}^n$ and $y\in\{0,1\}^n$ are equal, is $\Theta(n)$. 
However, the randomized complexity of the Equality problem is $\Theta(\log n)$ for a constant probability of success. 
Thus \thmref{thm:two:p:inf} can show significantly stronger lower bounds for randomized algorithms robust against white-box adversaries over the standard streaming model. 
In particular, we first obtain the following hardness of approximation for $F_p$ moment estimation from \thmref{thm:two:p:inf}:

\begin{theorem}
\thmlab{thm:fp:lb}
For each $p\ge 0$ and $p \ne 1$, there exists a constant $C_p>1$ such that any white-box adversarially robust algorithm that reports a $C_p$-approximation to the frequency moment $F_p$ of an underlying frequency vector with probability at least $2/3$ must use space $\Omega(n)$. 
\end{theorem}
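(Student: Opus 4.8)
The plan is to derive the bound from the general reduction \thmref{thm:two:p:inf}: it suffices to exhibit a one-way two-player communication problem whose \emph{deterministic} complexity is $\Omega(n)$ and which any white-box robust $F_p$ estimator running in space $S$ can solve with a single $S$-bit message. The problem I would use is a ``gap'' version of \textsc{Equality}. Fix a constant $\gamma\in(0,1)$, say $\gamma=1/4$: Alice gets $x\in\{0,1\}^n$ and Bob gets $y\in\{0,1\}^n$ under the promise that either $x=y$ or $d_H(x,y)\ge\gamma n$, and they must decide which. Let $\calC\subseteq\{0,1\}^n$ be a binary code of relative minimum distance at least $\gamma$ and size $2^{\Omega(n)}$, which exists by the Gilbert--Varshamov bound. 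Then $\{(x,x):x\in\calC\}$ is a fooling set: each $(x,x)$ is a ``yes'' instance, while for distinct $x,x'\in\calC$ the pair $(x,x')$ has $d_H(x,x')\ge\gamma n$ and is a ``no'' instance. Hence the deterministic (even two-way) communication complexity of this gap-equality problem is at least $\log_2|\calC|=\Omega(n)$.

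Next I would set up the streaming reduction over a universe of size $2n$ whose elements are pairs $(i,b)$ with $i\in[n]$, $b\in\{0,1\}$. Alice feeds the white-box robust estimator the insertions $(1,x_1),\dots,(n,x_n)$, sends its entire internal state (at most $S$ bits) to Bob, who continues the stream with $(1,y_1),\dots,(n,y_n)$ and reads off the returned estimate $\wh F$. With $d:=d_H(x,y)$, the resulting frequency vector has exactly $n-d$ coordinates equal to $2$, exactly $2d$ coordinates equal to $1$, and the remaining $n-d$ coordinates equal to $0$, so \[F_p = 2^p(n-d)+2d.\] Thus $F_p=2^pn$ when $x=y$, whereas $F_p=2^pn-d(2^p-2)$ in general. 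For $p>1$ this produces a multiplicative separation $g_p:=2^p/(2^p-\gamma(2^p-2))>1$ between the two cases; for $0\le p<1$ the inequality reverses and one gets a separation $g_p:=1+\gamma(2^{1-p}-1)>1$; and $p=1$ gives no gap at all ($F_1\equiv 2n$ here), which is exactly why it is excluded. Now fix any constant $C_p$ with $1<C_p<\sqrt{g_p}$, so $C_p^2<g_p$. Since in the ``yes'' case $F_p$ equals the known value $2^pn$ while in the ``no'' case it differs from $2^pn$ by a factor at least $g_p$, a $C_p$-approximation $\wh F$ can be compared against a fixed threshold (computed from $2^pn$) to decide the instance; this is a one-way protocol solving the gap-equality problem with the estimator's success probability $\ge 2/3>1/2$ and with $S$ bits of communication. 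Applying \thmref{thm:two:p:inf} turns it into a \emph{deterministic} protocol with $S$ bits, and the fooling-set bound then forces $S=\Omega(n)$.

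The one place where white-box robustness is genuinely used is the appeal to \thmref{thm:two:p:inf}: the streams built above are non-adaptive, so on their own they would only give a \emph{randomized} one-way protocol for gap-equality, which costs a mere $O(\log n)$ bits; it is white-box robustness that lets the algorithm's randomness be fixed once and for all so that a deterministic protocol can be extracted, at which point the $\Omega(n)$ fooling-set bound bites. Everything else is routine: the existence of a constant-relative-distance binary code of size $2^{\Omega(n)}$ is the Gilbert--Varshamov bound, and the inequalities certifying $g_p>1$ for every fixed $p\ge0$ with $p\ne1$ are elementary (note $g_p\to1$ as $p\to1$, so $C_p$ also tends to $1$, consistent with the theorem claiming only the existence of \emph{some} $C_p>1$). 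If there is a genuine obstacle to anticipate, it is choosing the communication problem together with the pair-encoding $i\mapsto(i,x_i)$ so that a \emph{constant}-factor $F_p$ estimate already decides it while the problem retains \emph{deterministic} complexity $\Omega(n)$; this encoding is what makes both properties hold simultaneously without imposing any weight restriction on the code.
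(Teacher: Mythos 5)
Your proof is correct, and it reaches the paper's conclusion by a genuinely different (and arguably cleaner) route. The paper also reduces from a gap-equality problem, but it uses the weight-balanced variant $\detgapeq_n$ (Alice and Bob hold strings of Hamming weight exactly $n/2$, with the promise that either $x=y$ or $\HAM(x,y)\ge n/10$), cites~\cite{buhrman1998quantum} for the $\Omega(n)$ deterministic one-way bound, and encodes the problem over universe $[n]$ by having the stream induce the frequency vector $x+y$ directly; the gap then comes from comparing $\|2u\|_p$ with $\|u+v\|_p$ under the balance condition. You instead drop the weight restriction, prove the deterministic lower bound from scratch via a Gilbert--Varshamov code and a fooling set, and use a paired universe of size $2n$ with elements $(i,b)$, giving the exact expression $F_p=2^p(n-d)+2d$ that makes the multiplicative separation transparent and shows at a glance why $p=1$ must be excluded. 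Your encoding buys you freedom from the balance hypothesis (the gap depends only on $d=\HAM(x,y)$, not on $|x|,|y|$), at the small cost of doubling the universe; the paper's encoding is leaner but needs the balanced promise to control $\|u+v\|_p$. One other cosmetic difference: the paper's proof of \thmref{thm:lb:fp} actually inlines and strengthens the argument of \thmref{thm:two:p:inf} to handle a variant where the streaming algorithm may hide $k$ private bits from the adversary (trading space for a $2^k$ factor), whereas you apply \thmref{thm:two:p:inf} as a black box, which is all that the stated theorem requires. No gaps; the arithmetic for $g_p$ and the choice $1<C_p<\sqrt{g_p}$ check out for all $p\ge 0$, $p\ne 1$.
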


%Since $\eps$-$L_2$-heavy hitters are also $\frac{\eps}{\sqrt{n}}$-$L_1$-heavy hitters, \thmref{thm:lone:hh} implies that there exists an algorithm for computing $\eps$-$L_2$-heavy hitters using space $\O{\frac{\sqrt{n}}{\eps}\log\frac{n}{\eps}+\log\log m}$. 
%On the other hand \thmref{thm:fp:lb} shows that sufficiently small constant approximation for $F_2$ estimation and thus also $L_2$ estimation requires $\Omega(n)$ space, thus showing a separation between the problems of $L_2$ estimation and $\eps$-$L_2$-heavy hitters in the white-box adversarial streaming model. 
%In contrast, both of these problems can be solved using space $\tO{\frac{1}{\eps^2}\log n}$ in the oblivious streaming model~\cite{AlonMS99,BravermanCINWW17}. 

We also obtain the following hardness of approximation for matrix rank estimation from \thmref{thm:two:p:inf}:
\begin{theorem}
\thmlab{thm:rank:lb}
There exists a constant $C>1$ such that any white-box adversarially robust algorithm that reports a $C$-approximation to the rank of an underlying matrix with probability at least $2/3$ must use space $\Omega(n)$. 
\end{theorem}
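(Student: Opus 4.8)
The plan is to apply the general reduction of \thmref{thm:two:p:inf}, taking as the underlying two-player problem a \emph{gap} variant of \textsc{Equality}: this is exactly the type of problem whose deterministic one-way communication complexity $\Omega(n)$ dwarfs its $O(1)$ randomized complexity, so \thmref{thm:two:p:inf} yields a genuinely new space lower bound for the white-box model.

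Let \textsc{GapEQ} denote the following problem on $n$-bit strings: Alice holds $x\in\{0,1\}^n$ and Bob holds $y\in\{0,1\}^n$ under the promise that either $x=y$ or the Hamming distance satisfies $\Delta(x,y)\ge n/10$, and they must decide which case holds. The first ingredient I would establish is that every deterministic protocol (even two-way) for \textsc{GapEQ} needs $\Omega(n)$ bits: fixing a binary code $\calC$ of minimum distance at least $n/10$ and size $2^{\Omega(n)}$ (Gilbert--Varshamov), the set $\{(c,c):c\in\calC\}$ is a fooling set for the \textsc{Yes}-answer, since for $c\neq c'$ both cross pairs $(c,c')$ and $(c',c)$ are \textsc{No}-instances.

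The core of the argument is a reduction from \textsc{GapEQ} to constant-factor rank approximation, realized by an \emph{insertion-only} stream split into an Alice-prefix followed by a Bob-suffix. Index the rows and columns of a $2n\times 2n$ matrix $M$ by pairs $\{i,i'\}$ with $i\in[n]$, put on the coordinates $\{i,i'\}$ the block $M_i=\begin{pmatrix}1&1\\1&x_i+y_i\end{pmatrix}$, and leave every other entry $0$, so that $M$ is block diagonal. Alice issues, for each $i$, the three $+1$ insertions creating $M_i[1,1]=M_i[1,2]=M_i[2,1]=1$ and, if $x_i=1$, one more $+1$ insertion into $M_i[2,2]$; she then hands the streaming algorithm's state to Bob, who inserts $+1$ into $M_i[2,2]$ exactly when $y_i=1$. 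Every update is positive, so the stream is insertion-only. Since $\det M_i=x_i+y_i-1$ and $M_i\neq 0$, the block $M_i$ has rank $2$ if $x_i=y_i$ and rank $1$ if $x_i\neq y_i$, hence $\operatorname{rank}(M)=2n-\Delta(x,y)$. Consequently $\operatorname{rank}(M)=2n$ when $x=y$ but $\operatorname{rank}(M)\le 1.9\,n$ when $\Delta(x,y)\ge n/10$, so a $C$-approximation to the rank with any constant $C<20/19$ separates the two cases by thresholding at $2n$ (a code of relative distance approaching $1/2$ pushes the admissible $C$ up to any constant below $4/3$).

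Finally I would assemble the pieces. Given a white-box adversarially robust streaming algorithm that $C$-approximates the rank of the underlying $2n\times 2n$ matrix with probability at least $2/3$ using $S$ bits of space, Alice runs it on her prefix and sends its state ($S$ bits) to Bob, who completes the stream and outputs the \textsc{GapEQ} answer: this is a one-way randomized protocol for \textsc{GapEQ} with $S$ bits of communication and success probability $p=2/3>1/2$. By \thmref{thm:two:p:inf} there is then a \emph{deterministic} one-way protocol for \textsc{GapEQ} with $S$ bits, which forces $S=\Omega(n)$; since the underlying matrix is $2n\times 2n$, this is the claimed $\Omega(N)$ lower bound in the matrix dimension $N$. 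The step requiring the most care is the reduction itself --- arranging that the rank of the streamed matrix differs \emph{multiplicatively} between the two cases while using only positive updates --- which is exactly what the distance promise of \textsc{GapEQ} together with the $2\times 2$ gadget (rank governed by $\det M_i=x_i+y_i-1$) are designed to deliver; the fooling-set bound and the invocation of \thmref{thm:two:p:inf} are then routine, and mirror the derivation of \thmref{thm:fp:lb}.
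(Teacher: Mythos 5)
Your proof is correct and follows the paper's intended route: the paper itself says only that \thmref{thm:rank:lb} ``follows from a similar reduction'' to the one for \thmref{thm:lb:fp}, namely an application of \thmref{thm:two:p:inf} to the deterministic Gap Equality problem \cite{buhrman1998quantum}. You supply the concrete instantiation the paper leaves implicit: the $2\times 2$ block gadget $M_i=\begin{pmatrix}1&1\\1&x_i+y_i\end{pmatrix}$ with $\det M_i=x_i+y_i-1$, so that $\operatorname{rank}(M)=2n-\HAM(x,y)$, which is a clean and correct way to translate a $\Theta(n)$ Hamming-distance gap into a constant multiplicative rank gap via an insertion-only stream. (Two cosmetic differences from the paper: the paper defines $\detgapeq_n$ with the balanced-weight promise $|x|=|y|=n/2$, which your reduction does not need and does not use; and the paper cites \cite{buhrman1998quantum} for the $\Omega(n)$ deterministic lower bound where you give a short self-contained Gilbert--Varshamov fooling-set argument. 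Neither affects correctness.)
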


\subsubsection{Lower Bound for Deterministic Counting}
The ``basic'' problem of counting the number of ones in a binary stream, or equivalently the number of stream updates, is arguably the most fundamental streaming problem with both practical and theoretical applications. While there exists a straightforward $\O{\log n}$ space algorithm that explicitly maintains the actual count over the stream, there exist randomized algorithms, notably the Morris counters, which achieve $\O{\log \log n}$ bits. This savings is particularly meaningful when considering applications such as counting the number of visits to a popular website such as Wikipedia. In these applications, it is common to maintain many counters rather than one, and thus optimizing the space usage per counter has a measurable overall impact.

The basic counting problem is a key subroutine in many streaming problems in the oblivious model, such as $F_p$ estimation in an insertion-only stream \cite{JayaramW19}, approximate reservoir sampling \cite{GronemeierS09}, approximating the number of inversions in a permutation \cite{AjtaiJKS02}, and $L_1$-heavy hitters in insertion streams \cite{BhattacharyyaDW19}. 

We show that any deterministic problem in the oblivious model for counting must asymptotically use the same amount of space as the trivial algorithm, even when the algorithm knows the identity of the current position in the stream. That is, even if the streaming algorithm is augmented with a ``timer'' which tells it at any time how many stream updates it has seen, the algorithm still cannot approximate the number of 1's in a binary stream up to a constant factor unless it uses $\Omega(\log n)$ bits of memory. Note that having a timer is what makes our lower bound nontrivial: without a timer, with $o(\log n)$ bits of memory there are fewer than say, $n/10$ states of the algorithm, so after seeing $n/10$ 1's, the algorithm necessarily revisits a state it has seen before. Since the algorithm is deterministic, it gets stuck in a cycle and thus can produce at best a $10$-approximation. A timer will also be useful for our application, described momentarily. 

% If the algorithm has a timer, then solving the problem on a stream of all 1's is trivial, and thus one needs to look at the 0's in the input stream as well to obtain a lower bound.

\begin{restatable}{theorem}{thmapproxcountlb}
\thmlab{thm:approx:count:lb}
Given a constant $\eps>0$, any deterministic algorithm that outputs a $(1+\eps)$-approximation to the number of ones in a length-$n$ stream of bits must use $\Omega(\log n)$ bits of space, even if the algorithm has a timer which tells it how many stream updates it has seen so far. 
\end{restatable}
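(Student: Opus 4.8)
The plan is to model a deterministic space-$s$ algorithm that is handed a timer as a layered (oblivious, read-once, non-uniform) branching program $\calB$: there is one layer for each time $t\in\{0,1,\dots,n\}$, each layer holds at most $w:=2^s$ nodes (the reachable memory configurations), and the two out-edges of a node in layer $t$ — one labeled by an incoming $0$, one by an incoming $1$ — may depend on $t$, since the algorithm knows its current position. Each node $v$ in layer $t$ carries a claimed estimate $f_t(v)$, and correctness means that whenever a length-$t$ prefix $x$ drives $\calB$ to $v$, the value $f_t(v)$ is a $(1+\eps)$-approximation of $\mathrm{wt}(x)$, the number of $1$s in $x$. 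The goal is to prove $w=n^{\Omega(1)}$, equivalently $s=\Omega(\log n)$.

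The first step is the structural observation that ``state together with the timer pins the count down to a factor $(1+\eps)^2$.'' For a node $v$ in layer $t$ let $R_t(v)$ be the set of all weights $\mathrm{wt}(x)$ over length-$t$ prefixes $x$ reaching $v$. Appending $0^{n-t}$ to such a prefix leaves its weight unchanged, and the node reached at time $n$ then depends only on $v$ (it is the image of $v$ under the fixed composition of the zero-transitions on layers $t,\dots,n-1$); hence the single estimate output at time $n$ must $(1+\eps)$-approximate every element of $R_t(v)$, forcing $\max R_t(v)\le(1+\eps)^2\min R_t(v)$, with the degenerate case $0\in R_t(v)$ forcing $R_t(v)=\{0\}$. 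Since every value in $\{0,1,\dots,t\}$ is realized as a weight at layer $t$ (take $1^c0^{t-c}$), the sets $\{R_t(v)\}_v$ cover $\{1,\dots,t\}$ by multiplicative windows of ratio at most $(1+\eps)^2$; taking $t=n$ this already yields $w\ge\log_{(1+\eps)^2}n$, i.e.\ $s=\Omega(\log\log n)$.

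The heart of the proof — and the main obstacle — is to boost this from $\Omega(\log\log n)$ to $\Omega(\log n)$ bits by exploiting how the family $\{R_t(v)\}_v$ must evolve from layer to layer. The edges impose that $R_{t+1}(v')$ is the union of the sets $R_t(v)$ over nodes $v$ whose $0$-edge enters $v'$, together with the sets $R_t(v)+1$ over nodes $v$ whose $1$-edge enters $v'$, and this union must again be a multiplicative window of ratio at most $(1+\eps)^2$. The difficulty is that a node has only one $1$-edge, so all weights in $R_t(v)$ must be incremented and sent to the same successor, even though only a $\Theta(1/(\eps m))$-fraction of the weights near a value $m$ genuinely cross the next multiplicative boundary when incremented; keeping every $R_{t+1}(v')$ a narrow window while absorbing these shifted sets seems to force the number of nodes ``responsible for'' a scale to grow with the absolute width of that scale, and at time $n$ the top scale has absolute width $\Theta(\eps n)$. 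I would make this precise in one of two equivalent ways: (i) directly, by an amortized/potential argument that tracks how many distinct narrow windows the evolving cover is forced to contain and shows it cannot remain $n^{o(1)}$ by layer $n$; or (ii) by an adversary that maintains the collection of all (prefix, weight) pairs and, whenever $w=n^{o(1)}$, steers two equal-length prefixes into a common node with weights differing by more than a factor $(1+\eps)^2$, directly contradicting the structural observation. The technically hardest point is the careful accounting of the interaction, over many layers, between $0$-edges (which freeze weights) and $1$-edges (which shift them); this is also precisely where the timer is essential, since the timer is what defeats the simpler ``the program must re-enter a node and then cycle'' argument that suffices in the timer-free setting.
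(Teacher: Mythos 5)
Your setup and first structural observation match the paper closely: modeling the algorithm as a layered branching program, attaching to each node $v$ at level $t$ the set $R_t(v)$ of realizable weights, and using the $0$-suffix to force each $R_t(v)$ to lie in a narrow multiplicative window (the paper works with the minimal covering interval $J_u = [\min C_u, \max C_u]$ and calls the guarantee ``$\eps$-bound''). Your $\Omega(\log\log n)$ warm-up via covering $\{1,\dots,n\}$ by $O(\log_{(1+\eps)}n)$ windows is also correct. But the step you label ``the heart of the proof'' is left as a plan, not a proof. You write that the shifting/merging of intervals ``seems to force'' more nodes, and you say you ``would make this precise'' via an amortized argument or an adversary, without exhibiting the potential function, the invariant, or the adversary's strategy. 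Neither sketch is developed far enough to be checkable; in particular, option (ii) is a restatement of the goal (find two prefixes of the same length, same node, weights off by more than $(1+\eps)^2$), not a method for finding them.

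The paper's actual mechanism — which you do not have — is the notion of an \emph{exceptional} count. Say $k$ is \emph{present} at time $t$ if $k$ is the left endpoint of some maximal interval at level $t$; the lemmas you already intuited guarantee that presence propagates forward (a $0$-move keeps an interval, a $1$-move shifts it right by one, and $1$ is always present). Say $k$ is \emph{exceptional} at time $t$ if $k$ is present at $t$ but $k+1$ is absent at $t+1$. The two key lemmas are: (a) every time $k$ is exceptional, some interval containing $k$ is forced to stretch one further to the right, so $k$ can be exceptional at most $\eps(k)$ times overall (otherwise the window ratio would be violated); and (b) a pigeonhole argument: if exceptional events among counts $\le h$ occur fewer than $n/h - 1$ times, then there is a run of $h$ consecutive non-exceptional steps, during which presence propagates from $\{1\}$ up to $\{1,\dots,h+1\}$, forcing $h+1$ distinct maximal intervals (hence $\ge h+1$ nodes) at some level. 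Balancing $\sum_{k\le h}\eps(k)\cdot h\lesssim n$ with $\eps(k)=\delta k$ gives $h=\Theta(n^{1/3})$, hence $2^s\ge n^{\Omega(1)}$ and $s=\Omega(\log n)$. This charging scheme — bounding how often a fixed left endpoint can ``lose'' its successor, and pigeonholing over the long stretches where none is lost — is precisely the quantitative control that your proposal is missing, and without it you only have the $\Omega(\log\log n)$ bound.
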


%We highlight two particularly useful applications of \thmref{thm:approx:count:lb}. First, it settles the deterministic space complexity of the fundamental problem of basic counting. Second, it answers the natural question of whether the guarantees of \thmref{thm:two:p:inf} can be extended to multiplayer communication lower bounds. 

Surprisingly, \thmref{thm:approx:count:lb} shows that our technique translating two-player deterministic communication lower bounds to white-box adversary lower bounds in \thmref{thm:two:p:inf} cannot extend to an arbitrary number $n$ of players. Recall that \thmref{thm:two:p:inf} implies the white-box space complexity is at least the two-player one-way deterministic communication of the underlying communication problem, which is just the maximum communication of any player (since only one player speaks). A natural question is whether the white-box space complexity is at least the maximum communication of the underlying $n$-player deterministic communication game. This is false, since in the white-box adversarial model we can count using Morris counters with $\O{\log \log n}$ bits. However, given \thmref{thm:approx:count:lb}, the maximum communication of the underlying $n$-player deterministic communication game is $\Omega(\log n)$ bits. Note that in a communication game, each player knows its identity and can behave differently than other players, and thus the assumption that the algorithm has a timer in \thmref{thm:approx:count:lb} is needed.

%If the guarantee of \thmref{thm:two:p:inf} could be extended to multiplayer communication lower bounds, then \thmref{thm:approx:count:lb} would imply that approximate counting requires $\Omega(\log n)$ bits of space. However, the well-known Morris counters only use $\O{\log\log n+\log\frac{1}{\eps}}$ bits of space to provide approximate counting over a stream with constant probability of success (\lemref{lem:morris}). Hence it follows that the guarantees of \thmref{thm:two:p:inf} cannot be extended to multiplayer communication lower bounds. 

\subsection{Overview of our Techniques}
%We first describe the intuition behind our upper bounds. 
\paragraph{$L_1$-heavy hitters.}
To find all $\eps$-$L_1$-heavy hitters, we first recall that the well-known Misra-Gries algorithm is deterministic and 
%therefore robust against white-box adversaries. 
%However because it 
maintains approximate frequencies to each of the possible heavy hitters using space $\O{\frac{1}{\eps}(\log m+\log n)}$, which is expensive for $m\gg 2^n$. 
%Thus, if the goal is to simply report the identities of the heavy hitters, it seems that the $\frac{1}{\eps}\log m$ dependence might not be necessary. 
Thus if we can reduce the stream length from $m$ to some $m'=\poly\left(\frac{1}{\eps},n\right)$, then we can run Misra-Gries on the smaller stream. 
If the stream length $m$ were known, we can use Bernoulli sampling on each update in the stream with probability roughly $\frac{\log n}{\eps^2m}$ to preserve the $\eps$-$L_1$-heavy hitters; this sampling probability was shown to be secure against a white-box adversary in \cite{Ben-EliezerY20}. 
%More precisely, any item $i$ that is an $\eps$-$L_1$-heavy hitter in the original stream will also be an $\O{\eps}$-$L_1$-heavy hitter in the smaller stream and any item $j$ that is not an $\eps$-$L_1$-heavy hitter in the original stream will also not be an $\O{\eps}$-$L_1$-heavy hitter in the smaller stream. 
%On the other hand, the smaller stream has length $m'=\O{\frac{\log n}{\eps^2}}$, so an instance of Misra-Gries on the smaller stream only requires $\O{\frac{1}{\eps}\log\frac{n}{\eps}}$ bits of space. 
Thus it remains to resolve the issue of not knowing the stream length $m$ in advance. 

A natural approach is to make a number of exponentially increasing guesses for the length of the stream $m$. 
However, not only does this approach induce a multiplicative overhead of $\O{\log n}$ in the number of simultaneous instances, corresponding to each guess for the length, but also even tracking the length $m$ of the stream exactly requires $\O{\log m}$ bits, which we would like to avoid. 
We instead observe that Morris counters are white-box adversarially robust and use them to estimate the length of the stream at all times within a constant factor with space $\O{\log\log m}$. 
Moreover, we simultaneously only maintain two guesses for the length of the stream, corresponding to increasing powers of $\left(\frac{16}{\eps}\right)$, since when an instance of the algorithm is initiated with a guess for $m$, at most $\frac{\eps}{16}\,m$ updates have been missed by the algorithm, so any items $i$ that are $\eps$-$L_1$-heavy hitters of the stream will still be $\O{\eps}$-$L_1$-heavy hitters of the stream seen by the algorithm. Similar techniques also work for the Hierarchical Heavy Hitters problem.

%Moreover, we simultaneously only maintain two guesses for the length of the stream, corresponding to increasing powers of $\left(\frac{16}{\eps}\right)$. 
%Thus after $t$ updates with $t\in\left[\left(\frac{16}{\eps}\right)^i,\left(\frac{16}{\eps}\right)^{i+1}\right)$, we maintain two algorithms that guess the stream length to be $\left(\frac{16}{\eps}\right)^{i+1}$ and $\left(\frac{16}{\eps}\right)^{i+2}$, respectively. 
%Note that underestimating the length of the stream by $\frac{16}{\eps}$ is fine, since it simply means the number of samples will be slightly larger and these larger factors are absorbed into a logarithmic term anyway. 
%Moreover, when an instance of the algorithm is initiated with a guess for $m$, at most $\frac{\eps}{16}\,m$ updates have been missed by the algorithm, so any items $i$ that are $\eps$-$L_1$-heavy hitters of the stream will still be $\O{\eps}$-$L_1$-heavy hitters of the stream seen by the algorithm. 
%As we use the Morris counters to maintain the guesses for the length of the stream, it then suffices to run an instance of Misra-Gries on the substream induced by the Bernoulli sampling rates corresponding to each of our guesses. 

\paragraph{Computationally-bounded white-box adversaries.}
We can further improve our guarantees if the white-box adversary has bounded computational time through the use of collision-resistant hash functions. 
Namely, we can solve the $(\varphi,\eps)$-$L_1$-heavy hitters problem by using a collision-resistant hash function to hash the sampled items into a universe of size $\poly\left(\log n,\frac{1}{\eps},T\right)$ for white-box adversaries with computation time at most $T$.  
Similarly, we can achieve a streaming algorithm for the vertex neighborhood identification problem that is robust against polynomial-time white-box adversaries by hashing a Boolean vector representing the neighborhood of each vector into a universe of size $\poly(n,T)$. 
Thus it suffices to maintain $n$ hashes corresponding to the $n$ neighborhoods, using total space $\O{n\log nT}$. 
By comparison, we can use the OR Equality problem to show any deterministic algorithm that solves the vertex neighborhood identification problem requires $\Omega\left(\frac{n^2}{\log n}\right)$ space. 
We similarly use collision-resistant hash functions to obtain robust algorithms for problems in linear algebra and strings. 

\paragraph{Two-player communication lower bounds.}
To acquire a tool for proving lower bounds for white-box adversarially robust algorithms, we first note that a standard technique for proving lower bounds in the oblivious streaming model is to consider the randomized communication complexity of certain problems, such as Equality. 
However, the deterministic communication complexity of many of these problems can be a lot higher. 
Surprisingly, we show that randomized algorithms that are robust against white-box adversaries can be used for deterministic protocols, thereby proving significantly stronger lower bounds for white-box adversarially robust algorithms than their counterparts in the oblivious streaming model. 
In particular, if there exists a randomized algorithm that is robust against white-box adversaries, then it can be used in a two-player communication game as follows. 
The first player reads their input and creates the stream as usual. 
Rather than generating internal randomness for the algorithm, the first player notes that there exists a choice of the internal randomness such that for any possible input to the second player, the algorithm succeeds on $9/10$ of the possible strings used for randomness by the second player. 
The first player can then enumerate over all possible inputs and all possible strings used for the randomness to the second player and choose the first internal randomness for the first player such that the guarantee holds. 
The first player can then run the algorithm on the created stream with this {\it deterministic} choice of randomness and pass the state to the second player, who can now run their input over all choices of randomness for the second part of the stream, and take a majority vote. Note that this results in a deterministic protocol and thus must respect any deterministic lower bound for a communication problem. 
In particular, we can choose the communication problem to be Gap Equality (which is just the Equality problem with the promise that when the two input strings are not equal, they differ in a constant fraction of positions) problem to show both \thmref{thm:fp:lb} and \thmref{thm:rank:lb}. 
Unfortunately, we prove that this technique cannot be generalized to show lower bounds for white-box adversarially robust algorithms through multiplayer communication. 

\paragraph{Lower bounds for deterministic counting with a timer.}
%Intuitively, any deterministic algorithm must use $\Omega(\log n)$ space to approximately count up to $n$ number of ones in a stream, even permitted free access to the current time in the stream, because $o(\log n)$ space can only represent $o(n)$ different states of the streaming algorithm but the answer can range up to value $n$. 
A streaming algorithm is just a read-once branching program. \thmref{thm:approx:count:lb} is proven by bounding the number of counts that a single state of the branching program can correctly represent, which translates to an upper bound on the length of an interval on a worst-case stream that each state can correctly represent. We then show that there exists some $t_0$ such that after $t_0$ updates in the stream, all read-once branching programs require at least $\poly(n)$ states to approximate the number of ones in a length $n$ stream to within a constant factor. See \secref{sec:counting_lb} for a more detailed description.

%in terms of the approximation quality desired for the estimate produced by the algorithm. To do so, we keep track of the number of states that must exist after 
%showing that all read-once branching programs requireBy relating these quantities, we show if we wish for a constant multiplicative approximation, then there exists some $t_0$ such that at after $t_0$ updates, the OBDD must have $\poly(n)$ states which implies our desired space lower bound. See \secref{sec:counting_lb} for details.

\subsection{Notation}
For an integer $n>0$, $[n]$ denotes the set of integers $\{1,\ldots,n\}$. 
We use $\poly(n)$ to denote a fixed constant degree polynomial in $n$ and $\frac{1}{\poly(n)}$ to denote an arbitrary degree polynomial in $n$ that can be determined from setting constants appropriately. 
For a vector $v\in\mathbb{R}^n$, we use $v_k$ with $k\in[n]$ to denote its $k$-th coordinate. 
Given vectors $u,v\in\mathbb{R}^n$, we write their inner product as $\langle u,v\rangle=\sum_{k=1}^n u_kv_k$. 
For a string $S$, we use $S[i:j]$ to denote the substring formed from the $i$-th character of $S$ to the $j$-th character of $S$, inclusive. 
We use $S\circ T$ to denote the concatenation of a string $S$ with a string $T$. 

\section{Upper Bounds}
\seclab{sec:ub:stat}
In this section, we present streaming algorithms relevant to statistics that are robust to white-box adversaries. 

We first describe a crucial data structure for our algorithms: the Morris counter~\cite{Morris78}. This is a data structure for the approximate counting problem, where a nonnegative integer $Z=\sum_{t=1}^m u_t$ is defined through updates $u_1,\ldots,u_m$ such that $u_t\in\{0,1\}$ for each $t\in[m]$.
Given an accuracy parameter $\eps>0$, the goal of the approximate counting problem is to estimate $Z$ to within a $(1+\eps)$-approximation. 
Our first result is that Morris counters are robust in the white-box adversarial model.

\begin{restatable}{lemma}{lemmorrisrobust}
\lemlab{lem:morris:robust}
Morris counters output a $(1+\eps)$-approximation to the frequency of an item $i$ in the white-box adversarial streaming model with probability at least $1-\delta$, using total space 
\[\O{\log\log n+\log\frac{1}{\eps}+\log\log m+\log\frac{1}{\delta}}.\]
\end{restatable}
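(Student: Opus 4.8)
The plan is to reduce the white-box guarantee to a purely oblivious statement about the Morris counter via the principle of deferred decisions, and then to upgrade the usual ``correct at the end of the stream'' analysis to a ``correct at every prefix'' one by exploiting monotonicity of the counter value. Recall that the Morris counter stores a single integer $C$ (initialized to $0$), and on each update to item $i$ it sets $C\gets C+1$ with probability $b^{-C}$ (base $b=2$ classically, and $b=1+\Theta(\eps^2)$ together with averaging/median amplification in the refined version), reporting $\hat Z=(b^{C}-1)/(b-1)$. In the oblivious model this (suitably refined) data structure returns a $(1+\eps)$-approximation of a count that never exceeds $M$ with failure probability $\delta'$ using $\O{\log\log M+\log\frac1\eps+\log\frac1{\delta'}}$ bits; since the frequency of item $i$ is at most $m$ and is $\poly(n)$ in our applications, taking $M=\poly(n,m)$ accounts for the $\log\log n$ and $\log\log m$ terms. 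It remains to show that revealing $C$ and all coin flips to an adaptive adversary at every step does not help it.

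The conceptual heart is the observation that the adversary is powerless. Realize the counter's randomness as an infinite i.i.d.\ sequence $R_1,R_2,\dots$ of uniform values in $[0,1]$, one reserved for each potential increment, and define the Markov chain $C_0=0$ and $C_{j}=C_{j-1}+\mathbf{1}[R_{j}\le b^{-C_{j-1}}]$. This chain is generated with fresh randomness at each step and does not depend on the adversary at all. If by time step $t$ the adversary has made $k=k(t)$ updates to item $i$, then the Morris counter's state is exactly $C_{k}$: the adversary chooses which rounds are updates to $i$ and in what order, but once a round is such an update the counter simply consumes the next $R_j$ and transitions. Thus the adversary's only influence on the counter is the choice of the nondecreasing, unit-step schedule $k(\cdot)$ taking values in $\{0,1,\dots,m\}$, and it wins only if $\hat Z(C_{k(t)})\notin[(1-\eps)k(t),(1+\eps)k(t)]$ for some $t$ --- an event contained in $E:=\{\exists k\in\{0,1,\dots,m\}:\hat Z(C_k)\notin[(1-\eps)k,(1+\eps)k]\}$. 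Since $E$ depends only on $R_1,R_2,\dots$ and not on the adversary's strategy, $\PPr{\text{adversary wins}}\le\PPr{E}$ with the latter taken in the oblivious process; the adversary can at best scan for a $k$ at which the (already-determined) counter is bad and stop there, which is exactly what $E$ accounts for.

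Finally I would bound $\PPr{E}\le\delta$. Because $C_k$ is nondecreasing in $k$ and $\hat Z(\cdot)$ is increasing, $\hat Z(C_k)$ is a nondecreasing step function of $k$ with only $\poly(\frac1\eps,\log m)$ breakpoints (one per increment of the counter, or of the averaged counters, each of which only reaches a value that is $\poly(\frac1\eps)\log m$). It therefore suffices to verify, at the first index $\tau_c$ of each counter value $c$, that $\hat Z$ has not overshot, and just before the next increment that it has not undershot; this reduces $E$ to $\poly(\frac1\eps,\log m)$ ordinary Morris-counter concentration events, each of failure probability at most $\delta'$. Choosing $\delta'=\delta/\poly(\frac1\eps,\log m)$ and union bounding gives $\PPr{E}\le\delta$, while $\log\frac1{\delta'}=\log\frac1\delta+\O{\log\log m+\log\frac1\eps}$ is absorbed into the claimed space bound.

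I expect the main obstacle to be this last step rather than the deferred-decisions reduction: the standard analysis controls the counter only at the final count $Z$, so one must additionally check that every first-passage time $\tau_c$ is concentrated both from above and below and that the union bound over all $\poly(\frac1\eps,\log m)$ levels costs only the stated logarithmic overhead. A minor subtlety, handled by construction, is the behavior at tiny counts, where a multiplicative $(1+\eps)$ guarantee forces the counter to be exact for $k=\O{1}$; this holds since $C$ increments with probability $1$ while $C=0$.
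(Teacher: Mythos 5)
Your argument is correct, and the deferred-decisions reduction is exactly the right insight: because the Morris counter's transition law depends only on its current value, the entire trajectory $(C_k)_{k\geq0}$ is a function of pre-committed fresh randomness $R_1,R_2,\dots$ alone, and an adaptive adversary who sees $C$ and past coins can only choose the running index $k(t)$ at which to sample this fixed trajectory, never the trajectory itself. This collapses the white-box failure event to the oblivious event $E=\{\exists k\le m:\hat Z(C_k)\notin[(1-\eps)k,(1+\eps)k]\}$, which is then handled by a union bound over counter levels exactly as you describe. (The statement you need is first-passage-time concentration: $\tau_c = \sum_{j<c}\mathrm{Geom}(b^{-j})$ has mean $\hat Z(c)$ and variance $O\!\bigl((b-1)\hat Z(c)^2\bigr)$, so for $b-1=O(\eps^2\delta')$ each $\tau_c$ lands in $(1\pm\eps)\hat Z(c)$ with probability $1-\delta'$; union bounding over the $O(\log_b m)$ levels and picking $\delta'=\delta/\Theta(\log_b m)$ only inflates $\log\frac1{\delta'}$ by $O(\log\log m+\log\frac1\eps)$, which is absorbed.)

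One small imprecision: the number of breakpoints is $\Theta(\log_b m)$, which is $\poly(\tfrac1\eps,\tfrac1\delta)\log m$ rather than $\poly(\tfrac1\eps)\log m$ once $b-1$ is tuned to the target confidence; this does not hurt you, because the contribution to the space bound is still only $O(\log\log m+\log\tfrac1\eps+\log\tfrac1\delta)$, but it should be stated. You should also make explicit the observation you bury at the end: for the ``every prefix'' guarantee to even make sense at small counts, one uses the fact that the counter increments deterministically while $C<\log_b(1/(b-1))=O(1)$, so no failure can occur for $k=O(1)$ and the concentration bounds only need to kick in past that. With those two details filled in, the proof is complete and matches the intended argument that Morris counters are white-box robust because they carry no secret structure for an adversary to exploit.
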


\subsection{Heavy-Hitters}
\seclab{sec:ub:stat:hh}
In this section, we present randomized algorithms for $\eps$-$L_1$-heavy hitters that provide better guarantees than the well-known deterministic Misra-Gries data structure.
In the $F_p$ moment estimation and $L_p$ norm estimation problems, an underlying frequency vector $f\in\mathbb{R}^n$ is defined through updates $u_1,\ldots,u_m$ such that $u_t\in[n]$ for each $t\in[m]$. 
The resulting frequency vector $f$ is then defined so that 
%\[f_k=|\{t\,|\,u_t=k\}|\]
$f_k=|\{t\,|\,u_t=k\}|$ for each $k\in[n]$. 
For $p>0$, the $F_p$ moment of $f$ is defined to be $F_p(f)=\sum_{i=1}^n(f_k)^p$. 
The $L_p$ norm of $f$ is $\|f\|_p=(F_p(f))^{1/p}$. 
We use both $F_0$ and $L_0$ to denote the number of nonzero coordinates of $f$, i.e., $F_0(f)=\|f\|_0=|\{k\,|\,f_k\neq 0\}|$. 
Given a threshold parameter $\eps>0$, the goal of the $F_p$ moment estimation problem is to provide a $(1+\eps)$-approximation to $F_p(f)$ and the goal of the $\eps$-$L_p$-heavy hitters problem is to find all coordinates $k$ such that $f_k\ge\eps\,L_p(f)$. 
In the $(\varphi,\eps)$-$L_p$-heavy hitter problem, the goal is to report all coordinates $k$ such that $f_k\ge\varphi\,L_p(f)$ but also no coordinate $j$ such that $f_j\le(\varphi-\eps)\,L_p(f)$.

\begin{theorem}
\cite{MisraG82}
\thmlab{thm:misragries}
Given a threshold parameter $\eps>0$, there exists a deterministic one-pass streaming algorithm $\misragries$ that uses $\O{\frac{1}{\eps}(\log m+\log n)}$ bits of space on a stream of length $m$ and outputs a list $L$ of size $\frac{1}{\eps}$ that includes all items $i$ such that $f_i>\eps m$. 
Moreover, the algorithm returns an estimate $\widehat{f_i}$ for each $i\in L$ such that $f_i-\eps m\le\widehat{f_i}\le f_i$.  
\end{theorem}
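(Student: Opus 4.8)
The plan is to exhibit and analyze the classical Misra--Gries summary. First I would describe the data structure: $\misragries$ maintains an associative array of at most $k := \lceil 1/\eps\rceil - 1$ pairs $(\text{key},\text{counter})$, with keys in $[n]$ and counters positive integers. On an update $u_t = j$: if $j$ is already stored, increment its counter; else if fewer than $k$ keys are currently stored, insert $j$ with counter $1$; otherwise (the array is full and $j$ absent) perform a \emph{decrement round}, i.e.\ subtract $1$ from every stored counter and delete any key whose counter becomes $0$. At the end of the stream, output $L$ as the set of stored keys with $\widehat{f_i}$ equal to the stored counter of $i$ (and $\widehat{f_i} = 0$ for $i \notin L$). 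Since $k \le 1/\eps$ and each counter never exceeds $m$, the array uses $\O{\frac{1}{\eps}(\log m + \log n)}$ bits, and the algorithm is plainly deterministic and one-pass; so it remains only to verify correctness.

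Next I would prove the two-sided estimate $f_i - \eps m \le \widehat{f_i} \le f_i$. The upper bound is immediate: the counter of $i$ is incremented only on an occurrence of $i$, and decrement/delete steps only lower it, so it can never exceed $f_i$. The lower bound is the heart of the argument. Let $D$ be the total number of decrement rounds executed over the whole stream; I claim $\widehat{f_i} \ge f_i - D$ for every $i$. Write $a_i$ for the number of times $i$'s counter is incremented (equivalently, the number of occurrences of $i$ that do \emph{not} themselves trigger a decrement round) and $c_i$ for the number of decrement rounds during which $i$ is present in the array; then $\widehat{f_i} \ge a_i - c_i$. The key observation is that each decrement round ``costs'' item $i$ at most one unit overall: a round triggered by $i$ has $i$ absent (so it subtracts nothing from $i$'s counter but accounts for one occurrence of $i$ not recorded), while a round triggered by some $j \ne i$ subtracts at most $1$ from $i$'s counter and involves no occurrence of $i$. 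Formally, if $b_i$ occurrences of $i$ trigger decrement rounds then $a_i = f_i - b_i$ and $c_i \le D - b_i$, so $\widehat{f_i} \ge (f_i - b_i) - (D - b_i) = f_i - D$.

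Finally I would bound $D$ by a mass-counting argument and conclude. Each of the $m$ updates either increments a single stored counter (raising the array's total counter mass by $1$) or triggers a decrement round; a decrement round leaves its triggering occurrence unrecorded and subtracts $1$ from each of the $k$ stored counters, so it ``consumes'' $k+1$ units of update mass that do not survive as current counter mass. Since the final counter mass is nonnegative, $D(k+1) \le m$, hence $D \le m/(k+1) \le \eps m$ because $k+1 = \lceil 1/\eps\rceil \ge 1/\eps$. Combined with the previous step this gives $f_i - \eps m \le \widehat{f_i} \le f_i$; in particular $f_i > \eps m$ forces $\widehat{f_i} > 0$, so $i \in L$, while $|L| \le k \le 1/\eps$. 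I expect the main obstacle to be making the ``mass destroyed'' bookkeeping in the lower-bound step fully rigorous — in particular the accounting $c_i \le D - b_i$ across deletions and possible re-insertions of a key — while the space bound and the $D \le \eps m$ estimate are routine. (This is the classical analysis of \cite{MisraG82}; we reproduce it here only because our heavy-hitter algorithms invoke \misragries\ as a subroutine.)
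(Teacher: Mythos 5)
The paper states this result as a citation of Misra and Gries \cite{MisraG82} and does not prove it; there is therefore no ``paper proof'' to compare against. Your write-up is a correct and complete rendition of the standard analysis: the decomposition $f_i = a_i + b_i$, the bound $c_i \le D - b_i$ (valid because a round triggered by an occurrence of $i$ necessarily has $i$ absent from the array), the resulting $\widehat{f_i} \ge f_i - D$, and the mass argument $D(k+1)\le m$ are all sound, and the worry you flag about re-insertions is not a problem because the counter of a deleted key is exactly $0$ at the moment of deletion, so the running value $a_i - c_i$ tracks the stored counter exactly throughout. One cosmetic remark: the statement in the paper normalizes $k$ differently (``list of size $\tfrac1\eps$''), while you take $k=\lceil 1/\eps\rceil-1$; both give $|L|\le 1/\eps$ and $D\le \eps m$, so this is immaterial.
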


To output the identities of $\frac{1}{\eps}$ heavy-hitters, $\Omega\left(\frac{1}{\eps}\log n\right)$ space is clearly necessary. 
On the other hand, it is not clear that the dependence on $\log m$ for Misra-Gries in \thmref{thm:misragries} is needed. 
It is known that we can essentially preserve the $\eps$-$L_1$-heavy hitters by sampling a small number of updates in a stream. 

\begin{theorem}
\cite{Ben-EliezerY20}
\thmlab{thm:bern:robust}
There exists a constant $C>0$ such that for any $\eps,\delta\in(0,1/2)$, universe size $n$, and stream length $m$, Bernoulli sampling each item of the stream with probability $p\ge\frac{C\log(n/\delta)}{\eps^2m}$ solves the heavy hitters problem with error $\eps$ in the white-box adversarial model. 
\end{theorem}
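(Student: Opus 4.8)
The plan is to reduce the heavy-hitters guarantee to an $\ell_\infty$-approximation statement: I will show that, with probability at least $1-\delta$, Bernoulli sampling at rate $p$ produces a substream whose scaled frequency vector is within additive error $\tfrac{\eps}{3}m$ of the true frequency vector $f$ at every prefix of the stream, after which running an exact heavy-hitters routine (or $\misragries$ with parameter $\Theta(\eps)$) on the substream recovers all $\eps$-$L_1$-heavy hitters together with approximate frequencies. Since the stream is insertion-only, $L_1=m$, so an $\eps$-$L_1$-heavy hitter is precisely an item $i$ with $f_i\ge\eps m$. The one real subtlety is adaptivity: in the white-box model the set of stream positions carrying a fixed item $i$ is itself a random variable correlated with the sampling coins, so a naive Chernoff bound on the sampled count of $i$ does not apply. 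The fix is a martingale (Bernstein-type) concentration inequality that exploits the fact that each sampling coin is \emph{fresh} randomness, independent of everything the adversary sees at the moment it commits to the corresponding update.

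Concretely, first I would reduce to a deterministic adversary: since the white-box adversary observes the algorithm's entire randomness, without loss of generality it is a deterministic function of the coins revealed so far, and hence the whole transcript $(u_1,\dots,u_m)$ is a deterministic function of the i.i.d.\ sampling indicators $X_1,\dots,X_m\sim\mathrm{Bernoulli}(p)$. Let $\mathcal{F}_t=\sigma(X_1,\dots,X_t)$. The key structural observation is that at step $t$ the adversary produces $u_t$ \emph{before} the algorithm draws $X_t$, so $\mathbf{1}\{u_t=i\}$ is $\mathcal{F}_{t-1}$-measurable for every fixed $i\in[n]$.

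Next, for each fixed $i\in[n]$ define $M_k=\sum_{t=1}^k (X_t-p)\,\mathbf{1}\{u_t=i\}$. Because $\Ex{(X_t-p)\mathbf{1}\{u_t=i\}\mid\mathcal{F}_{t-1}}=0$, the sequence $(M_k)$ is a martingale with $M_m=\widehat{f_i}-p f_i$, where $\widehat{f_i}$ is the sampled count of $i$; its increments are bounded by $1$ and its predictable quadratic variation is $p(1-p)f_i\le pm$. Freedman's inequality then gives $\PPr{|\widehat{f_i}-p f_i|>\tfrac{\eps}{3}pm}\le 2\exp(-\Omega(\eps^2 p m))$, which is at most $\delta/n$ once $p\ge\tfrac{C\log(n/\delta)}{\eps^2 m}$ for a suitable constant $C$; applying Doob's maximal inequality to the same martingale upgrades this to a bound on $\max_{k\le m}|M_k|$, so the estimate is simultaneously good at every prefix. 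A union bound over $i\in[n]$, together with a single non-adaptive Chernoff bound showing $|\widehat{L_1}-pm|=|\sum_t X_t-pm|\le\tfrac{\eps}{3}pm$, yields that with probability $\ge 1-\delta$, at all times $\widehat{f_i}/p$ lies within $\tfrac{\eps}{3}m$ of $f_i$ for every $i$ and $\widehat{L_1}/p\in[(1-\tfrac{\eps}{3})m,(1+\tfrac{\eps}{3})m]$.

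Finally I would translate this back to the heavy-hitters guarantee: on the sampled substream, any true $\eps$-heavy hitter has sampled frequency at least $p(\eps-\tfrac{\eps}{3})m$, hence at least a $\tfrac{\eps}{2}$-fraction of $\widehat{L_1}$, whereas any item of true frequency below $\tfrac{\eps}{3}m$ has sampled frequency below $p(\tfrac{\eps}{3}+\tfrac{\eps}{3})m$, strictly less than an $\eps$-fraction of $\widehat{L_1}$; this gap is exactly the error-$\eps$ heavy-hitters guarantee, and feeding the substream to $\misragries$ with a suitably scaled parameter additionally supplies per-item frequency estimates of additive error $O(\eps)\,L_1$. The main obstacle is the martingale step: making precise that the fresh-coin structure of Bernoulli sampling lets the adversary's $\mathcal{F}_{t-1}$-measurable choices be absorbed into the predictable part, so that the per-item sampled count concentrates just as a binomial would despite the adaptive white-box adversary. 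Everything else is bookkeeping.
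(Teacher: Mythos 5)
The paper does not prove this theorem itself: it is stated with a citation to Ben-Eliezer and Yogev, and the only ``proof'' in the paper is the one-sentence remark immediately after the statement, namely that the black-box proof of \cite{Ben-EliezerY20} ``extends naturally to white-box adversaries because there is no additional private randomness maintained by the algorithm.'' Your proposal is a correct formalization of exactly that remark: the martingale $M_k=\sum_{t\le k}(X_t-p)\mathbf{1}\{u_t=i\}$ with filtration $\mathcal{F}_t=\sigma(X_1,\dots,X_t)$ makes precise why revealing all past coins does not help the adversary, because $u_t$ (and hence $\mathbf{1}\{u_t=i\}$) is committed before the fresh coin $X_t$ is drawn, so it is $\mathcal{F}_{t-1}$-measurable and the increment has conditional mean zero. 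Your quadratic-variation bound $p(1-p)f_i\le pm$, the Freedman tail, the union bound over $i\in[n]$, and the non-adaptive Chernoff bound on $\sum_t X_t$ are all correct and assemble to the stated sampling rate $p\ge C\log(n/\delta)/(\eps^2 m)$; the back-translation from the scaled $\ell_\infty$ guarantee to the heavy-hitters gap is routine as you describe. This is the same martingale route that underlies the cited work, and it matches the paper's (unstated) reason the result carries over to the white-box model.

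Two small remarks, neither a gap. First, invoking Doob's maximal inequality on top of Freedman is redundant: Freedman's inequality is already stated for $\max_{k\le m}M_k$, so the all-prefix guarantee comes for free. Second, you quietly use that $f_i$ is a random variable (the adversary builds the stream adaptively), and your martingale bound correctly handles this because the predictable quadratic variation is bounded deterministically by $pm$ and the target $\widehat{f_i}-pf_i$ compares the sampled count to the realized (random) $f_i$; it might be worth saying this explicitly so a reader does not worry that ``$f_i$'' in the variance bound is being treated as fixed.
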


We remark that \thmref{thm:bern:robust} was proven in \cite{Ben-EliezerY20} against black-box adversaries, but their proof extends naturally to white-box adversaries because there is no additional private randomness maintained by the algorithm. 
On the other hand, \thmref{thm:bern:robust} requires that the length of the stream is known a priori, which is an assumption that we would like to remove by running multiple instances of the algorithm in parallel, with exponentially increasing guesses for the length of the stream. 
However, even to track the length of the stream requires $\O{\log m}$ bits of space. 
Instead, only an approximation to the length of the stream is required and thus we use Morris counters to remove the dependence on $\log m$. 
Finally, we observe that it suffices to maintain only two active guesses for the length of the stream at any point in time because even if we start a guess ``late'' in the stream, we will have only missed a $\poly(\eps)$-prefix length of the stream, so any $\eps$-$L_1$-heavy hitters will still be $\O{\eps}$-heavy with respect to the substream.

A useful subroutine we will need appears in \algref{alg:bern:mg}, and our full algorithm appears in \algref{alg:bernmg:full}, which gives the full guarantees of \thmref{thm:lone:hh}. 
We can also generalize this approach to hierarchical heavy hitters.

\begin{algorithm}[!htb]
\caption{$\bernmg(n,m,\eps,\delta)$}
\alglab{alg:bern:mg}
\begin{algorithmic}[1]
\Require{Universe size $n$, upper bound $m$ on the stream length, accuracy $\eps$, failure probability $\delta$, and a stream of updates $u_1,u_2,\ldots$, where each $u_i\in[n]$ represents a single update to a coordinate of the underlying vector $f$}
\Ensure{$\eps$-$L_1$-heavy hitters of the stream}
\State{Initialize an instance $\calA$ of Misra-Gries with threshold $\frac{\eps}{2}$.}
\For{each update $u_t$ with $t\in[m]$}
\State{With probability $\O{\frac{\log(n/\delta)}{\eps^2 m}}$, update $\calA$ with $u_t$}
\EndFor
\State{\Return the output of $\calA$}
\end{algorithmic}
\end{algorithm}

\begin{algorithm}[!htb]
\caption{Adversarially robust algorithm for $\eps$-$L_1$-heavy hitters}
\alglab{alg:bernmg:full}
\begin{algorithmic}[1]
\Require{Universe size $n$, accuracy $\eps$, and a stream of updates $u_1,u_2,\ldots$, where each $u_i\in[n]$ represents a single update to a coordinate of the underlying vector $f$}
\Ensure{$\eps$-$L_1$-heavy hitters of the stream}\State{Run a Morris counter that outputs a $(1+\O{\eps})$-approximation $\widehat{t}$ to the number of stream updates $t\in[m]$.}
\State{$c\gets 0$, $r\gets 2$, $\delta\gets\O{\frac{\eps}{\log m}}$}
\For{$i\in[r]$}
\State{Initialize an instance $\calA_i$ of $\bernmg(n,(16/\eps)^i,\eps/2,\delta)$}
\EndFor
\For{each update $u_t$ with $t\in[m]$}
\State{Update all instances of $\calA_i$}
\If{$\widehat{t}\ge(16/\eps)^c$}
\State{Delete $\calA_c$}
\State{$c\gets c+1$}
\State{Initialize an instance $\calA_c$ of \\ $\bernmg(n,(16/\eps)^{c+1},\eps/2,\delta)$}
\EndIf
\State{\Return the output of $\calA_c$}
\EndFor
\end{algorithmic}
\end{algorithm}

\algref{alg:bernmg:full} also solves the $(\varphi,\eps)$-$L_1$-heavy hitter problem in which all items $i$ such that $f_i\ge\varphi\|f\|_1$ are outputted and no items $j$ such that $f_j<(\varphi-\eps)\|f\|_1$ are outputted, using a total space of  $\O{\frac{1}{\eps}\left(\log n+\log\frac{1}{\eps}\right)+\log\log m}$ bits. 
\cite{BhattacharyyaDW19} showed that for oblivious streams, the $\O{\frac{1}{\eps}\log n}$ dependence is not necessary. 
Similarly, we can further improve our bounds against white-box adversaries that use $T=\poly(\kappa)$ time through the following notion of collision-resistant hash functions:
\begin{definition}[Family of Collision-Resistant Hash Functions]
A set of functions $H=\{h_i:\{0,1\}^{n_i}\to\{0,1\}^{m_i}\}_{i\in I}$ is a family of collision-resistant hash functions (CRHF) if 
\begin{itemize}
\item
(Efficient generation)
There exists a probabilistic polynomial-time algorithm $\Gen$ such that $\Gen(1^\kappa)\in I$ for all $\kappa\in\mathbb{Z}^+$. 
\item 
(Compression) $m_i<n_i$ for all $i\in I$.
\item
(Efficient evaluation)
There exists a probabilistic polynomial-time algorithm $\Eval$ such that for all $i\in I$ and $x\in\{0,1\}^n_i$, we have $\Eval(x,i)=h_i(x)$. 
\item
(Collision-resistant)
For any non-uniform probabilistic polynomial-time algorithm $\calA$, there exists a negligible function $\negl$ such that
for all security parameters $\kappa\in\mathbb{N}$,
\[\PPPr{(x_0,x_1)\gets\calA(1^\kappa,h)}{x_0\neq x_1\wedge h(x_0)=h(x_1)}\le\negl(\kappa).\]
\end{itemize}
\end{definition}
Here we use negligible function to mean a function $\negl$ such that $\negl(x)=o\left(1/x^c\right)$ for any constant $c>0$. 
There are folklore constructions of collision-resistant hash functions based on the hardness of finding the discrete logarithm of a given composite number, e.g., Theorem 7.73 in \cite{KatzL14}:
\begin{theorem}
\thmlab{thm:crhf}
Under the discrete log assumption, there exists a family of collision-resistant hash functions with $m_i=\O{\log\kappa}$ for $i=\Gen(1^\kappa)$ and uses $\O{\log\kappa}$ bits of storage.
\end{theorem}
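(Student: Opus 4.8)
The plan is to exhibit the classical discrete-logarithm based hash family (Chaum--van Heijst--Pfitzmann) and to check each of the four CRHF properties together with the storage bound. I would define $\Gen$ as follows: on input $1^\kappa$, sample a prime $q$ of the appropriate length together with a prime $p$ with $q\mid p-1$ (e.g.\ a safe prime $p=2q+1$), let $G\le\mathbb{Z}_p^\ast$ be the unique subgroup of order $q$, pick two uniformly random non-identity elements $g,h\in G$, and output the index $i=(p,q,g,h)$. The hash is $h_i:\mathbb{Z}_q\times\mathbb{Z}_q\to G$ with $h_i(x_1,x_2)=g^{x_1}h^{x_2}\bmod p$. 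Writing $\mathbb{Z}_q\times\mathbb{Z}_q$ as $\{0,1\}^{n_i}$ and $G$ as $\{0,1\}^{m_i}$, we get $n_i=2\lceil\log q\rceil$ and $m_i=\lceil\log p\rceil<n_i$ (compression), and since $i$ consists of a constant number of integers bounded by $p$, both the description length and the output length are $\O{\log p}=\O{\log\kappa}$ bits under the paper's normalization. Efficient generation reduces to the fact that such prime pairs and generators are samplable in probabilistic polynomial time, and efficient evaluation is just modular exponentiation, which moreover only requires keeping $\O{1}$ group elements (hence $\O{\log\kappa}$ bits) in memory during square-and-multiply.

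The one substantive step is collision resistance, which I would establish by a tight reduction to the discrete log assumption. Given a DLOG challenge $(p,q,g,y)$ with $y$ uniform in $G$, run the alleged collision finder $\calA$ on the index $(p,q,g,y)$; this is distributed exactly like $\Gen(1^\kappa)$ since $h=y$ is a uniform element of $G$. If $\calA$ outputs $(x_1,x_2)\ne(x_1',x_2')$ with $g^{x_1}y^{x_2}\equiv g^{x_1'}y^{x_2'}\pmod p$, then $g^{x_1-x_1'}\equiv y^{x_2'-x_2}\pmod p$. If $x_2=x_2'$ this forces $g^{x_1-x_1'}\equiv 1$, hence $x_1\equiv x_1'\pmod q$ and so $(x_1,x_2)=(x_1',x_2')$, a contradiction; therefore $x_2\ne x_2'$, and since $q$ is prime the difference $x_2'-x_2$ is invertible modulo $q$, whence $\log_g y=(x_1-x_1')(x_2'-x_2)^{-1}\bmod q$. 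The reduction outputs this value, so a probabilistic polynomial-time $\calA$ finding collisions with probability $\ge\eta(\kappa)$ yields a DLOG solver with success probability $\ge\eta(\kappa)$; under the discrete log assumption $\eta$ is therefore negligible, which is exactly the required collision-resistance bound.

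The main obstacle --- essentially the only place the argument is not mechanical --- is the ``efficient generation'' clause, since sampling safe primes in strict worst-case polynomial time is not known (the density of safe primes up to $x$ is only conjectured to be $\Theta(x/\log^2 x)$). I would handle this exactly as the cited reference does: either invoke the standard heuristic giving expected polynomial time, or, more cleanly, take $G$ to be the order-$q$ subgroup of $\mathbb{Z}_p^\ast$ for an independently sampled $\kappa$-bit prime $q$ and a larger prime $p\equiv 1\pmod q$ --- a prime pair that \emph{is} samplable in probabilistic polynomial time and for which the discrete log assumption is equally standard --- with no change to the hash, the reduction, or the space accounting. A secondary point requiring a line of care is matching the claimed bounds to the paper's convention, i.e.\ noting that under the normalization in which $\kappa$ is taken to be the order of magnitude of the modulus, a single group element and the modulus each fit in $\O{\log\kappa}$ bits, so both storage and output length are $\O{\log\kappa}$.
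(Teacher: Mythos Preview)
The paper does not prove this theorem; it states it as a citation to Theorem~7.73 of Katz--Lindell and uses it as a black box. Your proposal reproduces precisely the standard Chaum--van~Heijst--Pfitzmann construction together with its tight reduction to discrete log, which is the argument that reference contains, so the proof is correct and there is nothing further to compare.

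One tangential remark: later, in the string-algorithms section, the paper informally describes the hash as $h(U)=g^U\bmod p$ with a single generator so that it can be updated as characters of $U$ stream in. Read literally that map is not collision-resistant once $p$ is public (take $U'=U+\mathrm{ord}(g)$); your two-generator map $h_i(x_1,x_2)=g^{x_1}h^{x_2}$ is the correct construction, and it is equally computable in an online fashion by maintaining the two running exponentiations, so the downstream applications are unaffected.
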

\thmref{thm:phi:eps:hh} then follows from applying \thmref{thm:crhf} to the sampled items. 

We also remark that \algref{alg:bernmg:full} can be used to estimate the inner product of two vectors $f$ and $g$ that are implicitly defined through two streams by using the following observations:
\begin{lemma}
\lemlab{lem:ip:one}
\cite{JayaramW18}
Let $f',g'\in\mathbb{R}^n$ be unscaled uniform samples of $f$ and $g$, sampled with probability $p_f\ge\frac{s}{m_f}$ and $p_g\ge\frac{s}{m_g}$, where $s=\frac{1}{\eps^2}$. 
Then with probability at least $0.99$, we have 
\[\langle p^{-1}_f f',p^{-1}_g,g'\rangle=\langle f,g\rangle\pm\eps\|f\|_1\|g\|_1.\]
\end{lemma}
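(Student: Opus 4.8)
The plan is a textbook second-moment (Chebyshev) argument applied to the estimator
$X := \langle p_f^{-1}f',\, p_g^{-1}g'\rangle = \sum_{k=1}^n p_f^{-1}p_g^{-1}\,f'_k g'_k$, where I treat $f'$ and $g'$ as the unscaled counts produced by independent Bernoulli sampling: each of the $m_f=\|f\|_1$ updates of the first (insertion-only, hence nonnegative) stream is kept independently with probability $p_f$, each of the $m_g=\|g\|_1$ updates of the second stream independently with probability $p_g$, and the two samples use independent randomness. The two structural facts that make everything go through are: (i) since distinct coordinates of $f$ are fed by \emph{disjoint} sets of updates, the variables $f'_1,\dots,f'_n$ are mutually independent with $f'_k\sim\mathrm{Bin}(f_k,p_f)$ (and likewise the $g'_k$); and (ii) $f'$ is independent of $g'$. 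From (i)--(ii), $\Ex{f'_k}=p_f f_k$, $\Ex{g'_k}=p_g g_k$, so by linearity $\Ex{X}=\sum_k f_k g_k=\langle f,g\rangle$ and the estimator is exactly unbiased.

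For the variance, (i)--(ii) make the summands $X_k:=p_f^{-1}p_g^{-1}f'_k g'_k$ independent over $k$, hence $\Var{X}=\sum_k\Var{X_k}$. For each $k$ I would use $\Ex{(f'_k)^2}=p_f f_k(1-p_f)+p_f^2 f_k^2\le p_f f_k+p_f^2 f_k^2$ and the symmetric bound for $g$, then expand $\Var{X_k}=p_f^{-2}p_g^{-2}\bigl(\Ex{(f'_k)^2}\Ex{(g'_k)^2}-p_f^2 p_g^2 f_k^2 g_k^2\bigr)$ and cancel the $f_k^2 g_k^2$ term, arriving at the clean per-coordinate bound $\Var{X_k}\le \tfrac{f_k g_k}{p_f p_g}+\tfrac{f_k g_k^2}{p_f}+\tfrac{f_k^2 g_k}{p_g}$. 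Summing over $k$ and using $\langle f,g\rangle\le\|f\|_1\|g\|_1$, $\sum_k f_k g_k^2\le\|g\|_\infty\langle f,g\rangle\le\|f\|_1\|g\|_1^2$ (and symmetrically $\sum_k f_k^2 g_k\le\|f\|_1^2\|g\|_1$), and finally substituting $p_f\ge s/\|f\|_1$, $p_g\ge s/\|g\|_1$, each of the three terms is $\O{(\|f\|_1\|g\|_1)^2/s}$, so $\Var{X}=\O{(\|f\|_1\|g\|_1)^2/s}$.

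The proof then closes with Chebyshev's inequality: $\PPr{|X-\langle f,g\rangle|>\eps\|f\|_1\|g\|_1}\le \Var{X}/(\eps\|f\|_1\|g\|_1)^2=\O{1/(s\eps^2)}$, which falls below $1/100$ as soon as $s=\Theta(1/\eps^2)$ with a sufficiently large hidden constant; taking $s=1/\eps^2$ literally yields only constant confidence, and one absorbs the extra constant into $s$ (or, equivalently, takes a median of a constant number of independent repetitions), neither of which affects the way the lemma is invoked later.

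The only real obstacle lies in the variance bookkeeping of the middle step. First, one must justify the independence structure carefully: this is precisely where Bernoulli / independent-per-update sampling is used, since a fixed-size sample would give only negative association and would require a slightly more careful argument (or an appeal to negative-association tail bounds). Second, one must pick the right crude norm inequalities for $\sum_k f_k g_k^2$ and $\sum_k f_k^2 g_k$ so that, after plugging in $p_f\ge s/\|f\|_1$ and $p_g\ge s/\|g\|_1$, the variance scales as $(\|f\|_1\|g\|_1)^2/s$ rather than something weaker; making these inequalities line up with the sampling rates is the one delicate point, and everything else is linearity of expectation plus a single application of Chebyshev.
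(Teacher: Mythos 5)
The paper does not prove this lemma---it is stated as a citation to \cite{JayaramW18}---so there is no in-paper argument to compare against. Your proposal is the natural second-moment / Chebyshev proof and it is correct in substance: per-update Bernoulli sampling makes $f'_k\sim\mathrm{Bin}(f_k,p_f)$ mutually independent across $k$ (disjoint updates feed distinct coordinates) and independent of $g'$, giving unbiasedness by linearity; your per-coordinate bound $\Var{X_k}\le \tfrac{f_k g_k}{p_f p_g}+\tfrac{f_k g_k^2}{p_f}+\tfrac{f_k^2 g_k}{p_g}$ is exactly what one gets from $\Ex{(f'_k)^2}\le p_f f_k+p_f^2 f_k^2$; and the norm estimates $\sum_k f_k g_k^2\le\|g\|_\infty\langle f,g\rangle\le\|f\|_1\|g\|_1^2$, etc., are the right crude inequalities to make the $p_f\ge s/\|f\|_1$, $p_g\ge s/\|g\|_1$ substitutions land on $\Var{X}=\O{\|f\|_1^2\|g\|_1^2/s}$. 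The one thing to be explicit about, which you already flag, is that with $s$ \emph{literally} $1/\eps^2$ Chebyshev gives $\Var{X}/(\eps\|f\|_1\|g\|_1)^2=\Theta(1)$, not $\le 1/100$; the lemma should be read with $s=\Theta(1/\eps^2)$ and the constant absorbed (or with a constant-factor repetition). You are also right that the ``uniform samples, sampled with probability $p$'' phrasing should be read as independent per-update Bernoulli sampling; a fixed-size sample would only give negative association and the clean factorization $\Var{X}=\sum_k\Var{X_k}$ would have to be replaced by an inequality, though the conclusion would survive. Overall the proposal is the argument the cited source is expected to use.
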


\begin{lemma}
\lemlab{lem:ip:two}
\cite{NelsonNW12}
Given $f,g\in\mathbb{R}^n$, suppose $f'$ and $g'$ are vectors that satisfy
\[\|f'-f\|_\infty\le\eps\|f\|_1,\qquad \|g'-g\|_\infty\le\eps\|g\|_1.\]
Then $\langle f',g'\rangle-\langle f,g\rangle|\le12\eps\|f\|_1\|g\|_1$.
\end{lemma}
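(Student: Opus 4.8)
The plan is to expand both inner products, dispatch the first-order terms by H\"older's inequality, and control the single bilinear error term using the structure of the approximations $f'$ and $g'$. Set $a=f'-f$ and $b=g'-g$, so the hypotheses are exactly $\norm{a}_\infty\le\eps\norm{f}_1$ and $\norm{b}_\infty\le\eps\norm{g}_1$. Bilinearity gives
\[\langle f',g'\rangle-\langle f,g\rangle=\langle a,g\rangle+\langle f,b\rangle+\langle a,b\rangle.\]
The first two terms are handled immediately by the $\ell_\infty$--$\ell_1$ case of H\"older's inequality: $|\langle a,g\rangle|\le\norm{a}_\infty\norm{g}_1\le\eps\norm{f}_1\norm{g}_1$, and symmetrically $|\langle f,b\rangle|\le\norm{f}_1\norm{b}_\infty\le\eps\norm{f}_1\norm{g}_1$. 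Thus the entire difficulty lies in the bilinear error $\langle a,b\rangle$, which I expect to be the main obstacle: the $\ell_\infty$ hypotheses alone only give $|\langle a,b\rangle|\le\sum_i|a_i|\,|b_i|$, so bounding it requires exploiting more than the coordinatewise accuracy of $f'$ and $g'$.

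To control $\langle a,b\rangle$ I would use the structure of the vectors to which the lemma is applied: here $f'$ and $g'$ are the (rescaled) Misra--Gries estimates (\thmref{thm:misragries}, run on the sampled substreams of \lemref{lem:ip:one}) of the nonnegative vectors $f$ and $g$. Misra--Gries reports a list of size $\tfrac1\eps$, assigns each listed coordinate an estimate in $[f_i-\eps\norm{f}_1,\,f_i]$, and implicitly assigns $0$ to all other coordinates; in particular $0\le f'_i\le f_i$ for every $i$, and likewise $0\le g'_i\le g_i$. Hence $|a_i|=f_i-f'_i\le f_i$ coordinatewise, and combining this with $|b_i|\le\norm{b}_\infty\le\eps\norm{g}_1$ gives
\[|\langle a,b\rangle|\le\sum_i|a_i|\,|b_i|\le\eps\norm{g}_1\sum_i f_i=\eps\norm{f}_1\norm{g}_1,\]
using $f\ge 0$ so that $\sum_i f_i=\norm{f}_1$. (If one prefers to use only that $f'$ and $g'$ are $\tfrac1\eps$-sparse, split $\langle a,b\rangle$ over the at most $\tfrac2\eps$ coordinates lying in either support, where $|a_i||b_i|\le\eps^2\norm{f}_1\norm{g}_1$, and over the remaining coordinates, where $f'_i=g'_i=0$ forces $|f_i|=|a_i|\le\eps\norm{f}_1$ and $|g_i|=|b_i|\le\eps\norm{g}_1$; this route yields $|\langle a,b\rangle|\le3\eps\norm{f}_1\norm{g}_1$.)

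Combining the three bounds gives $|\langle f',g'\rangle-\langle f,g\rangle|\le 3\eps\norm{f}_1\norm{g}_1$, which is in particular at most $12\eps\norm{f}_1\norm{g}_1$; the constant $12$ in the statement is the one that falls out of the more general argument of \cite{NelsonNW12}, which does not assume one-sided errors and hence pays a few extra constant factors in the bilinear term. I expect the write-up to be short: the two first-order terms are pure H\"older, and the only genuinely delicate point is that $\langle a,b\rangle$ must be tamed by the support (equivalently, the one-sidedness) of the heavy-hitter estimates rather than merely their $\ell_\infty$ error, after which the bound is immediate.
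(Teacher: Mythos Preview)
The paper does not prove this lemma; it is imported verbatim from \cite{NelsonNW12} and used as a black box in deriving \corref{cor:ip}, so there is no in-paper argument to compare against.

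Your analysis is sound, and in fact you have put your finger on a real subtlety: the lemma as \emph{stated} is false. With only the $\ell_\infty$ hypotheses, the bilinear error $\langle a,b\rangle$ can be as large as $n\eps^2\|f\|_1\|g\|_1$ (take $f=g=(1,\dots,1)$ and perturb every coordinate by the allowed amount), which is not $O(\eps\|f\|_1\|g\|_1)$. The result in \cite{NelsonNW12} is stated for heavy-hitter sketches, where $f'$ and $g'$ are $O(1/\eps)$-sparse; that structural assumption is exactly what your second route uses, and your bound $|\langle a,b\rangle|\le 3\eps\|f\|_1\|g\|_1$ via the support split is correct. Your first route (one-sided Misra--Gries errors, $0\le f'_i\le f_i$) is also valid and slightly cleaner for the specific application in this paper. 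Either way you recover the claim with a better constant than $12$; the larger constant in the cited version comes from a more general setting with two-sided errors and extra slack. So your write-up is correct for the intended use, but you should flag explicitly that the sparsity (or one-sidedness) of $f',g'$ is an essential hypothesis omitted from the lemma statement here.
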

Combining \lemref{lem:ip:one} and \lemref{lem:ip:two}, we have the following:
\begin{restatable}{corollary}{corip}
\corlab{cor:ip}
There exists a white-box adversarially robust algorithm that uses space \[\O{\frac{1}{\eps}\left(\log n+\log\frac{1}{\eps}\right)+\log\log m}\] and outputs vectors $f',g'\in\mathbb{R}^n$ such that with probability at least $3/4$,
\[|\langle f',g'\rangle-\langle f,g\rangle|\le\eps\|f\|_1\|g\|_1.\]
\end{restatable}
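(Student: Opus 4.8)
\textbf{Proof proposal for \corref{cor:ip}.}
The plan is to run \algref{alg:bernmg:full} on each of the two streams, with accuracy parameter $\eps' = \eps/C$ for a suitable absolute constant $C$, and to output the two sparse vectors $f'$ and $g'$ obtained by zero-padding the lists of (item, estimated-frequency) pairs that the two instances return. Choosing the constant hidden in the internal parameter $\delta$ appropriately, we may assume each instance, together with all of its associated concentration events, succeeds with probability at least $7/8$, so a union bound over the two streams gives overall success probability at least $3/4$. Since \algref{alg:bernmg:full} keeps only a constant number of Morris counters together with Misra-Gries summaries run on Bernoulli-subsampled substreams, the total space is $\O{\frac{1}{\eps}\left(\log n + \log\frac{1}{\eps}\right) + \log\log m}$, exactly as in \thmref{thm:lone:hh}. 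It then remains to argue the output satisfies $|\langle f', g'\rangle - \langle f, g\rangle| \le \eps\|f\|_1\|g\|_1$.

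Fix one stream and let $p_f$ be the Bernoulli sampling rate of the currently active \bernmg instance; by construction $p_f \ge \Omega\!\left(\frac{\log(n/\delta)}{(\eps')^2 m_f}\right) \ge \frac{s}{m_f}$ with $s = 1/(\eps')^2$, so the rate is at least what \lemref{lem:ip:one} requires, and a larger rate only helps. Writing $\widehat f = p_f^{-1}\widetilde f$ for the rescaled Bernoulli sample of $f$ and $\widehat g$ likewise, \lemref{lem:ip:one} gives $|\langle\widehat f,\widehat g\rangle - \langle f,g\rangle| \le \eps'\|f\|_1\|g\|_1$. The Misra-Gries instance is run on the sampled substream, whose length is $\O{p_f m_f} = \poly(n, 1/\eps, \log m)$; by \thmref{thm:misragries} its output estimates each coordinate of $\widehat f$ up to additive error $O(\eps'\|\widehat f\|_1)$, and a Chernoff bound (folded into $\delta$) gives $\|\widehat f\|_1 \le 2\|f\|_1$, so $\|f' - \widehat f\|_\infty \le O(\eps')\|f\|_1$ and likewise $\|g' - \widehat g\|_\infty \le O(\eps')\|g\|_1$. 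Plugging these into \lemref{lem:ip:two} yields $|\langle f',g'\rangle - \langle\widehat f,\widehat g\rangle| \le O(\eps')\|f\|_1\|g\|_1$, and the triangle inequality combines this with the previous bound to give $|\langle f',g'\rangle - \langle f,g\rangle| \le \eps\|f\|_1\|g\|_1$ once $C$ is a large enough constant. That the active \bernmg instance may have started late and missed a prefix of at most $\frac{\eps'}{16}m_f$ updates only perturbs $f$ by a vector of $\ell_1$-mass $\le \frac{\eps'}{16}\|f\|_1$, adding a further $O(\eps')\|f\|_1\|g\|_1$ that is absorbed into $C$.

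The step I expect to be the main obstacle is white-box robustness: \lemref{lem:ip:one} as quoted is a statement about an oblivious stream, whereas here the adversary sees the whole state, including which updates have been sampled so far and every Morris-counter coin. The key point, exactly as in the extension of \thmref{thm:bern:robust} from the black-box to the white-box model, is that \algref{alg:bernmg:full} maintains no private randomness: each sampling decision uses a fresh coin that is independent of the adversary's view at the moment the corresponding update is produced. One therefore union-bounds, over all $m$ prefixes and over the relevant bad events (per-coordinate frequency deviation, $\ell_1$-norm deviation, and the inner-product deviation of \lemref{lem:ip:one}), the probability that the rescaled sample is a poor approximation at some prefix, and this bound is independent of the adversary's strategy. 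Running $\delta = \Theta(\eps/\log m)$ through this union bound, as \algref{alg:bernmg:full} already does for its heavy-hitter guarantee, makes all of the estimates above hold simultaneously with probability at least $3/4$. The remaining bookkeeping — estimating $m$ via Morris counters and keeping only two active guesses for the stream length — is already internal to \algref{alg:bernmg:full} and needs no modification.
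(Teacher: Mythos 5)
Your proposal is correct and follows the same route the paper intends: run \algref{alg:bernmg:full} on each stream with a constant-factor-smaller accuracy parameter, apply \lemref{lem:ip:one} to the rescaled Bernoulli samples and \lemref{lem:ip:two} to the Misra-Gries errors, and combine via the triangle inequality, reusing the space analysis and white-box robustness argument of \thmref{thm:lone:hh}. The paper states this corollary tersely as a direct consequence of the two lemmas; your write-up simply fills in the intermediate bookkeeping (rescaling, union bound, late-start prefix loss) with the right constants.
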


\subsection{Hierarchical Heavy Hitters}
We now give adversarially robust algorithms for an important generalization of the $L_1$-heavy hitters problem known as the hierarchical heavy hitters (HHH) problem. We first define the notion of a hierarchical heavy hitter.

\begin{definition}[Hierarchical Heavy Hitter \cite{cormode2003}]
Let $D$ be hierarchical domain of height $h$ over $[n]$. Let elements$(T)$ be the union of elements that are descendants of a set of prefixes $T$ of the domain hierarchy. Given a threshold $\eps$, we define the set of \textup{Hierarchical Heavy Hitters} inductively. $HHH_0$, the hierarchical heavy hitters at level zero, are simply the $\eps$-$L_1$ heavy hitters. Given a prefix $p$ at level $i$ of the hierarchy, define $F(p)$ as $\sum f(e) : e \in elements(\{p\}) \wedge e \not \in elements(\cup_{\ell= 0}^{i=1} HHH_{\ell})$. $HHH_i$ is the set of Hierarchical Heavy Hitters at level $i$, that is, the set $\{p | F(p) \ge \eps m \}$. The set of of Hierarchical Heavy Hitters, HHH, is $\cup_{i=0}^h HHH_i$.
\end{definition}

Hierarchical heavy hitters have numerous applications, ranging from real-time anomaly detection \cite{zhang2004} to DDoS detection \cite{sekar2006}. Therefore, they have been extensively studied \cite{cormode2003, estan2003, cormode2004, lin2007, cormode2008, truong2009, thaler2012, basat2017, basat2018, moraney2020}.

The problem we are interested in is to find all hierarchical heavy hitters, and their frequencies, in a data stream. However, the problem defined above cannot be be solved exactly over data streams in general. Therefore, the literature on HHHs on data streams focuses on the following approximate version of the problem.

\begin{definition}[HHH Problem]\deflab{def:HHH_Problem}
Given a data stream from a hierarchical domain $D$, a threshold $\gamma \in (0,1)$, and an error parameter $\eps \in (0, \gamma)$, the \textup{hierarchical Heavy Hitter Problem} is that of identifying prefixes $p \in D$, and estimates $f_p$ of their associated frequencies to satisfy the following conditions. 
\begin{enumerate}
    \item  \textup{accuracy}: $f_p^* - \eps m \le f_p \le f_p^*$, where $f_p^*$ is the true frequency of $p$.
    \item \textup{coverage}: All prefixes $q$ not identified as approximate HHHs have $\sum f_e^* : e \in elements(\{q\}) \wedge e \not \in elements(P) \le \gamma m$, for any supplied $\gamma \ge \eps$, where $P$ is the subset of $p$'s which are descendants of $q$.
\end{enumerate}
\end{definition}

The state of the art space bound on the HHH problem is $\O{h/\eps}$ words of space from \cite{thaler2012} where $h$ is the height of the domain $D$. Notably, their algorithm is \emph{deterministic} and hence robust against a white box adversary. In terms of bits, the total space used by the algorithm of \cite{thaler2012} is $\O{\frac{h}{\eps}(\log m + \log n)}$.

\begin{theorem}\thmlab{thm:hhh_solver}
Given threshold parameters $\eps$ and $\gamma \ge \eps$, there exists a deterministic one-pass streaming algorithm that uses $\O{\frac{h}{\eps}(\log m + \log n)}$ space on a stream of length $m$ and solves the HHH Problem according to \defref{def:HHH_Problem}.
\end{theorem}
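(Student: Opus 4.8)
The cleanest proof is to invoke \cite{thaler2012} directly: their algorithm is deterministic --- hence vacuously robust against any white-box adversary --- and attains exactly the stated space bound. For completeness I sketch that construction and isolate the one nonroutine ingredient. Throughout, recall that the hierarchical domain $D$ is fixed before the stream begins, so for each update $u_t\in[n]$ and each level $\ell\in\{0,1,\dots,h\}$ the level-$\ell$ ancestor $a_\ell(u_t)$ of $u_t$ is a fixed, streaming-computable function of $u_t$.

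\emph{Step 1 (per-level summaries).} For each level $\ell$, run an independent instance $\calA_\ell$ of the deterministic Misra--Gries algorithm of \thmref{thm:misragries}, with threshold $\Theta(\eps)$, on the projected stream $a_\ell(u_1),a_\ell(u_2),\dots$. This is a legitimate one-pass deterministic computation; by \thmref{thm:misragries} it uses $\O{\frac1\eps(\log m+\log n)}$ bits, and at the end of the stream it returns a list $L_\ell$ of $\O{1/\eps}$ level-$\ell$ prefixes with estimates $\widehat{f}_p$ that underestimate the true subtree frequency $f_p^*$ by at most $\Theta(\eps)\,m$, while every level-$\ell$ prefix outside $L_\ell$ has subtree frequency at most $\Theta(\eps)\,m$. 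Summing over the $h+1$ levels gives total space $\O{\frac{h}{\eps}(\log m+\log n)}$, and the construction uses no randomness.

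\emph{Step 2 (bottom-up extraction).} Process the levels in the order $\ell=0,1,\dots,h$, maintaining the set $\mathcal{H}$ of prefixes already declared HHHs. For a candidate $p\in L_\ell$, form a conditioned estimate $\widehat{F}(p)$ by subtracting from $\widehat{f}_p$ the already-computed conditioned estimates of the \emph{maximal} members of $\mathcal{H}$ lying strictly below $p$, so that each unit of frequency in the subtree of $p$ is removed at most once. Declare $p$ an HHH, add it to $\mathcal{H}$, and output $\bigl(p,\widehat{F}(p)\bigr)$ (with $\widehat{F}(p)$ rounded up by the accumulated error slack) iff $\widehat{F}(p)\ge\bigl(\eps-O(\text{slack})\bigr)m$. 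This post-processing touches only the $\O{h/\eps}$ stored prefixes and is deterministic.

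\emph{Step 3 (correctness, and the main obstacle).} The accuracy requirement $f_p^*-\eps m\le f_p\le f_p^*$ and the coverage requirement --- the conditioned frequency of any unreported prefix $q$, after discounting its reported descendants, is at most $\gamma m$ --- both reduce to tracking how the $\Theta(\eps m)$ Misra--Gries errors propagate through the subtractions of Step 2. This error accounting is the crux, and the reason earlier algorithms needed $\O{h/\eps^2}$ or $\O{h^2/\eps}$ words whereas \cite{thaler2012} needs only $\O{h/\eps}$: the subtractions must be organized (using \emph{conditioned}, not raw, descendant estimates, and only for maximal declared descendants) so that the accumulated additive error per reported prefix stays $O(\eps m)$ instead of scaling with the number of heavy descendants below it. I expect this to be the only part requiring real care; correctness of Step 1 is immediate from \thmref{thm:misragries}, the space bound is a direct sum, and determinism --- hence white-box robustness --- is by inspection. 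In a final write-up one could either reproduce the \cite{thaler2012} error analysis or simply cite it, noting that their algorithm is deterministic.
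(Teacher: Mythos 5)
The paper offers no proof of this theorem at all: it is a statement of the result of \cite{thaler2012}, preceded by the sentences ``The state of the art space bound on the HHH problem is $\O{h/\eps}$ words of space from \cite{thaler2012}\dots their algorithm is deterministic\dots In terms of bits, the total space is $\O{\frac{h}{\eps}(\log m+\log n)}$.'' Your approach --- invoke \cite{thaler2012} directly and note their algorithm is deterministic --- is exactly what the paper does, and your optional three-step sketch (per-level deterministic summaries, bottom-up extraction with conditioned subtractions, error accounting as the crux) is a reasonable high-level gloss on that cited algorithm, though the paper does not reproduce any of it.
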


In this section, we present randomized and adversarially robust algorithms which provide better guarantees than the deterministic algorithm of \cite{thaler2012}. We first note that the results of \cite{Ben-EliezerY20} imply the following result about solving the HHH Problem.

\begin{theorem}
\cite{Ben-EliezerY20}
\thmlab{thm:bern:robust:hhh}
There exists a constant $C>0$ such that for any $\eps,\delta\in(0,1/2)$, universe $n$, and stream length $m$, Bernoulli sampling each item of the stream with probability $p\ge\frac{C\log(n/\delta)}{\eps^2m}$ solves the HHH Problem in the white-box adversarial model. 
\end{theorem}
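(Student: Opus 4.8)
The plan is to reduce the HHH problem to the ordinary $\eps$-$L_1$-heavy-hitters guarantee of \thmref{thm:bern:robust} by ``lifting'' the stream onto the nodes of the hierarchy. Given the hierarchical domain $D$ of height $h$ over $[n]$, I would replace each update $u_t\in[n]$ by the $h+1$ virtual updates $a_0(u_t),a_1(u_t),\ldots,a_h(u_t)$, where $a_i(u_t)$ is the level-$i$ ancestor of $u_t$ in $D$. The lifted stream has length $(h+1)m$ over the universe of the at most $(h+1)n$ prefixes, and the frequency of a prefix $v$ in it is exactly its \emph{raw count} $R(v):=\sum_{e\in\mathrm{elements}(\{v\})}f_e$. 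The key structural fact is that the HHH output is entirely a function of the raw counts: by the inductive definition, whether $v\in HHH_i$ and its reported residual frequency depend only on $R(v)$ and on the already-decided contributions of descendants of $v$ in $\bigcup_{\ell<i}HHH_\ell$, and $F(v)$ is obtained from $R(v)$ by subtracting those contributions over a collection of disjoint sub-prefixes. Hence it suffices to show that Bernoulli sampling at rate $p\ge\frac{C\log(n/\delta)}{\eps^2m}$ preserves, simultaneously at every point of the stream, all raw counts $R(v)$ up to additive error $\O{\eps m}$ and $\|f\|_1=m$ up to a constant factor.

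For that I would invoke \thmref{thm:bern:robust} applied to the lifted stream: its conclusion is precisely that the rescaled sample is an $\eps\|f\|_1$-additive approximation of the current frequency vector at all times, and, as remarked after \thmref{thm:bern:robust}, the analysis of \cite{Ben-EliezerY20} transfers to the white-box setting because the sampling coins are drawn fresh at each step and independently of the adversary's entire view. The only adjustment is that the union bound now ranges over the $\le(h+1)n\le n^2$ prefixes rather than the $n$ ground elements; since $\log((h+1)n/\delta)=\O{\log(n/\delta)}$, enlarging the absolute constant $C$ by a constant factor compensates. Crucially, the lifting is just a deterministic re-indexing of the same sampled multiset, so the white-box adversary in the HHH game coincides with the one in the heavy-hitters game of \thmref{thm:bern:robust}, and robustness is inherited without change.

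It then remains to feed the rescaled sample into the deterministic HHH algorithm of \thmref{thm:hhh_solver} run with a slightly sharpened parameter $\eps'=\Theta(\eps)$. Its output meets the accuracy and coverage requirements of \defref{def:HHH_Problem} with respect to the sampled frequency vector; combining this with the raw-count approximation above propagates both requirements to the true frequency vector, up to a constant-factor loss in $\eps$ and $\gamma$. Note that the coverage requirement is quantified over \emph{all} prefixes not reported, which is exactly why the sampling guarantee must hold for every prefix and not merely for the heavy ones.

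The step I expect to be the main obstacle is the error accounting for the recursively defined residuals $F(v)$: since $\bigcup_{\ell<i}HHH_\ell$ computed from the sample may differ from the one computed from the true stream, a crude estimate would let the additive error in $F(v)$ accumulate by $\eps m$ per level, i.e., up to $\Theta(h\eps m)$. One must use that, for a fixed $v$, the sub-prefixes whose reported frequencies are subtracted off partition a subset of the leaves below $v$, so each ground element's sampling error is charged at most once and the error in any single $R(v)$ — hence in $F(v)$ — stays $\O{\eps m}$; then the slack between the accuracy threshold $\eps$ and the coverage threshold $\gamma$ in \defref{def:HHH_Problem} absorbs the remaining ambiguity about which prefixes are declared. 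If this cancellation cannot be pushed through for the most general hierarchy, the fallback is to sample at rate $\frac{C\log(n/\delta)}{(\eps/h)^2m}$, which still matches the claimed bound up to the dependence on $h$ (and $h=\O{\log n}$ in the dyadic case that motivates the problem).
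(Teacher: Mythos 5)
Your plan reaches for the same underlying fact as the paper---simultaneous concentration of the \emph{subtree (raw) counts} $R(v)$ for all $O(n)$ prefixes---but it takes a roundabout route that, as written, does not quite deliver the claimed sampling rate. The paper's proof is a one-liner: it invokes Theorem~1.2 of \cite{Ben-EliezerY20}, which already preserves the density of every set in a fixed \emph{family of ranges} (paying only $\log|\mathcal{R}|$), and simply sets $\mathcal{R}$ to be the $O(n)$ prefixes of $D$. That directly gives $\pm\eps m$ additive control of every $R(v)$ at rate $p\ge C\log(n/\delta)/(\eps^2 m)$, with no lifting of the stream at all. By contrast, your step ``invoke \thmref{thm:bern:robust} applied to the lifted stream'' is problematic if taken literally: the lifted stream has $\|f\|_1=(h+1)m$, so the $\eps\|f\|_1$ additive guarantee of \thmref{thm:bern:robust} degrades to $\eps(h+1)m$; compensating by running with $\eps'=\eps/(h+1)$ inflates the required sampling rate by a factor of $\Theta(h)$, which is exactly the fallback you flag at the end. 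Your alternative reading (``the lifting is just a deterministic re-indexing of the same sampled multiset'') is essentially the per-range concentration plus union bound over $O(n)$ prefixes---but that is no longer an application of \thmref{thm:bern:robust} (which is stated for individual-coordinate heavy hitters); it \emph{is} the range-sampling theorem of \cite{Ben-EliezerY20}, which you should cite directly, as the paper does. Finally, the extended discussion about error propagating through the recursive residuals $F(v)$ is correctly resolved by your own observation that everything reduces to raw counts (each leaf's sampling error is charged once per $R(v)$), but it is not needed at this level of the argument: once all $R(v)$ are $\pm\eps m$-accurate, feeding the sample into the deterministic algorithm of \thmref{thm:hhh_solver} with a sharpened threshold handles the rest, which is where that accounting lives.
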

\begin{proof}
The proof follows from letting the set of ranges $\mathcal{R}$ be equal to the prefixes $p \in D$ in Theorem $1.2$ of \cite{Ben-EliezerY20}. The size of $\mathcal{R}$ is $\O{n}$ since $D$ is a hierarchical domain (tree) over $[n]$. 
\end{proof}

We now present the analogous versions of \algref{alg:bern:mg} and \algref{alg:bernmg:full} for the HHH problem. The qualitative difference is that we can substitute calling $\calA$ of Misra-Gries with the appropriate algorithm for HHH from \cite{thaler2012}. 

\begin{algorithm}[!htb]
\caption{$\bernhhh(n,m,\eps,\delta)$}
\alglab{alg:bern:hhh}
\begin{algorithmic}[1]
\Require{Universe size $n$, upper bound $m$ on the stream length, parameters $\eps, \phi$, failure probability $\delta$, and a stream of updates $u_1,u_2,\ldots$, where each $u_i\in[n]$ represents a single update to a coordinate of the underlying vector $f$}
\Ensure{HHHs of the stream according to \defref{def:HHH_Problem}}
\State{Initialize an instance $\calA$ of HHH algorithm from \cite{thaler2012} with threshold $\frac{\eps}{2}$.}
\For{each update $u_t$ with $t\in[m]$}
\State{With probability $\O{\frac{\log(n/\delta)}{\eps^2 m}}$, update $\calA$ with $u_t$}
\EndFor
\State{\Return the output of $\calA$}
\end{algorithmic}
\end{algorithm}

\begin{algorithm}[!htb]
\caption{Adversarially robust algorithm for the hierarchical heavy hitters problem}
\alglab{alg:bernhhh:full}
\begin{algorithmic}[1]
\Require{Universe size $n$, accuracy $\eps$, and a stream of updates $u_1,u_2,\ldots$, where each $u_i\in[n]$ represents a single update to a coordinate of the underlying vector $f$}
\Ensure{Solution to HHH Problem according to \defref{def:HHH_Problem}}
\State{Run a Morris counter that outputs a $(1+\O{\eps})$-approximation $\widehat{t}$ to the number of stream updates $t\in[m]$.}
\State{$c\gets 0$, $r\gets 2$, $\delta\gets\O{\frac{\eps}{\log m}}$}
\For{$i\in[r]$}
\State{Initialize an instance $\calA_i$ of $\bernhhh(n,(16/\eps)^i,\eps/2,\delta)$}
\EndFor
\For{each update $u_t$ with $t\in[m]$}
\State{Update all instances of $\calA_i$}
\If{$\widehat{t}\ge(16/\eps)^c$}
\State{Delete $\calA_c$}
\State{$c\gets c+1$}
\State{Initialize an instance $\calA_c$ of $\bernhhh(n,(16/\eps)^{c+1},\eps/2,\delta)$}
\EndIf
\State{\Return the output of $\calA_c$}
\EndFor
\end{algorithmic}
\end{algorithm}

\begin{lemma}
\lemlab{lem:bern:hhh:correct}
With probability at least $1-\delta$, \algref{alg:bern:hhh} solves the HHH problem of \defref{def:HHH_Problem}.
\end{lemma}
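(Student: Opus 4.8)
The plan is to establish the lemma by composing the two facts already in hand: \thmref{thm:bern:robust:hhh}, which says that Bernoulli sampling at the rate used in \algref{alg:bern:hhh} is robust against a white-box adversary and, with probability at least $1-\delta$, produces a substream $S$ whose $p^{-1}$-rescaled prefix counts approximate the true prefix counts well enough to preserve the hierarchical heavy-hitter structure; and \thmref{thm:hhh_solver}, the \emph{deterministic} algorithm of \cite{thaler2012}, which \algref{alg:bern:hhh} runs on $S$ with threshold $\tfrac{\eps}{2}$ and which therefore solves the HHH problem on $S$ with no failure probability of its own. Conceptually this is a split of the error budget: the sampling step is charged additive error $\tfrac{\eps}{2}m$ and the Thaler step is charged additive error $\tfrac{\eps}{2}m$ (after rescaling its $\tfrac{\eps}{2}|S|$ guarantee by $p^{-1}$ and using $p^{-1}|S| = (1+o(1))m$), so together they meet the accuracy bound $\le \eps m$ and the coverage bound of \defref{def:HHH_Problem}.

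Concretely I would proceed as follows. (1) Condition on the good event $\calE$ of \thmref{thm:bern:robust:hhh}, instantiated with error parameter $\tfrac{\eps}{2}$ in place of $\eps$ (this only changes the constant hidden in the sampling probability of \algref{alg:bern:hhh}) and failure probability $\delta$; since the sampler's coins are the only randomness the algorithm uses, $\calE$ failing is the only way the algorithm fails, so $\calE$ having probability at least $1-\delta$ suffices. (2) On $\calE$, record that for every prefix $q$ of the domain hierarchy one has $|p^{-1}f_q(S)-f_q^*|\le\tfrac{\eps}{2}m$, and that $p^{-1}|S| = (1+o(1))m$ by a Chernoff bound on the number of samples, absorbed into constants. (3) Apply \thmref{thm:hhh_solver} to $S$ with threshold $\tfrac{\eps}{2}$ and the supplied $\gamma$: it deterministically returns a set $\widehat P$ of prefixes together with frequency estimates satisfying, \emph{on} $S$, the accuracy clause $f_q^*(S)-\tfrac{\eps}{2}|S|\le\widehat f_q\le f_q^*(S)$ for $q\in\widehat P$ and the coverage clause that every unreported prefix has residual mass (with respect to $\widehat P$) at most $\gamma|S|$. (4) Output $\widehat P$ together with the rescaled estimates $p^{-1}\widehat f_q$, shifted down by $\tfrac{\eps}{2}m$ if one wants the one-sided form of \defref{def:HHH_Problem}. (5) Verify accuracy and coverage \emph{on} $f$ by combining the per-prefix preservation bound of step (2) with the on-$S$ guarantees of step (3); white-box robustness is inherited from \thmref{thm:bern:robust:hhh} for the sampling and is automatic for the deterministic step.

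The main obstacle is the coverage clause in step (5). Unlike the $L_1$ heavy-hitters case, the quantity that must be bounded for an unreported prefix $q$ is its \emph{residual} frequency $f_q^* - \sum_{p\in M_q} f_p^*$, where $M_q$ is the set of maximal reported prefixes strictly below $q$; this is a combination of up to $|M_q|+1$ prefix frequencies, so blindly summing the per-prefix error of step (2) yields only $\bigl(|M_q|+1\bigr)\tfrac{\eps}{2}m$, which is too weak once $|M_q|$ can be as large as $\Theta(1/\eps)$. Getting around this requires exploiting the tree structure --- for instance that residuals are monotone down the hierarchy so that it suffices to check the maximal unreported prefixes, or treating the residual ground set $\mathrm{elems}(\{q\})\setminus\mathrm{elems}(M_q)$ as a range (or a bounded Boolean combination of subtrees) whose count the sample also preserves, possibly at a correspondingly larger sample size --- together with a careful check that the two $\tfrac{\eps}{2}m$ slacks genuinely add to $\eps m$. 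The other pieces --- the concentration of $|S|$ around $pm$, the $p^{-1}$ rescaling, and the accuracy clause, which involves only a single prefix count and hence propagates cleanly --- are routine.
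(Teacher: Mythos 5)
Your proposal takes essentially the same route as the paper: condition on the good event supplied by \thmref{thm:bern:robust:hhh} (Bernoulli sampling preserves the HHH structure, with probability $\ge 1-\delta$, and is white-box robust because the sampling coins are the only private randomness), then invoke the deterministic guarantee of \thmref{thm:hhh_solver} with threshold $\eps/2$ on the sampled substream, and rescale by $p^{-1}$. The paper's own proof is terser than yours --- it simply says that, conditioned on the good sampling event, every $\eps$-heavy prefix is still $\tfrac{7\eps}{8}$-heavy in the sample, and then appeals to correctness of the Thaler algorithm with threshold $\eps/2$ --- so your extra detail is welcome rather than a deviation.

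The one place you add genuine substance is the caveat about the coverage clause: you correctly observe that the residual frequency $f_q^* - \sum_{p\in M_q} f_p^*$ involves up to $\Theta(1/\eps)$ prefix counts, so naively summing per-prefix sampling errors does not close. In the paper this is implicitly pushed into \thmref{thm:bern:robust:hhh} itself: that theorem asserts Bernoulli sampling solves the full HHH problem (accuracy \emph{and} coverage) by invoking the range-preservation result of \cite{Ben-EliezerY20} with the range family taken to be the subtree sets $\mathrm{elements}(\{p\})$, $p\in D$. Your suggested fix --- treat the residual ground set as a bounded Boolean combination of subtree ranges, or observe that residual mass is monotone down the tree so only maximal unreported prefixes need checking --- is exactly the kind of argument needed to make that invocation rigorous, and the paper does not spell it out. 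So: same decomposition, same key ingredients, but you have located a step the paper leaves implicit and sketched a plausible way to fill it.
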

\begin{proof}
By the robustness of Bernoulli sampling on adversarial streams, i.e., \thmref{thm:bern:robust:hhh}, we have that with probability at least $1-\delta$, simultaneously for all $p \in D$ with $f_p \ge \eps m$,
the number of instances of elements in the stream belonging to $p$ sampled by the stream as input to the instance $\calA$ of the algorithm from \cite{thaler2012} is at least $\frac{7\eps}{8}\,m$. The proof follows by the correctness of \thmref{thm:hhh_solver} with threshold $\frac{\eps}{2}$.
\end{proof}

\begin{restatable}{theorem}{thmlonehhh}
\thmlab{thm:lone:hhh}
There exists an algorithm robust against white-box adversaries that reports all with probability at least $3/4$ and uses space $\O{\frac{h}{\eps}\left(\log n+\log\frac{1}{\eps} + \log\log\log m \right)+\log\log m}$. 
\end{restatable}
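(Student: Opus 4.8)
The plan is to run \algref{alg:bernhhh:full} and mirror the proof of \thmref{thm:lone:hh}, with the inner $\misragries$ instance replaced by the deterministic HHH algorithm of \thmref{thm:hhh_solver}. The conceptual starting point is that if the stream length $m$ were known in advance one could run $\bernhhh(n,m,\eps/2,\delta)$ directly: \lemref{lem:bern:hhh:correct} says this solves the HHH problem with probability $1-\delta$, and since the sub-stream it feeds to the algorithm of \cite{thaler2012} has length only $m'=\O{\eps^{-2}\log(n/\delta)}$ in expectation (and is concentrated around this), the inner algorithm runs in $\O{\frac h\eps(\log m'+\log n)}$ bits instead of $\O{\frac h\eps(\log m+\log n)}$. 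The obstacle is that $m$ is unknown, and that even maintaining the exact position $t$ in the stream costs $\Theta(\log m)$ bits.

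I would remove this obstacle exactly as in \algref{alg:bernhhh:full}: run a Morris counter, which by \lemref{lem:morris:robust} is white-box robust and, with probability at least $19/20$, maintains a $(1+\O\eps)$-approximation $\widehat t$ to the current position using only $\O{\log\log m}$ bits; and keep only two active copies of $\bernhhh$ at a time, whose length-guesses are consecutive powers of $16/\eps$, retiring the stale copy and starting a fresh copy at the next power up whenever $\widehat t$ crosses a threshold $(16/\eps)^c$. The structural claim to verify is that at every query time $t$, the copy whose output is returned was initialized at some time $t_0\le\frac\eps{16}t$, and its guess $M$ is a constant-factor estimate of the length $m'=t-t_0$ of the suffix it has actually seen; the latter ensures the sampling probability $\Theta\!\left(\log(n/\delta)/(\eps^2M)\right)$ is within a constant factor of $\Theta\!\left(\log(n/\delta)/(\eps^2 m')\right)$, so that \thmref{thm:bern:robust:hhh} applies to that suffix after absorbing the constant into the sampling rate.

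For correctness I would then show that discarding a prefix of at most $\frac\eps{16}m$ updates does not hurt the HHH guarantees. Writing $f'$ for the frequency vector of the observed suffix, we have $\|f'\|_1=m-t_0\ge(1-\frac\eps{16})\|f\|_1$ and $f'_e\ge f_e-t_0\ge f_e-\frac\eps{16}m$ for every $e$; propagating this through the level-by-level definition in \defref{def:HHH_Problem} shows that the accuracy and coverage conditions degrade by only an additive $\O{\eps m}$, so running the inner algorithm with threshold $\eps/2$ (as \algref{alg:bern:hhh} does) on such a suffix still yields a valid solution for parameter $\eps$ on the original stream. Combining this with \lemref{lem:bern:hhh:correct}, a union bound over the $\O{\log m}$ copies of $\bernhhh$ ever created --- each failing with probability at most $\delta=\O{\eps/\log m}$ --- and the failure probability of the Morris counter gives total failure probability at most $\O\eps+o(1)$, which is below $1/4$ for an appropriate choice of the hidden constants (and for $\eps$ not too large, in which regime the problem is easy regardless).

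For the space bound, each copy of $\bernhhh$ feeds the algorithm of \cite{thaler2012} a sub-stream of length $m'=\O{\eps^{-2}\log(n/\delta)}$; with $\delta=\Theta(\eps/\log m)$ this is $m'=\O{\eps^{-2}\big(\log n+\log\log m+\log\tfrac1\eps\big)}$, so $\log m'=\O{\log n+\log\tfrac1\eps+\log\log\log m}$ and the inner algorithm uses $\O{\frac h\eps\big(\log n+\log\tfrac1\eps+\log\log\log m\big)}$ bits. Two such copies together with the $\O{\log\log m}$-bit Morris counter give the claimed bound. I expect the main obstacle to be the correctness step of the previous paragraph: unlike the $L_1$-heavy-hitters case in \thmref{thm:lone:hh}, the HHH coverage condition involves residual masses after subtracting already-identified descendants, so one has to check carefully that dropping a $\frac\eps{16}m$-length prefix and then Bernoulli-subsampling perturbs every level of the hierarchy by only the budgeted amount, and that the threshold shift to $\eps/2$ leaves enough slack across all levels simultaneously.
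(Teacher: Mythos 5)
Your proposal matches the paper's proof essentially step for step: run \algref{alg:bernhhh:full}, use \lemref{lem:morris:robust} for an $\O{\log\log m}$-bit robust timer, keep two active $\bernhhh$ copies with length-guesses at consecutive powers of $16/\eps$, invoke \lemref{lem:bern:hhh:correct} (hence \thmref{thm:bern:robust:hhh}) for each copy, and absorb the at-most-$\frac{\eps}{16}m$ dropped prefix by running the inner algorithm of \thmref{thm:hhh_solver} at threshold $\eps/2$, with the same $\log m'=\O{\log\tfrac1\eps+\log\log n+\log\log\log m}$ bookkeeping that yields the $\log\log\log m$ term. The only small thing worth flagging is cosmetic: you write $\log m'=\O{\log n+\log\tfrac1\eps+\log\log\log m}$, which is a needlessly loose upper bound (it is really $\O{\log\log n+\log\tfrac1\eps+\log\log\log m}$) but causes no error since the $\log n$ term already appears in \thmref{thm:hhh_solver}; and you are somewhat more explicit than the paper about propagating the prefix-drop through the HHH coverage condition, which is a genuine (if small) gap the paper dispatches in a single sentence.
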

\begin{proof}
The proof of correctness follows similarly to the proof of  \thmref{thm:lone:hh} and \lemref{lem:bern:hhh:correct} by noting that any $p \in D$ with $f_p \ge  \eps m$ at a time $t \in [t_i, t_{i+1}]$ will also satisfy $f_p \ge 3\eps/4$ at time $t_i$.

We now analyze the space complexity of \algref{alg:bernhhh:full}. As stated in the proof of \thmref{thm:lone:hh}, the Morris counter uses space $\O{\log\log n+\log\frac{1}{\eps}+\log\log m}$. Furthermore, each instance of $\bernhhh$ uses space 
\[\O{\frac{h}{\eps}\left(\log n+\log\frac{1}{\eps}+\log\log\log  m\right)}.\] 
Since there are at most $r=2$ such instances of $\bernhhh$, the total space is 
\[\O{\frac{h}{\eps}\left(\log n+\log\frac{1}{\eps}+\log\log\log m\right)}.\] 
Combining with the space used by Morris counters gives us the total space bound of 
\[\O{\frac{h}{\eps}\left(\log n+\log\frac{1}{\eps}+\log\log\log m\right) + \log \log m}. \qedhere\] 
\end{proof}
\subsection{$L_0$ Estimation}
\seclab{sec:ub:stat:lzero}
We now provide a streaming algorithm for the $L_0$ estimation problem, where the goal is to estimate the number of nonzero coordinates at the end of the stream. From the lower bound of \thmref{thm:fp:lb}, we cannot hope to estimate the $L_0$ norm of $f$ arbitrarily well. Surprisingly, we can attain a multiplicative approximation of $n^{\eps}$ for arbitrarily small $\eps$ even in the turnstile setting if we assume a \emph{computationally bounded} adversary, similar to assumptions made in cryptography. Our model of a computationally bounded adversary will deal with the following Short Integer Solution (SIS) problem from lattice based cryptography.

\begin{definition}[Short Integer Solution (SIS) Problem]\deflab{def:SIS}
Let $n,m,q$ be integers and let $\beta >0$. Given a uniformly  random matrix $A \in \mathbb{Z}^{n \times m}_q$, the SIS problem is to find a nonzero integer vector $z \in \mathbb{Z}^m$ such that $Az=0 \bmod q$ and $\|z\|_2 \le \beta$.
\end{definition}

Starting from Ajtai’s work \cite{ajtai96}, it is known that the SIS problem enjoys an average-case to worst-case hardness. That is, for some appropriate parameter settings, solving the SIS problem is at least as hard as approximating several fundamental lattice based cryptography problems in the worst case. 

\begin{theorem}\thmlab{thm:SIS_hardness} \cite{micciancio13}
Let $n$ and $m= \poly(n)$ be integers, let $\beta  \ge \beta_{\infty}  \ge 1$ be reals, let $Z= \{z \in \mathbb{Z}^m: \|z \|_2 \le \beta \text{ and } \|z\|_{\infty} \le \beta_{\infty}\}$, and let $q \ge \beta \cdot n^{\delta}$ for some constant $\delta > 0$.  Then solving SIS on average with non-negligible probability, and with parameters $n,m,q$ and solution set $Z\setminus \{0\}$, is  at least as hard as approximating lattice problems in the worst case on $n$-dimensional lattices to within a factor of $\gamma= \max\{1, \beta \cdot \beta_{\infty}/q\} \cdot \tO{\beta \sqrt{n}}$.
\end{theorem}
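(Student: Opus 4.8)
The plan is to reproduce the standard worst-case to average-case reduction for SIS, originating with Ajtai~\cite{ajtai96} and progressively sharpened by Micciancio--Regev, Gentry--Peikert--Vaikuntanathan, and Micciancio--Peikert~\cite{micciancio13}. Concretely, I would show that an oracle solving SIS on the average --- i.e.\ for a uniformly random $A \in \mathbb{Z}^{n\times m}_q$ it returns $z \in Z\setminus\{0\}$ with $Az \equiv 0 \pmod q$ --- can be turned into an algorithm that approximates SIVP (equivalently GapSVP, through the standard transference equivalences) on an arbitrary $n$-dimensional lattice $\Lambda$ to within the claimed factor $\gamma = \max\{1,\beta\beta_{\infty}/q\}\cdot\tO{\beta\sqrt n}$.

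The core is a \emph{sample-and-reduce} primitive. Given an arbitrary (possibly poor) basis $B$ of $\Lambda$ and a Gaussian width $s$ exceeding a small multiple of the smoothing parameter $\eta_{\eps}(q\Lambda)$, draw $m = \poly(n)$ independent samples $v_1,\dots,v_m \sim D_{\Lambda,s}$ using the GPV discrete-Gaussian sampler (after an LLL-type preprocessing of $B$ so that an affordable $s$ suffices). Put $a_i := B^{-1}v_i \bmod q \in \mathbb{Z}^n_q$ and $A := [a_1 \mid \cdots \mid a_m]$; the condition $s > \eta_{\eps}(q\Lambda)$ makes the $a_i$ statistically within a negligible distance of uniform on $\mathbb{Z}^n_q$, so the oracle succeeds and returns $z$. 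From $\sum_i z_i a_i \equiv 0 \pmod q$ we get $B^{-1}\big(\tfrac1q\sum_i z_i v_i\big)\in\mathbb{Z}^n$, so the vector $w := \tfrac1q\sum_i z_i v_i$ lies in $\Lambda$. Since the $v_i$ behave like independent spherical discrete Gaussians of width $s$, Banaszczyk's tail bound gives, with high probability, $\|w\|_2 \lesssim \beta s\sqrt n/q$, with an accompanying bound reflecting $\|z\|_{\infty}\le\beta_{\infty}$; these two estimates are exactly where the $\tO{\beta\sqrt n}$ and the $\max\{1,\beta\beta_{\infty}/q\}$ factors of $\gamma$ originate.

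With the primitive in hand the reduction is iterative: start from $s \approx 2^n\lambda_n(\Lambda)$ (read off any basis), repeatedly apply it, each round producing a lattice vector shorter by a factor of roughly $q/(\beta\sqrt n)$, until $s$ can no longer be pushed below $\gamma\cdot\lambda_n(\Lambda)$; harvesting $n$ linearly independent outputs (restarting the sampler whenever a fresh output lands in the span of those already collected) then solves approximate SIVP within factor $\gamma$. The step I expect to be the main obstacle --- and the genuinely delicate part of the argument --- is proving that the returned combination is \emph{nontrivial}: that $w \ne 0$, and more strongly that $w$ escapes the sublattice already spanned, with non-negligible probability. The oracle sees only $A$, hence only the cosets $v_i \bmod q\Lambda$; conditioned on those cosets each $v_i$ is still a discrete Gaussian over a coset of $q\Lambda$, and because $s > \eta_{\eps}(q\Lambda)$ this conditional law retains enough min-entropy (spread in every direction) that no fixed nonzero $z \in Z$ can force $\sum_i z_i v_i$ into $q\cdot(\text{already-spanned sublattice})$ except with small probability. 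Making this entropy/anti-concentration argument quantitative --- controlling the conditional Gaussian, absorbing the dependence introduced by reusing samples across rounds, and tuning $\eps$, $s$, and $m$ so that all the negligible errors and multiplicative losses compose to exactly $\gamma$ --- is the bulk of the work; the rest is standard lattice machinery (smoothing-parameter bounds, Banaszczyk tail inequalities, and the GapSVP/SIVP transference equivalences).
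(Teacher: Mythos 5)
This theorem is not proved in the paper at all: it is imported verbatim from \cite{micciancio13} and used as a black-box hardness assumption, so there is no ``paper's own proof'' to compare against. Your sketch is a faithful high-level reconstruction of the Micciancio--Peikert style worst-case to average-case reduction that does appear in that reference: discrete-Gaussian sampling above the smoothing parameter of $q\Lambda$ to generate a statistically uniform $A$, recombining the oracle's short $z$ into a lattice vector $w = \tfrac{1}{q}\sum_i z_i v_i \in \Lambda$, bounding $\|w\|$ via Banaszczyk so that the two pieces $\tO{\beta\sqrt n}$ and $\max\{1,\beta\beta_\infty/q\}$ emerge, and iterating to drive down $s$ to reach approximate SIVP. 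You also correctly flag the genuinely delicate step --- arguing that $w$ is nonzero and in fact escapes the already-spanned sublattice with noticeable probability, using the residual conditional Gaussian (given $v_i \bmod q\Lambda$) and its anti-concentration. Since this is a citation rather than a claim the paper establishes, no further reconciliation is needed; for a complete write-up you would want to be explicit about the SIVP$\leftrightarrow$GapSVP transference constants and the bookkeeping that makes all per-round losses multiply out to exactly $\gamma$, but as an outline it matches the cited source's strategy.
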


In the cryptography literature, lattice-based cryptography schemes are designed for any $\gamma$ smaller than $2^{o(n \log \log n / \log n)}$ and the best approximation currently known is for $\gamma = 2^{\O{n \log \log n / \log n}}$ via the LLL algorithm \cite{vinod15}. Improving the approximation factor to any asymptotically smaller $\gamma$ would be a major breakthrough in cryptography. Therefore, our computational assumption is the following, which implies breaking any of our algorithms would require a major cryptographic breakthrough:

\begin{assumption}\assumlab{assumption:lattice}
No polynomial-time adversary can approximate worst-case $n$-dimensional lattice problems within a $\gamma = 2^{o(n (\log \log n) / \log n)}$ factor.
\end{assumption}

In the $L_0$ streaming algorithm, we will only rely on hardness for much smaller values of $\gamma$. Our algorithm for the $L_0$ estimation problem in data streams is the following. It first considers a partition of $[n] $ into $n^{1-\eps}$ consecutive chunks each of $n^{\eps}$ coordinates. It then keeps track of $n^{1-\eps}$ vectors, one for each chunk, by multiplying the corresponding update with a sketching matrix derived from the SIS problem. Our final estimate is the number of our $n^{1-\eps}$ sketches which are nonzero when the stream ends. Note that we use the same sketching matrix $A$ on each chunk, as described below. 

\begin{algorithm}[!htb]
\caption{$\lzero(n,m,\eps)$}
\alglab{alg:l0:ub}
\begin{algorithmic}[1]
\Require{Universe size $n$, accuracy $\eps$, and a stream of updates $u_1,u_2,\ldots$, where each $u_i\in[n]$ represents a single update to a coordinate of the underlying vector $f$, and each $u_i$ is an integer}
\Ensure{$n^\eps$-multiplicative estimation of $L_0$ of $f$}
\State Consider $A \in \mathbb{Z}^{n^{c \eps} \times n^{\eps}}_q$ is a uniformly random matrix for $q = \poly(n)$ and any $1/2 > c > 0$
\State Keep track of $n^{1-\eps}$ vectors of length $n^{c\eps}$, initially all $0$ and each associated with a specific consecutive chunk of $n^{\eps}$ coordinates of $[n]$
\For{each update $u_t$ with $t\in[m]$}
\State Update the sketch vector associated with the $i$-th chunk by adding $u_t \cdot A_k$ to it, where $A_k$ is the $k$-th column of $A$, and where the stream update changes the $k$-th coordinate of the $i$-th chunk by an additive amount $u_t \in \mathbb{Z}$
\EndFor
\State{\Return the number of vectors that are nonzero}
\end{algorithmic}
\end{algorithm}

We now claim that if the final frequency vector $f$ satisfies $\|f\|_{\infty} \le \poly(n)$, then we can achieve an $n^{\eps}$ multiplicative approximation for the $L_0$ estimation problem. Furthermore, we can achieve improved sublinear space if we are working in the random oracle model of cryptography, which was introduced in the pioneering work of Bellare and Rogaway \cite{bellare93}.
In the random oracle model, we assume a publicly accessible random function which can be accessed to us \emph{and} the adversary. Each query gives a uniform random value from some output domain and repeated queries give consistent answers. The random oracle model is a well-studied model and has been used to design numerous cryptosystems \cite{bellare93, bellare96, canetti2004, koblitz2015}. In practice, one can use SHA256 as the random oracle. 

\thmlzeroub*

We remark that we only require $\gamma = \poly(n)$ for the application of \thmref{thm:SIS_hardness} to \thmref{thm:lzero:ub}. Furthermore, the algorithm also works for \textbf{turnstile} streams where the stream updates are allowed to be positive and negative. This is because we only require the final frequency vector $f$ to satisfy $\|f\|_{\infty} \le \poly(n)$ in \thmref{thm:SIS_hardness}; the signs of the entries in $f$ do not matter.

\begin{proof}
By \thmref{thm:SIS_hardness} and \assumref{assumption:lattice}, we know that the adversary cannot find any nonzero vector $x$ such that $Ax = 0$ and $\|x\|_{\infty} \le \poly(n)$. Thus if the vectors we track using $A$ equal $0$, we know that none of the coordinates in that chunk have a positive $f_i$ value at the end. Similarly if the vector is nonzero, we know that there is at least one (and at most $n^{\eps}$) coordinates associated with that chunk that have nonzero frequency value at the end. Thus on each chunk of coordinates, we make multiplicative error at most  $n^{\eps}$ and the theorem follows.

For the space bound, note that we can generate the appropriate column of $A$ on the fly via access to the random oracle (or we can store $A$ explicitly if we do not use the random oracle model). Thus the only space used is to keep track of the $n^{1-\eps}$ vectors of size $n^{c\eps}$, each associated with an $n^{\eps}$ chunk of coordinates of $[n]$. 
\end{proof}
We remark that \algref{alg:l0:ub} works for \textbf{turnstile} streams where the stream updates are allowed to be positive and negative because we only require the final frequency vector $f$ to satisfy $\|f\|_{\infty} \le \poly(n)$ in \thmref{thm:SIS_hardness} and the sign of the entries in $f$ does not matter.

%\begin{remark}
%Note that \algref{alg:l0:ub} works for \textbf{turnstile} streams where the stream updates are allowed to be positive and negative. This is because we only require the final frequency vector $f$ to satisfy $\|f\|_{\infty} \le \poly(n)$ in \thmref{thm:SIS_hardness} and the sign of the entries in $f$ does not matter.
%\end{remark}

\subsection{Graph Algorithms}
In this section, we consider graph algorithms in the white-box adversarial streaming model. 
We first consider the vertex neighborhood identification problem, where an underlying graph $G=(V,E)$ is defined through a sequence of updates $u_1,\ldots,u_n$. 
In the vertex arrival model, we have $|V|=n$ and each update $u_i$ corresponds to a vertex $v_i\in V$ as well as all vertices in $V$ incident to $v_i$. 
We say vertices $u,v\in V$ are incident if there exists an edge $(u,v)\in E$. 
We define the neighborhood of $v$ by the set of all vertices that are incident to $v$, i.e., $\mathcal{N}(v)=\{u\,|\,(u,v)\in E\}$. 
The goal of the vertex neighborhood identification problem is to identify vertices $u,v\in V$ such that $\mathcal{N}(u)=\mathcal{N}(v)$. 

\thmvertexid*
\begin{proof}
Since there are $n$ vertices, there can only be $n$ different neighborhoods. 
Note that each neighborhood can be represented as a binary vector of length $n$ where the $i$-th coordinate is $1$ if the $i$-th vertex is a neighbor and $0$ otherwise. 
We can thus use a collision-resistant hash function that maps each a vector into a universe of size $\poly(n)$ and with probability at least $3/4$, each distinct vertex neighborhood will be hashed to a different value. 
Hence it suffices to store the hash of the vertex neighborhood of each vertex and compare the hash values to see whether two vertices have the same neighborhood. 
Since there are $n$ vertices and each hash value uses space $\O{\log n}$, then the total space is $\O{n\log n}$ bits of space.
\end{proof}

We remark that \thmref{thm:vertex:id} is tight, given the following one-way randomized communication complexity:

\begin{theorem}[Theorem E.1 in \cite{MolinaroWY15}]
Suppose Alice is given strings $a_1,\ldots,a_n\in\{0,1\}^k$ and Bob is given strings $b_1,\ldots b_n\in\{0,1\}^k$ along with an index $i\in[n]$. 
Then Alice must send $\Omega(n\log k)$ bits of communication for Bob to determine whether $a_i=b_i$ with probability at least $3/4$. 
\end{theorem}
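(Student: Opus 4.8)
The plan is to deduce the bound from a reduction to the one-way Indexing problem over a large alphabet, whose $\Omega(n\log k)$ cost is classical. In Indexing over alphabet $[k]$, Alice holds $x=(x_1,\dots,x_n)\in[k]^n$, Bob holds $j\in[n]$ and must output $x_j$ with constant success probability. The standard proof is information-theoretic: Fano's inequality gives $I(M;x_\ell)\ge(1-o(1))\log_2 k - O(1)$ for each $\ell$ (Bob guesses $x_\ell$ with constant advantage from the message restricted to $\ell$'s view), and since the $x_\ell$ are independent, superadditivity of mutual information yields $|M|\ge H(M)\ge I(M;x)\ge\sum_{\ell=1}^n I(M;x_\ell)=\Omega(n\log k)$. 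I would embed this into equality-testing by having Alice encode each symbol $x_\ell$ as the string $a_\ell$ (say the $\lceil\log k\rceil$-bit binary encoding of $x_\ell$, padded out to $\{0,1\}^k$), so the $a_\ell$ range over a fixed set of $k$ strings; Bob, holding $j$, recovers $x_j$ by posing, for each $t\in[k]$, the equality query with index $i:=j$ and $b_i:=\mathrm{enc}(t)$, and returning the unique $t$ whose answer is ``equal.''

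\textbf{Key step.} The heart of the argument is that a single message produced by the purported equality-testing protocol already answers all $k$ of these sibling queries simultaneously and correctly, so that \emph{no error amplification is needed} in the recovery of $x_j$. This is exactly the point where the ``with probability at least $3/4$'' guarantee must be read in the form in which this result is actually used (a ``one message / all queries'' sketching-style guarantee): once the message is drawn, with probability $\ge 3/4$ over its randomness it is correct on \emph{every} query $(i,b_i)$ Bob might pose. Under that reading, a good realization of the message must pin each $a_j$ down uniquely among the $k$ candidate encodings --- otherwise some confusable $b_j\neq a_j$ would be answered ``equal'' --- so Bob recovers $x_j$ from the one message with no loss, and the Indexing lower bound transfers verbatim to give $\Omega(n\log k)$. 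Equivalently, one can run the entropy argument directly on the equality problem: conditioned on the message $M$ and on the index $i$, Bob's decision rule induces a ``would answer equal'' region $R\subseteq\{0,1\}^k$; correctness on equal instances forces $a_i\in R$, while correctness against a near-miss distribution on $b_i$ forces $R$ to meet the support of the $a_i$'s in $O(1)$ points, so $M$ reveals $\Omega(\log k)$ bits about each of the $n$ independent coordinates $a_i$, hence $|M|=\Omega(n\log k)$.

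\textbf{Main obstacle.} The crux is precisely the quantifier calibration just flagged. Under the weakest reading --- one equality query against a worst-case $b_i$, with the $3/4$ measured per query --- one obtains only $\Omega(n)$, because a two-bit hash of each $a_\ell$ already answers any single equality query correctly with probability $3/4$; and routing through Indexing while amplifying each per-query answer to reconstruct $x_j$ would inflate the message by a $\Theta(\log k)$ factor, exactly cancelling the gain. So the real work is to (i) isolate the ``single message, all queries correct'' guarantee that is in force in the setting where this theorem is applied (namely when the equality tester is to be used as a robust sketch across an arbitrary sequence of queries), and (ii) show this guarantee genuinely forces the per-coordinate portion of the message to be injective on a domain of size $\Omega(k)$. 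The remaining pieces --- the Fano and chain-rule bookkeeping, and the observation that Bob's possession of the other strings $b_\ell$ cannot help since the $a_\ell$ are independent --- are routine.
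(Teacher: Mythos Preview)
The paper does not prove this statement; it is quoted verbatim as Theorem~E.1 of \cite{MolinaroWY15} and invoked only as a black box to derive the $\Omega(n\log n)$ corollary for vertex-neighborhood identification. There is therefore no in-paper argument to compare your proposal against.

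That said, your diagnosis of the quantifier issue is exactly the right one and is the crux of the matter. Under the plain per-instance reading of the statement (a single fixed $(i,b_1,\dots,b_n)$, success $3/4$ over the protocol's coins), the bound is simply false: sending a $2$-bit public-coin hash of each $a_\ell$ uses $O(n)$ bits and answers the one equality query with probability $3/4$, precisely as you observe. The $\Omega(n\log k)$ bound in \cite{MolinaroWY15} is proved for the stronger sketching/``for-all'' guarantee in which a single message must remain correct against an adversarially chosen query after the message is fixed --- which is also what the application here needs, since the streaming algorithm must report \emph{all} identical-neighborhood pairs simultaneously from one state. Under that reading your reduction to alphabet-$[k]$ Indexing (or the equivalent direct entropy argument) is the standard route and goes through. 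One small calibration: if you let the adversarial $b_i$ range over all of $\{0,1\}^k$, your own injectivity argument actually yields $\Omega(nk)$, not $\Omega(n\log k)$; the $\log k$ bound corresponds to the adversary choosing $b_i$ from a size-$k$ family of candidate encodings, which is exactly the Indexing embedding you describe, so make sure the hard distribution you fix matches that granularity.
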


\begin{corollary}
Any randomized algorithm that with probability at least $3/4$, simultaneously reports all vertices with identical neighborhoods must use $\Omega(n\log n)$ bits of space.
\end{corollary}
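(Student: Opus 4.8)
The plan is to reduce the one-way randomized communication problem stated just above to the streaming problem. I would instantiate that theorem with $n$ strings of length $k=n$, so its lower bound reads $\Omega(n\log n)$; the graph I build carries only $2n+O(\log n)$ vertices, so after rescaling this yields the claimed $\Omega(n\log n)$ bound in terms of the number of vertices.

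\textbf{Construction.} Given Alice's strings $a_1,\dots,a_n\in\{0,1\}^n$, Bob's strings $b_1,\dots,b_n\in\{0,1\}^n$, and Bob's index $i\in[n]$, build a vertex-arrival stream over a vertex set consisting of $n$ ``row'' vertices $x_1,\dots,x_n$, one ``query'' vertex $y$, $n$ ``coordinate'' vertices $c_1,\dots,c_n$, and $\lceil\log_2 n\rceil+1$ ``index'' vertices $d_0,\dots,d_{\lceil\log_2 n\rceil}$. Alice emits each $x_j$ with neighbors: $c_\ell$ for every $\ell$ with $(a_j)_\ell=1$; the vertex $d_0$; and $d_t$ for every $t\ge 1$ such that bit $t$ of the binary encoding of $j$ is $1$. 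Thus $\mathcal{N}(x_j)$ encodes the pair $(a_j,j)$, and $\mathcal{N}(x_j)=\mathcal{N}(x_{j'})$ forces $j=j'$. Bob then emits $y$ with neighbors: $c_\ell$ for every $\ell$ with $(b_i)_\ell=1$; the vertex $d_0$; and $d_t$ for every $t\ge 1$ such that bit $t$ of $i$ is $1$. By construction $\mathcal{N}(y)=\mathcal{N}(x_j)$ if and only if $j=i$ and $a_i=b_i$. Moreover $\mathcal{N}(y)$ is nonempty (it contains $d_0$) and is contained in $\{c_1,\dots,c_n,d_0,\dots,d_{\lceil\log_2 n\rceil}\}$, which is disjoint from the ground set of every $\mathcal{N}(c_\ell)$ and $\mathcal{N}(d_t)$; hence $y$ can never share its neighborhood with any coordinate or index vertex.

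\textbf{Protocol and conclusion.} Alice and Bob share, as public randomness, the internal randomness of the streaming algorithm. Alice runs the algorithm on the updates for $x_1,\dots,x_n$ and sends its $S$-bit memory state to Bob, where $S$ is the space used. Bob runs the algorithm on the update for $y$, lets it terminate, and answers ``$a_i=b_i$'' exactly when the algorithm reports that $y$ shares a neighborhood with some other vertex. By the construction this answer is correct whenever the algorithm is correct, i.e.\ with probability at least $3/4$, so we obtain a one-way public-coin protocol using $S$ bits of communication that solves the problem of the theorem above with success probability $3/4$; hence $S=\Omega(n\log n)$ (the communication lower bound holds against public coins, or alternatively Newman's theorem reduces to private coins at an additive $O(\log n)$ cost that is absorbed). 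The one delicate point is decoding: a bare ``two vertices collide'' report must be interpretable by Bob as specifically ``$a_i=b_i$'', and this is exactly what appending the binary encoding of the index to every row/query neighborhood buys us — it eliminates collisions among Alice's own strings and between $b_i$ and unrelated $a_j$'s — while the disjointness of the two vertex groups together with the always-present $d_0$ dispatches the degenerate empty-neighborhood cases.
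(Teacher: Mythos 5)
Your proof is correct, and it supplies the reduction that the paper states as an immediate consequence of the cited MolinaroWY15 theorem without spelling out. The construction is the natural one; encoding the index $j$ into $\mathcal{N}(x_j)$ via the binary-encoding vertices $d_t$ is a clean way to make a bare collision report for $y$ unambiguously decodable as ``$a_i=b_i$'' (rather than relying on the algorithm's output format to expose which $x_j$ collided with $y$), and the bipartite split together with the always-present $d_0$ correctly dispatches the degenerate cases.
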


On the other hand, if the algorithm is required to be deterministic, we can show a stronger lower bound through the following formulation of the OR Equality problem:

\begin{definition}
In the OR Equality problem, $\oreq_{n,k}$, Alice has strings $x_1,\ldots,x_k\in\{0,1\}^n$ and Bob has strings $y_1,\ldots,y_k\in\{0,1\}^n$. 
Their goal is to determine the $k$-bit string $(z_1,\ldots,z_k)$ where each $z_i=1$ if $x_i=y_i$ and $z_i=0$ otherwise, for each $i\in[k]$. 
\end{definition}

\begin{theorem}\cite{KushilevitzW09}
\thmlab{thm:oreq:cc}
For $k\le\frac{n}{\log n}$, the deterministic communication complexity of $\oreq_{n,k}$ is $\Omega(nk)$, even if $x_i=y_i$ for at most a single index $i\in[k]$. 
\end{theorem}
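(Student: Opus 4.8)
The plan is to prove the lower bound with the standard leaf‑counting (partition‑number) method for deterministic communication: a deterministic protocol using $c$ bits of communication induces a partition of $X\times Y$ into at most $2^c$ combinatorial rectangles, each monochromatic with respect to the (promise‑restricted) output $(z_1,\dots,z_k)$, so it suffices to exhibit $2^{\Omega(nk)}$ inputs no two of which can lie in the same such rectangle — i.e. a fooling set of that size. Without the promise this is immediate: the ``all‑equal diagonal'' $\{(\mathbf x,\mathbf x):\mathbf x\in(\{0,1\}^n)^k\}$ has size $2^{nk}$, output $1^k$, and for $\mathbf x\ne\mathbf x'$ the swapped input $(\mathbf x,\mathbf x')$ breaks at least one coordinate equality, so its output is $\ne 1^k$ and it cannot share a monochromatic rectangle; this already gives $D(\oreq_{n,k})\ge nk$.

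The difficulty — and the entire point of the ``at most one $i$ with $x_i=y_i$'' hypothesis — is that this diagonal is \emph{excluded} by the promise, since legal outputs now have Hamming weight at most $1$. The fooling set must therefore live inside the promise region, where the naive attempts collapse. If one targets a weight‑$1$ output (say coordinate $1$ always equal), then two fooling‑set elements sharing the common value $w=x_1=y_1$ can never be separated: in the swapped input coordinate $1$ is still equal, so legality forces every other coordinate to be unequal, forcing the output back to $e_1$; such a fooling set has size only $2^n$. If instead one targets the all‑distinct output $0^k$, the swapped input must have \emph{exactly one} new coincidence (so its output is some $e_j\ne 0^k$ and it remains a legal promise input), and now the trouble is that off‑diagonal pairs tend to produce either zero new coincidences (no separation) or at least two (an illegal input, hence no contradiction); one can check that the simple ``Bob's block $=$ Alice's block $+$ fixed offset'' construction yields only a $2^{O(k)}$‑size fooling set.

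The resolution I would pursue is a designed combinatorial gadget for the $0^k$‑target: index a family $(\mathbf x^a,\mathbf y^a)$ over a carefully chosen set $\mathcal I$ with $x^a_i\ne y^a_i$ for all $i$, such that for every $a\ne b$ there is \emph{exactly one} coordinate $i$ with $x^a_i=y^b_i$ (or the symmetric statement). A natural attempt is to identify each block of $\{0,1\}^n$ with a vector space, let Bob's $i$‑th block be Alice's $i$‑th block shifted by an index‑ and coordinate‑dependent offset drawn from a structured family (Sidon/$B_2$‑type sets, or evaluations of distinct low‑degree polynomials), and restrict $\mathcal I$ to a subcode on which the per‑coordinate ``collision count'' is forced to be exactly one, while $|\mathcal I|=2^{\Omega(nk)}$. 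The main obstacle is exactly the construction and analysis of this gadget: proving the ``exactly one coincidence'' property while keeping $2^{\Omega(nk)}$ elements is the combinatorial heart of the theorem, and this is where the restriction $k\le n/\log n$ should enter, through the roughly $O(\log n)$ bits per block that the gadget costs (leaving $\approx n-O(\log n)$ free bits per block, hence $\Omega(nk)$ total). An alternative route — directly bounding the number of $0^k$‑inputs inside any monochromatic‑$0^k$ rectangle, using that such a rectangle cannot contain an input with exactly one equal coordinate — looks delicate (already at $k=1$ this degrades to the vacuous ``$A,B$ disjoint'' estimate), so the fooling‑set route with a tailored gadget seems necessary. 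I would also exploit the parallel with Unique‑Disjointness — identify $x_i$ with the element $(i,x_i)$ of the universe $[k]\times[2^n]$, so the promise is exactly $|A\cap B|\le 1$ — to import whatever deterministic lower‑bound machinery is available there.
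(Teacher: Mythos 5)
This theorem is cited in the paper from \cite{KushilevitzW09} and is not proved there, so there is no paper proof to compare against; I can only evaluate your attempt on its own merits as a reconstruction.

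Your analysis of the obstacle is sound: the naive all-equal diagonal fooling set is excluded by the promise, a weight-$1$ target collapses to size roughly $2^n$, and the fixed-offset construction for the $0^k$ target fails because swapped pairs generically produce zero or at least two coincidences rather than exactly one. The reduction you sketch at the end — viewing $\oreq_{n,k}$ as set disjointness over universe $[k]\times 2^n$ with one element per ``column'' and the promise $|A\cap B|\le 1$ — is also the right framing and is exactly the regime studied by Kushilevitz and Weinreb. But what you have written is a diagnosis, not a proof. The entire burden of the theorem is the fooling set (or rectangle bound) you defer: you explicitly flag ``proving the `exactly one coincidence' property while keeping $2^{\Omega(nk)}$ elements is the combinatorial heart of the theorem,'' and you offer only candidate primitives (Sidon/$B_2$ sets, distinct low-degree polynomials, a subcode restriction) without a construction, without any counting, and without verifying that the exact-one-coincidence constraint can be met at the claimed density. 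In particular the assertion that the gadget ``should'' cost $O(\log n)$ bits per block, and hence explain the hypothesis $k\le n/\log n$, is an unverified guess; nothing in the proposal shows where $k\le n/\log n$ is actually used or why the bound fails outside that range. You also dismiss the alternative route — bounding the mass of $0^k$-promise inputs inside any $0^k$-monochromatic rectangle — as ``delicate'' after only a $k=1$ sanity check, without ruling it out; a careful rectangle/corruption argument is a standard and plausibly correct way to prove exactly this sort of promise lower bound, so this rejection is premature. As written, the proposal does not establish the theorem; the central combinatorial lemma is missing.
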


We now show that any deterministic algorithm that reports all vertices with identical neighborhoods uses space $\Omega\left(\frac{n^2}{\log n}\right)$. 
By comparison, \thmref{thm:vertex:id} uses space $\O{n\log n}$, thus showing a separation between deterministic algorithms and randomized algorithms against polynomially-bounded white-box adversaries.

\thmdetvidlb*
\begin{proof}
Given an instance of $\oreq_{n,k}$ with $k=\frac{n}{\log n}$, consider a graph $G$ with $3n$ vertices $u_1,\ldots,u_n,v_1,\ldots,v_n,r_1,\ldots,r_n$. 
For $i\in[k]$ and $j\in[n]$, we connect $u_i$ to $r_j$ if and only if the $j$-th coordinate of $x_i$ is $1$. 
Similarly, for $i\in[k]$ and $j\in[n]$, we connect $v_i$ to $r_j$ if and only if the $j$-th coordinate of $y_i$ is $1$. 
Thus two vertices $u_i$ and $v_i$ have the same neighborhood if and only if $x_i=y_i$. 
Hence any deterministic algorithm that reports all vertices with identical neighborhoods also solves $\oreq_{n,k}$. 
By \thmref{thm:oreq:cc} for $k=\frac{n}{\log n}$, any deterministic algorithm robust against white-box adversaries that reports all vertices with identical neighborhoods uses space $\Omega\left(\frac{n^2}{\log n}\right)$. 
\end{proof}

\subsection{Linear Algebra Algorithms}

In this section, we provide algorithms robust to white-box adversaries for problems in linear algebra. The first problem we focus on will be the Rank Decision Problem defined as follows.

\begin{problem}[Rank Decision Problem]
Given an integer $k$, and a matrix $A$, determine whether the rank of $A$ is at least $k$.
\end{problem}

We can also use the cryptographic hardness of the SIS problem to solve the rank decision problem in data streams under the random oracle model. We assume the stream is composed of integer updates to the rows of a matrix $A$ and the updates are bounded by $\poly(n)$. In the following text, we denote $\poly(n)$ to be any function bounded by $n^C$ for some constant $C$ which may change from line to line. However, this is not crucial if we assume $n$ is sufficiently large.

\thmrankapprox*

\begin{proof}
The algorithm chooses a matrix $H\in \mathbb{Z}_q^{k \times n}$ chosen from the same distribution as in the SIS problem of \defref{def:SIS}. 
We pick the $q$ parameter in \thmref{thm:SIS_hardness} to be $q \ge n^{k \log n}$. 
The algorithm always maintains $HA$, which is a $k \times n$ matrix. 
After the stream ends, it enumerates over all non-zero integer vectors $x$ with entries bounded by absolute value at most $n^k$ and checks if $HAx \equiv 0 \bmod q$. 
If such an $x$ is found, it outputs the rank is less than $k$ and otherwise says the rank is at least $k$.

We first analyze the space. Note that the entries of $H$ can be generated on the fly to compute the sketch $HA$ using the random oracle. Assuming the entries of $A$ are bounded by $\poly(n)$, the overall space required is $\tO{nk^2}$ since $q$ requires $\tO{k}$ bits to specify.

We now prove the correctness. If $A$ has rank less than $k$, then there exists a $k \times k$ submatrix of $A$ which has rank less than $k$. 
Therefore, there exists a non-zero $x$ such that $Ax = 0$ and that the entries of $x$ can be taken to be integers bounded by $\poly(n)^k$ (we can see this in a variety of ways. For example, any $k \times k $ matrix with entries bounded by $\poly(n)$ has determinant at most $\poly(n)^k$  or by considering the row echelon form of the matrix). Since $q > \poly(n)^k$, we know that $x$ is not $0 \bmod q$ and thus we will find a $x$ such that $HAx \equiv 0 \bmod q$.

Otherwise, $A$ has rank at least $k$. In this case, if we find an $x$ with $HAx \equiv 0 \bmod q$, then the adversary has found a $y = Ax$ with entries bounded by $\poly(n)^k$ such that $Hy = 0$ and $y$ is not $0 \bmod q$. This contradicts the assumption that the adversary is computationally bounded. Note that $y$ can have entries up to $\poly(n)^k$ so we can set $k$ as large as $n^c$ for a sufficiently small $c$ to satisfy our computational hardness requirements. Altogether, we output the correct answer in both cases. 
\end{proof}

\begin{remark}
Our algorithm also works in the \emph{turnstile} setting where an adversary can make positive or negative updates to the rows of $A$.
\end{remark}

\subsection{String Algorithms}
In this section, we consider string algorithms in the white-box adversarial streaming model. 
For oblivious streams, algorithms often utilize Karp-Rabin fingerprints, which is a form of polynomial identity testing that determines whether two strings $U$ and $V$ are equal. 
Given a string $U\in\{0,1\}^n$, the Karp-Rabin fingerprint of $U$ is $\sum_{i=1}^n U[i]\cdot x^i\bmod{p}$ for some sufficiently random large prime $p$ and a generator $x$. 
The Karp-Rabin fingerprint utilizes the Schwart-Zippel lemma to show that with high probability, the fingerprints of $U$ and $V$ are identical if $U=V$ and the fingerprints are different if $U\neq V$. 

Unfortunately, the Karp-Rabin fingerprint is not robust to white-box adversaries because Fermat's little theorem shows that a string with $U[i]=1$ and $U[j]=0$ for all $j\neq i$ has the same Karp-Rabin fingerprint as a string $V\in\{0,1\}^n$ with $V[i+p-1]=1$ and $V[j=0]$ for all $j\neq i+p-1$, given the same choice of $p$ and $x$. 
In other words, an adversary can use the information about the internal parameters of the Karp-Rabin fingerprint to easily generate a collision.  

For the white-box adversarial model, we can instead use collision-resistant hash functions to hash strings $U$ and $V$ and compare their hashes for equality. 
However, we require the property that the hash value of a string $U$ can be computed as the characters of $U$ arrive sequentially. 
Fortunately, the hash function in \thmref{thm:crhf} randomly selects a large prime $p$ with $\O{\log\kappa}$ bits and then randomly chooses a generator $g$ of $p$. 
Then the hash function maps $h(U)=g^U\bmod{p}$, which can be computed as characters of $U$ arrive sequentially. 
Thus, we set $h(U)$ to be the fingerprint of $U$. 

\begin{lemma}
\lemlab{lem:str:eq}
There exists a streaming algorithm that, with probability at least $1-\frac{1}{\poly(n)}$, determines whether (possibly adaptive) input strings $U$ and $V$ of length $n$ are equal and is robust against $T$-time white-box adversaries using $\O{\log\min(T,n)}$ bits of space. 
\end{lemma}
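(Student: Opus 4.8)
The plan is to reduce equality testing of the two length-$n$ strings to evaluating a collision-resistant hash function incrementally, exactly as sketched in the paragraph preceding the lemma. First I would instantiate the CRHF family of \thmref{thm:crhf}: run $\Gen(1^\kappa)$ to obtain an index $i$ describing a prime $p$ and generator $g$, where the security parameter $\kappa$ is chosen so that $T$-time adversaries have only negligible success probability at finding collisions; since $T=\poly(\kappa)$ suffices and the hash uses $\O{\log\kappa}$ bits, this gives $\O{\log T}$ bits of storage. Second, I would interpret each string $U\in\{0,1\}^n$ as the integer it encodes in binary and maintain the Karp--Rabin-style fingerprint $h(U)=g^{U}\bmod p$: as the bits $U[1],U[2],\ldots$ arrive one at a time, the partial value $U$ can be updated by $U\leftarrow 2U+U[t]$ and correspondingly $h\leftarrow h^2\cdot g^{U[t]}\bmod p$, so the whole fingerprint is computable in one pass using only a counter and the current fingerprint, all in $\O{\log\kappa}=\O{\log T}$ bits. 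Run two such fingerprints in parallel for $U$ and $V$ (or one after the other, since the problem only asks for a final equality decision), and declare $U=V$ iff the fingerprints agree.

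Third, for correctness: if $U=V$ the fingerprints are trivially equal, so the only error event is $U\neq V$ but $h(U)=h(V)$. In the oblivious case this is bounded by the Schwartz--Zippel / Fermat argument over the random prime (probability $1-\frac{1}{\poly(n)}$ by taking $p$ of $\O{\log n}$ bits, i.e.\ $\kappa=\Theta(\log n)$). Against a $T$-time white-box adversary who sees $p$ and $g$, the collision event is exactly a break of the CRHF, so by the collision-resistance guarantee it happens with probability at most $\negl(\kappa)$; choosing $\kappa=\Theta(\log n + \log T)$, or more simply $\kappa=\Theta(\log\min(T,n))$ once we observe that against a $T$-bounded adversary there is no point driving the failure probability below what $T$ computation steps could ever exploit, yields failure probability $\frac{1}{\poly(n)}$ and space $\O{\log\min(T,n)}$. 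The one subtlety worth spelling out is why $\min(T,n)$ and not $\max$: if $T\ge n$ we can cap $\kappa$ at $\Theta(\log n)$ because the adversary gains nothing from spending more than $\poly(n)$ time given that the strings themselves are only $n$ bits and the honest guarantee we promise is only $1-\frac1{\poly(n)}$; conversely if $T<n$ a security parameter $\kappa=\Theta(\log T)$ already defeats the adversary, and the honest error over a random prime of that many bits is still $\frac{1}{\poly(T)}\le\frac1{\poly(n)}$ once we note $n$-bit collisions of a degree-$n$ polynomial modulo a random prime from a $\poly(n)$-sized pool fail with the stated probability.

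The main obstacle I anticipate is the bookkeeping around the security parameter: making precise the claim that $\O{\log\min(T,n)}$ bits simultaneously (i) keep the honest failure probability at $\frac{1}{\poly(n)}$ and (ii) keep the adversarial collision probability negligible against $T$-time adversaries, handling the two regimes $T\le\poly(n)$ and $T>\poly(n)$ uniformly. Everything else is routine: the incremental-evaluation property is immediate from the multiplicative structure $g^{U}$, and the space accounting is just ``store one prime, one generator, one fingerprint, one position counter,'' each of $\O{\log\min(T,n))}$ bits.
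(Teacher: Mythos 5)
Your proposal follows the same approach the paper sketches in the paragraph preceding the lemma: replace Karp--Rabin fingerprints with the discrete-log CRHF of \thmref{thm:crhf}, maintain $h(U)=g^U\bmod p$ incrementally (your update rule $h\leftarrow h^2\cdot g^{U[t]}\bmod p$ is exactly the intended one), and compare fingerprints; the paper does not write out a proof beyond that preceding paragraph, so your elaboration reconstructs the intended argument. One arithmetic slip to fix: you oscillate between $\kappa$ and $\log\kappa$. In your first paragraph you set $T=\poly(\kappa)$, so $\kappa\approx T$ and the storage $\O{\log\kappa}$ comes out to $\O{\log T}$, which is right; but in your third paragraph you switch to $\kappa=\Theta(\log n+\log T)$ and then $\kappa=\Theta(\log\min(T,n))$, for which $\poly(\kappa)=\poly(\log T)\ll T$ (so the CRHF guarantee no longer covers $T$-time adversaries) and the storage $\O{\log\kappa}$ becomes doubly logarithmic rather than $\O{\log\min(T,n)}$. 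The consistent choice is $\kappa=\Theta\big(\min(T,n^{c})\big)$ for a constant $c$, which yields storage $\O{\log\kappa}=\O{\log\min(T,n)}$ and security against $T$-time adversaries whenever $T\le n^{c}$; the cap at $n^{c}$ is best justified not by ``the adversary gains nothing'' (an adversary with time far exceeding $\poly(n)$ certainly could brute-force a $\poly(n)$-sized codomain) but by the standing assumption, also implicit throughout the paper, that the white-box adversary's runtime is polynomially bounded.
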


Similar to the techniques in \cite{PoratP09,CliffordFPSS16,CliffordKP19}, we can use \lemref{lem:str:eq} as a subroutine in the pattern matching problem, where the goal is to find all instances of a pattern $P$ in a text $U$. 
For simplicity, we consider the case where the period $p$ of the pattern $P$ is also given as part of the input. 
The period $p$ of a string $S$ of length $n$ is the smallest integer $\pi$ such that $S[1:n-\pi]=S[\pi+1:n]$, e.g., see~\cite{PoratP09,ErgunJS10,ErgunGSZ17,ErgunGSZ20}, as well as the more common notion that every $p$ characters of the string are the same. 
We use the following structural property about strings:

\begin{lemma}
\lemlab{lem:pattern:per}
\cite{PoratP09}
If a pattern $P$ with period $p$ matches the text $U$ at a position $i$, there cannot be a match between positions $i$ and $i+p$. 
\end{lemma}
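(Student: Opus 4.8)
The plan is to prove this lemma by contradiction, using only the notion of period adopted in the paper (the smallest $\pi$ with $S[1:|S|-\pi]=S[\pi+1:|S|]$, equivalently $S[j+\pi]=S[j]$ for all valid $j$). Write $\ell=|P|$ and suppose, toward a contradiction, that $P$ (with smallest period $p$) matches $U$ both at position $i$ and at some position $i'$ with $i<i'<i+p$. Set $d=i'-i$, so $1\le d\le p-1$; in particular $d<p\le\ell$, where $p\le\ell$ holds because $\pi=\ell$ always satisfies the period condition vacuously.

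First I would look at the window $U[i':i+\ell-1]$, which both occurrences of $P$ cover and which has length $\ell-d\ge1$. Reading a character $U[k]$ in this window via the occurrence at $i$ gives $U[k]=P[k-i+1]$, and via the occurrence at $i'$ gives $U[k]=P[k-i'+1]$. Setting $j=k-i'+1$, which ranges over $1\le j\le\ell-d$ as $k$ ranges over the window, and noting $k-i+1=j+d$, this yields $P[j+d]=P[j]$ for every $1\le j\le\ell-d$; equivalently $P[1:\ell-d]=P[d+1:\ell]$. By definition this means $d$ is a period of $P$.

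Then I would invoke minimality: $p$ is the smallest period of $P$, yet $d$ is a period with $d\le p-1<p$, a contradiction. Hence no match of $P$ in $U$ can occur at any position strictly between $i$ and $i+p$, which is exactly the claim.

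The only subtlety worth double-checking is that the overlap window is nonempty and that the substitution $j=k-i'+1$ lands precisely in the range $[1,\ell-d]$; both follow from $1\le d<p\le\ell$. I expect this bookkeeping to be the only (mild) obstacle — the rest is a direct unfolding of the definitions, and it is the standard string-periodicity argument underlying the pattern-matching framework of \cite{PoratP09} that the paper builds on.
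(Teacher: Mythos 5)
Your proof is correct, and since the paper cites \cite{PoratP09} for this lemma without reproducing a proof, there is no in-text argument to compare against; you have supplied the standard periodicity argument that the citation refers to. The overlap-window calculation, the substitution $j=k-i'+1$, and the invocation of minimality of $p$ are all sound. One tiny point worth making explicit for a careful reader: the paper's definition of period does not rule out $\pi=0$ (which trivially satisfies the condition), so the implicit convention must be $\pi\ge 1$; your argument uses only $d\ge 1$, which is automatic from $i<i'$, so nothing breaks, but stating the convention would remove the ambiguity. Also, the case $p=1$ makes the range $1\le d\le p-1$ empty, so the lemma holds vacuously there — your setup already handles this, but it could be called out in a sentence.
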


Thus we can use the crucial fact that for our collision-resistant hash functions, the fingerprint of a string $U\circ V$ can be computed from the hashes of $U$ and $V$ and the length of $V$, to perform pattern matching in a stream. 

\begin{algorithm}[!htb]
\caption{Pattern matching in a stream}
\alglab{alg:pattern}
\begin{algorithmic}[1]
\Require{Pattern $P$ and its period $p$ of length $n$, text $U$ of length $\poly(n)$, runtime $T$ of white-box adversary}
\Ensure{All positions of $P$ in $U$}
\State{Choose a collision-resistant hash function $h$ against adversaries with runtime $\max(T,\poly(n))$.}
\State{Compute the fingerprints $\psi\gets h(P[1:p])$, $\phi\gets h(P)$.}
\State{$m\gets\emptyset$}
\For{each index $i$ such that $h(T[i+1:i+p])=\psi$}
\If{$m\not\equiv i\pmod{p}$}
\State{$m\gets i$}
\EndIf
\If{$h(T[m+1:m+n])=\phi$}
\State{Output $m$}
\State{$m\gets m+p$}
\EndIf
\EndFor
\end{algorithmic}
\end{algorithm}

\begin{lemma}
For an input string $P$ with given period $p$, followed by a string $T$, there exists a streaming algorithm that, with probability at least $1-\frac{1}{\poly(n)}$, finds all instances of $P$ within $T$ that is robust against $T$-time white-box adversaries and uses $\O{\log T}$ bits of space. 
\end{lemma}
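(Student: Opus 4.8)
The plan is to analyze \algref{alg:pattern} directly, breaking the argument into three steps carried out in order: first, show that except with probability $1/\poly(n)$ every fingerprint comparison the algorithm makes is \emph{faithful}, i.e.\ $h(X)=h(Y)$ iff $X=Y$ for the two substrings of the stream being compared; second, conditioned on faithfulness, show that the combinatorial bookkeeping of \algref{alg:pattern} outputs exactly the starting positions of occurrences of $P$ in the text, using the periodicity lemma \lemref{lem:pattern:per}; and third, bound the space. The first step is where the white-box model is actually handled, the third step is the one I expect to require the most care, and the middle step is essentially classical periodic pattern matching. The underlying equality primitive and its robustness analysis are exactly those of \lemref{lem:str:eq}, lifted to the collision-resistant hash $h$ of \thmref{thm:crhf}.

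For the faithfulness step, instantiate $h$ from the CRHF family of \thmref{thm:crhf} with security parameter $\kappa$ large enough that $\poly(\kappa)\ge\max(T,\poly(n))$. There are no false negatives, since once $h$ is fixed it is a function, so $X=Y$ forces $h(X)=h(Y)$. The only bad event is a false positive: some comparison $h(U[i{+}1{:}i{+}p])=\psi$ or $h(U[m{+}1{:}m{+}n])=\phi$ with unequal substrings. The crucial point against a \emph{white-box} adversary is that the adversary's entire strategy — producing $P$ and then the text adaptively while observing $h$, $\psi$, $\phi$, and all running fingerprints in the internal state — together with the algorithm runs in time $\O{T+\poly(n)}$; hence if such a false positive occurred with non-negligible probability we could extract from this interaction an efficient procedure outputting a collision of $h$, contradicting collision-resistance. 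Since the text has length $\poly(n)$ there are only $\poly(n)$ comparisons, and a union bound keeps the total failure probability at $1/\poly(n)$. In particular, exposing the internal fingerprints gains the adversary nothing.

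For the correctness step, note that any occurrence of $P$ at start position $t$ satisfies $U[t{+}1{:}t{+}p]=P[1{:}p]$ (a prefix of $P$), so $t$ is among the indices $i$ the for-loop examines. Since $p$ is the \emph{smallest} period of $P$, the block $P[1{:}p]$ is primitive, so two occurrences of $P[1{:}p]$ in $U$ cannot overlap nontrivially and are $\ge p$ apart; in a contiguous run of such period-matches the positions form an arithmetic progression of common difference $p$, all congruent mod $p$, which is why the test ``$m\not\equiv i\pmod p$'' keeps the candidate $m$ fixed inside a run and resets it when a genuinely new run begins. The test $h(U[m{+}1{:}m{+}n])=\phi=h(P)$ then decides whether $P$ occurs at $m$, and after an output \lemref{lem:pattern:per} guarantees the next occurrence is at least $p$ further along, justifying the update $m\gets m{+}p$ and the fact that a single running candidate, rather than a list, suffices. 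Tracing these cases shows the set of positions output equals the set of occurrence positions.

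Finally, for space: the algorithm stores $h$, the two fixed fingerprints $\psi=h(P[1{:}p])$ and $\phi=h(P)$, a constant number of running fingerprints, and a constant number of indices into $U$; by \thmref{thm:crhf} each fingerprint and the description of $h$ cost $\O{\log\kappa}=\O{\log\max(T,n)}$ bits and each index costs $\O{\log n}$ bits, for a total of $\O{\log T}$ (absorbing $\log n$ into $\log T$; more precisely $\O{\log\max(T,n)}$). The step I expect to be the main obstacle is verifying that the fingerprints of the substrings $U[i{+}1{:}i{+}p]$ and $U[m{+}1{:}m{+}n]$ actually needed can be maintained incrementally within this budget, rather than by storing a length-$p$ sliding window of characters. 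Here I would use the concatenation property of the multiplicative CRHF — the fingerprint of $X\circ Y$ is determined by the fingerprints of $X$ and $Y$ together with $|Y|$ via closed-form exponent arithmetic modulo the group order — combined with the given periodicity: it suffices to carry a running fingerprint of the current candidate suffix $U[m{+}1{:}\cdot]$ and compare it, as the text streams in, against the fingerprint of the periodic extension $(P[1{:}p])^{\infty}$, which is a closed-form power of $\psi$; this keeps only $\O{1}$ fingerprints alive at once and is precisely why primitivity of $P[1{:}p]$ and \lemref{lem:pattern:per} are needed. Combining the three steps gives the claimed bound.
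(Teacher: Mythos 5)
Your proposal matches the paper's approach (direct analysis of \algref{alg:pattern} via the periodicity structure of $P$ plus a CRHF-based equality primitive), but is considerably more explicit than the paper's short proof, which only sketches what you call step~2 and leaves your steps~1 and~3 implicit: faithfulness is deferred to \lemref{lem:str:eq} and the $\O{\log T}$ space claim is not argued at all in the paper. Your step~1 (collision extraction from an adaptive adversary) is the right lift of \lemref{lem:str:eq}, and your step~3 observation that the concatenation/exponent property of the multiplicative CRHF together with periodicity lets one avoid storing a length-$p$ window is precisely the subtlety the paper glosses over.

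One small over-claim in step~2: primitivity of $P[1:p]$ does \emph{not} imply that two occurrences of $P[1:p]$ in an arbitrary text are $\ge p$ apart. A primitive word $w$ can still have a period $q$ with $|w|/2 < q < |w|$ (for instance $w=\mathtt{aba}$ has period $2$), so overlapping occurrences of $w$ at distance $q < |w|$ are perfectly possible in a general text, and ``runs'' of prefix-matches need not form an arithmetic progression of step $p$. What primitivity does give, via the standard fact that $ww$ contains exactly two occurrences of $w$, is the weaker statement you actually need: every occurrence of $P[1:p]$ that lies \emph{inside a window of the text that equals $P$} starts at a multiple of $p$ from the window's start. That is enough to show the algorithm never abandons a candidate $m$ that is a genuine match before verifying it (all prefix-matches $i$ with $m < i \le m+n-p$ satisfy $i \equiv m \pmod p$), while abandoning non-matches is harmless. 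With that refinement your step~2 says the same thing the paper's invocation of \lemref{lem:pattern:per} is getting at, and the rest of your argument goes through.
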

\begin{proof}
Consider \algref{alg:pattern}. 
\lemref{lem:pattern:per} implies that if $U[i+1:i+p]$ matches $P[1:p]$ for any index $i$, then either $U[i+1:i+2p]$ matches $P[1:2p]$ or there is at most one index $j\in[i+p+1,i+2p]$ that matches $P[1:p]$. 
%Moreover, if $U[i+1:i+kp]$ matches $P[1:kp]$ for any index $i$ and positive integer $k$, then either $U[i+1:i+(k+1)p]$ matches $P[1:(k+1)p]$ or there is at most one index $j\in[i+kp+1,i+(k+1)p]$ that matches $P[1:(k+1)p]$. 
In other words, any sequence of matches for the first $p$ characters of $P$ must either be exactly $p$ characters apart or be more than $p$ characters apart thus eliminating the possibility of previous positions that have not already been verified as periods. 
Hence for any index $i$ of $U$ at which an instance of $P$ begins, either $i\equiv m\pmod{p}$ or $i$ will be a certificate that all positions $j=m+kp$ with integer $k\ge 0$ and $j<i$ do not match the pattern $P$. 
In the former case, $i-p$ must also be an instance of the pattern, so $m$ will be increased by $p$ to $i$ after verifying the position $i-p$. 
In the latter case, all future matches of $P[1,p]$ within the next $n$ characters of $i$ must occur at multiples of $p$ away from $i$ by \lemref{lem:pattern:per}. 
Thus in either case, \algref{alg:pattern} will report $i$. 
\end{proof}

\section{Lower Bound Techniques}
\seclab{sec:lb}
In this section, we present techniques to showing lower bounds in the white-box adversarial streaming model. 
Surprisingly, our reductions can utilize deterministic communication complexity protocols, even to show lower bounds for randomized algorithms. 
We apply our techniques to show lower bounds for $F_p$ estimation for all $p\ge 0$, including estimating the number of distinct elements for $p=0$, as well as matrix rank estimation. 

\subsection{Lower Bounds for \texorpdfstring{$F_p$}{Fp} Estimation}
\seclab{sec:lb:fp}

We illustrate \thmref{thm:two:p:inf} with a lower bound for $F_p$ moment estimation through a reduction from the following formulation of the Gap Equality problem:

\begin{definition}[Gap Equality]
In the deterministic Gap Equality problem $\detgapeq_n$, Alice receives a string $x\in\{0,1\}^n$ with $|x|=\frac{n}{2}$ and Bob receives a string $y\in\{0,1\}^n$ with $|y|=\frac{n}{2}$. 
Their goal is to use a deterministic protocol to determine whether $x=y$, given the promise that either $x=y$ or the Hamming distance $\HAM(x,y)$ satisfies $\HAM(x,y)\ge\frac{n}{10}$. 
\end{definition}
%The Gap Equality problem is known to have deterministic communication complexity $\Omega(n)$ \cite{buhrman1998quantum}, while a randomized protocol could use $\O{\log n}$ bits by hashing each player's input into a universe of constant size, and exchanging the hash function and hashes. 
\begin{theorem}(\cite{buhrman1998quantum})
\thmlab{thm:detgapeq:cc}
The deterministic communication complexity of $\detgapeq_n$ is $\Omega(n)$. 
\end{theorem}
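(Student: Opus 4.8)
The plan is to prove the bound by exhibiting a \textbf{fooling set} of size $2^{\Omega(n)}$ for the deterministic communication complexity of $\detgapeq_n$. Recall the standard fooling-set method: if $S$ is a set of input pairs on which the target function is constantly ``equal'', and for every two distinct $(x_1,y_1),(x_2,y_2)\in S$ at least one of the ``crossed'' pairs $(x_1,y_2),(x_2,y_1)$ is \emph{not} answered ``equal'', then no monochromatic combinatorial rectangle can contain two elements of $S$. Since a deterministic protocol partitions the input space into monochromatic rectangles, it must have at least $|S|$ leaves, so the communication cost is at least $\log_2|S|$. Thus it suffices to produce such an $S$ with $\log_2|S|=\Omega(n)$.

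The fooling set I would use is the diagonal $S=\{(w,w):w\in\mathcal{C}\}$, where $\mathcal{C}\subseteq\{0,1\}^n$ is a constant-weight code: every $w\in\mathcal{C}$ has $|w|=n/2$, and any two distinct $w_1,w_2\in\mathcal{C}$ satisfy $\HAM(w_1,w_2)\ge n/10$. On the diagonal the answer is ``equal''. For distinct $w_1,w_2\in\mathcal{C}$, the pair $(w_1,w_2)$ is a \emph{valid} input to $\detgapeq_n$ (both strings have weight exactly $n/2$) and it lies inside the promise since $\HAM(w_1,w_2)\ge n/10$; hence $\detgapeq_n$ answers ``unequal'' on it. So both crossed pairs witness a different answer, confirming that $S$ is a fooling set and giving $D(\detgapeq_n)\ge\log_2|\mathcal{C}|$.

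It remains to build $\mathcal{C}$ with $|\mathcal{C}|=2^{\Omega(n)}$. The cleanest route (assume $n$ even): take an ordinary binary code $\mathcal{C}_0\subseteq\{0,1\}^{n/2}$ of minimum distance at least $n/20$, and embed $z\mapsto(z,\bar z)\in\{0,1\}^n$. Each image has weight $n/2$ regardless of $z$, and $\HAM\big((z,\bar z),(z',\bar z')\big)=2\,\HAM(z,z')\ge n/10$, so both required properties of $\mathcal{C}$ hold simultaneously. The existence of $\mathcal{C}_0$ with $|\mathcal{C}_0|\ge 2^{cn}$ for a constant $c>0$ is the Gilbert--Varshamov bound: greedily add codewords, each newly forbidden point lying in a Hamming ball of radius $n/20$ in $\{0,1\}^{n/2}$ of volume at most $2^{(n/2)H(1/10)}$ with $H(1/10)<1/2$, so at least $2^{n/2}/2^{(n/2)H(1/10)}=2^{\Omega(n)}$ codewords fit. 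Combining, $D(\detgapeq_n)\ge\log_2|\mathcal{C}|=\Omega(n)$.

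The argument is almost entirely routine; the only step that deserves a second of care is checking that the $z\mapsto(z,\bar z)$ construction preserves constant weight \emph{and} lower-bounds the pairwise distance at once (so that the crossed pairs genuinely land in the promise), together with the standard volume computation $H(1/10)<1/2$ that makes the greedy code exponentially large. If one prefers to avoid the entropy estimate, a direct greedy/probabilistic selection of weight-$n/2$ strings that are pairwise $n/10$-far works equally well and is the likely obstacle-free alternative.
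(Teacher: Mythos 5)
The paper does not actually supply a proof of this theorem; it is stated as an imported fact, cited to Buhrman--Cleve--Wigderson. So there is no ``paper's proof'' to compare against, and what you have written is a self-contained derivation of the cited result.

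Your argument is correct. The fooling-set method extends to promise problems exactly as you use it: within any leaf-rectangle $R$ of a deterministic protocol, the protocol must be consistent on every input of $R$ that lies in the promise domain (it may behave arbitrarily off the promise). You correctly verify both the weight constraint (every string in $\mathcal{C}$ has weight exactly $n/2$, so both diagonal and crossed pairs are legal inputs) and the gap constraint (crossed pairs $(w_1,w_2)$ with $w_1\neq w_2$ have $\HAM\geq n/10$, so they sit inside the promise and force the answer ``unequal''). This is the subtle point that makes the fooling-set argument legitimate here, and you handle it explicitly. The $z\mapsto(z,\bar z)$ embedding cleanly turns an ordinary Gilbert--Varshamov code into a constant-weight code with the required pairwise distance, and $H(1/10)<1/2$ gives the $2^{\Omega(n)}$ size; the slightly loose use of radius $n/20$ rather than $n/20-1$ in the greedy step is conservative and costs nothing asymptotically. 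One minor remark: the paper ultimately only needs the one-way deterministic lower bound (Alice-to-Bob, since the reduction has Alice send a single state to Bob), but your fooling-set bound applies to general two-way deterministic protocols and therefore subsumes the one-way case, so this is if anything stronger than needed.
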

We use the Gap Equality problem to show lower bounds for $F_p$ moment estimation in the white-box adversarial streaming model; \thmref{thm:rank:lb} follows from a similar reduction.  
\begin{restatable}{theorem}{thmlbfp}
\thmlab{thm:lb:fp}
For any $p\ge 0$ with $p\neq 1$, there exists a constant $C_p>1$ such that any white-box adversarially robust algorithm that outputs an $C_p$-approximation to $F_p$ with probability at least $9/10$ requires $\Omega(n)$ space.
%For any $p\ge 0$ with $p\neq 1$, there exists a constant $C_p>1$ such that any adversarially robust algorithm that uses $k$ hidden private bits and outputs an $C_p$-approximation to $F_p$ with probability at least $9/10$ in the white-box adversarial model requires $\Omega\left(\frac{n}{2^k}\right)$ space.
\end{restatable}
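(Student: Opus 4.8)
The plan is to combine the general reduction \thmref{thm:two:p:inf} with the deterministic $\Omega(n)$ lower bound for Gap Equality, \thmref{thm:detgapeq:cc}. Let $\calA$ be any white-box adversarially robust algorithm that outputs a $C_p$-approximation to $F_p$ with probability at least $9/10$ using $S$ bits of space, where $C_p>1$ is a constant (depending only on $p$) to be fixed below. I will describe a one-way two-player protocol for $\detgapeq_n$ that transmits exactly the $S$-bit internal state of $\calA$; then \thmref{thm:two:p:inf} promotes this to a \emph{deterministic} protocol with $S$ bits of communication, and \thmref{thm:detgapeq:cc} forces $S=\Omega(n)$.

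The reduction is the natural one. Given an instance $(x,y)$ of $\detgapeq_n$, Alice runs $\calA$ on the insertion-only stream that inserts coordinate $i$ once for every $i$ with $x_i=1$, and sends the resulting internal state of $\calA$ to Bob. Bob continues the same stream, inserting coordinate $i$ once for every $i$ with $y_i=1$, and then queries $\calA$ at the end of the stream for its estimate of $F_p$. The frequency vector is $f_i=x_i+y_i\in\{0,1,2\}$, and since $|x|=|y|=n/2$, writing $d=\HAM(x,y)$, exactly $(n-d)/2$ coordinates have $f_i=2$, exactly $d$ have $f_i=1$, and the remaining $(n-d)/2$ are zero. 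I only need correctness of $\calA$ at the final time step, which is guaranteed by white-box robustness with probability $\ge 9/10>1/2$, as required by \thmref{thm:two:p:inf}.

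It remains to check that the two cases of $\detgapeq_n$ induce values of $F_p$ separated by a fixed multiplicative constant for every $p\neq 1$. If $x=y$ then $d=0$, so $F_p=(n/2)\,2^p$ for $p>0$ and $F_0=n/2$. If instead $\HAM(x,y)\ge n/10$, then for $p>0$ we get $F_p=(n-d)\,2^{p-1}+d=(n/2)\,2^p-d\,(2^{p-1}-1)$, while $F_0=(n+d)/2\ge 11n/20$. For $p>1$ the quantity $2^{p-1}-1$ is a positive constant, so $F_p\le(n/2)\,2^p-(n/10)(2^{p-1}-1)$, which is smaller than the $x=y$ value by the factor $1-\frac15\!\left(\tfrac12-2^{-p}\right)<1$; for $0<p<1$ the quantity $2^{p-1}-1$ is negative, so $F_p$ is instead \emph{larger} than the $x=y$ value by the factor $1+\frac15\!\left(2^{-p}-\tfrac12\right)>1$; and for $p=0$ the ratio of the two values is at least $11/10$. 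In all three regimes the separation is a constant $g_p>1$ depending only on $p$, so fixing $C_p>1$ with $C_p^2<g_p$, Bob can threshold $\calA$'s output to distinguish $x=y$ from $\HAM(x,y)\ge n/10$, i.e.\ to solve $\detgapeq_n$. This is exactly where $p=1$ must be excluded: there $F_1=\|f\|_1=|x|+|y|=n$ identically, leaving no gap.

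The main thing to get right is the gap computation and the case split $p\in\{0\}$, $0<p<1$, $p>1$, together with choosing $C_p$ as a function of $g_p$; this is elementary. All of the conceptual content sits in \thmref{thm:two:p:inf}, which is what lets a \emph{deterministic} communication lower bound constrain a \emph{randomized} white-box-robust algorithm — the white-box robustness is precisely what allows us to fix $\calA$'s internal randomness to a single value that is simultaneously good against every possible continuation by Bob, yielding a deterministic protocol subject to \thmref{thm:detgapeq:cc}.
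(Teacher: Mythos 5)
Your proof is correct and follows essentially the same route as the paper: a reduction from $\detgapeq_n$ combined with the white-box-to-deterministic conversion of \thmref{thm:two:p:inf}, with the $F_p$ gap between the $x=y$ and $\HAM(x,y)\ge n/10$ cases giving the separation (the paper asserts this gap for $p\in[0,1)$ and $p>1$ without the explicit arithmetic you supply, and additionally parameterizes by $k$ hidden bits, which reduces to the stated theorem at $k=0$).
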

We remark that a stronger version of \thmref{thm:lb:fp} can be proved, in which the streaming algorithm can hide bits from the adversary.

\begin{proof}
First note that given vectors $u,v\in\{0,1\}^n$ with $|u|=|v|=\frac{n}{2}$ and $\HAM(u,v)\ge\frac{n}{10}$, there exists a constant $C_p$ such that
$C_p\|2u\|_p\le\|u+v\|_p$
for $p\in[0,1)$ and
$C_p\|u+v\|_p\le\|2u\|_p$
for $p>1$. 
Assume for the sake of contradiction that there exists an algorithm $\calA$ that uses $o\left(\frac{n}{2^k}\right)$ space and $k$ hidden private bits and outputs a $C_p$-approximation to $F_p$ with probability at least $9/10$ in the white-box adversarial model.
Given an instance of $\detgapeq_n$, Alice receives a string $x\in\{0,1\}^n$ with $|x|=\frac{n}{2}$ and Bob receives a string $y\in\{0,1\}^n$ with $|y|=\frac{n}{2}$. 
Alice creates a stream $S$ that induces the frequency vector $x$. 

Because Alice and Bob must solve $\detgapeq_n$ deterministically, $k$ random bits cannot be selected. 
Instead, Alice runs a separate instance of $\calA$ on $S$ for each of the $2^k$ possible realizations of the $k$ random bits. 
For the $i$-th realization of the sequence of the hidden random bits under some fixed ordering, Alice deterministically chooses a sequence $R_i$ of public random bits such that $\calA$ is correct for at least $9/10$ fraction of the possible values of $y$. 
Otherwise, if such a sequence does not exist, Alice sets the sequence $R_i$ to be the all zeros sequence. 
For each $i\in[2^k]$, Alice then runs the algorithm $\calA$ on the $i$-th realization of the sequence of the hidden random bits under some fixed ordering, the deterministic fixing of the sequence $R_i$ to use as $\calA$'s public ``random bits'', and the input $x$ to create a state $\sigma_i(x)$. 
Alice then sends the set of states $\sigma_1(x),\ldots,\sigma_{2^k}(x)$ to Bob. 

Bob creates a stream that induces the frequency vector $y$. 
For each $i\in[2^k]$, Bob takes $\sigma_i(x)$, continues running $\calA$ on the created stream, so that the underlying frequency vector is $x+y$, and queries the algorithm. 
Since a $C_p$-approximation to the norm of $x+y$ distinguishes whether $x=y$ or $\HAM(x,y)\ge\frac{n}{10}$ by the above argument, then for each $i\in[2^k]$, Bob can determine whether the $i$-th instance of $\calA$ outputs whether $x=y$ or $\HAM(x,y)\ge\frac{n}{10}$ by enumerating over all possible ``random'' strings and taking the majority output by the algorithm. 
By the correctness of $\calA$ in the white-box adversarial model, for at least $9/10$ of the possible realizations of the sequence of the hidden random bits under some fixed ordering will also be correct for \emph{all possible values} of $y$, across at least $9/10$ of the possible public random bits used by the algorithm. 
Thus at least $9/10$ fraction of the states $\sigma_i(x)$ sent by Alice, where $i\in[2^k]$, will succeed for all possible values of $y$. 
Hence, at least $9/10$ fraction of the $2^k$ outputs by Bob will be correct, allowing Bob to distinguish whether $x=y$ or $\HAM(x,y)\ge\frac{n}{10}$.

From our assumption, each instance of $\calA$ uses $o\left(\frac{n}{2^k}\right)$ space. 
Thus the states $\sigma_i(x)$ sent by Alice, where $i\in[2^k]$, use at most $o(n)$ communication, which contradicts \thmref{thm:detgapeq:cc}. 
It follows that $\calA$ uses $\Omega\left(\frac{n}{2^k}\right)$ space.
\end{proof}

\subsection{Lower Bound for Deterministic Approximate Counting}
\seclab{sec:counting_lb}
A natural question is whether the techniques of \thmref{thm:two:p:inf} extend to multiplayer communication games. 
Unfortunately, \thmref{thm:approx:count:lb} shows that the technique provably cannot generalize. 

\thmapproxcountlb*

\thmref{thm:approx:count:lb} is proven by showing that all read-once branching programs require at least $\poly(n)$ states to approximately count the number of ones in a length $n$ stream. 
Hence in a communication protocol across $n$ players where each player is given a single bit and the goal is to approximately count the number of ones held across all players, the maximum communication by a single player must be at least $\Omega(\log n)$ bits. 
This implies that our techniques which reduce hardness for white-box adversarially robust algorithms to two-player deterministic communication lower bounds \emph{cannot} be generalized to multiplayer deterministic communication lower bounds, since such a generalization would imply a space lower bound of $\Omega(\log n)$ for approximate counting, whereas Morris counters use $\O{\log\log n+\log\frac{1}{\eps}+\log\frac{1}{\delta}}$ bits of space.%, e.g., \lemref{lem:morris}. 

We now prove \thmref{thm:approx:count:lb}. Assume that there is an algorithm which, using $s$ bits of memory, counts the number of 1's in a stream, consisting of 0's and 1's, approximately. A key result of this section is that to count up to $n$ 1's up to a factor of $1+\eps$, where $\eps > 0$ is constant, we must have $s = \Omega(\log n)$; this asymptotically matches the bound required to count it exactly. Moreover the bound that holds even if the streaming algorithm has access to a clock that keeps track of the index of the input being read (the algorithm is not charged for the memory required to store the index). Such an algorithm can be modeled as an oblivious leveled read-once branching program (also known as Ordered Binary Decision Diagram, abbreviated as OBDD) of width $2^s$ over the input alphabet $\{0,1\}$. (Without loss of generality, we let the stream be infinite and therefore the length of the OBDD is also infinite.)

More generally, we show that the lower bound applies to a larger class of counting functions called \emph{monotonic counters}.
\begin{definition}[Monotonic Counters]
Let $\Sigma$ be the input alphabet. A monotonic counter is a function $\chi : \Sigma^* \to \mathbb{N}\backslash\{0\}$ satisfying $\chi(\epsilon)=1$ for the empty string\footnote{This does not entail a loss of generality for monotonic counters that can also output 0; such exceptional sequences can be handled separately.} and $\{ \chi(\sigma a) - \chi(\sigma) : a \in \Sigma\} = \{0,1\}$, for every $\sigma \in \Sigma^*$. 
\end{definition}

In words, the counter is initialized to 1 at time 1. The counter can increase by at most 1 in each time step, and stays the same for at least one input symbol and strictly increases by 1 for at least one input symbol, thereby ensuring that the counter can assume all possible values in $\{1,2,\dots,t\}$ at time $t$. 

Fix a monotonic counter $\chi$ over an input alphabet $\Sigma$. Let $P$ be an OBDD also over $\Sigma$ and let $P(\sigma)$ denote the node reached in $P$ on input $\sigma \in \Sigma^*$. Fix any node $u$ in $P$ and let $C_u = \{\chi(\sigma) : P(\sigma)=u \text{ for some } \sigma\}$ be the nonempty set of values of the monotonic counter for input sequences that reach node $u$. 

To characterize the error in $P$'s computation, we abstractly let $\eps: \mathbb{N} \to \mathbb{R}_{\ge 0}$ be a function that represents the error in approximation. For example:
\begin{enumerate}
\item $\eps(k) = \delta k$ where $\delta>0$ is a fixed constant $\quad \Longrightarrow \quad (1+\delta)$-multiplicative approximation. 

\item $\eps(k) = (n^{\delta} - 1)k$ where $0 < \delta < 1$ is a fixed constant $\quad \Longrightarrow \quad n^{\delta}$-multiplicative approximation.

\item $\eps(k) = n^{\delta}$ where $0 \le \delta < 1$ is a fixed constant $\quad \Longrightarrow \quad n^{\delta}$-additive approximation.
\end{enumerate}
We say that a set $C$ of values is \emph{$\eps$-bound} if the deviation of its maximum value from $k$ is at most $\eps(k)$ for every $k \in C$. We say that $P$ has error $\eps$ at time $t$ if $C_u$ is $\eps$-bound for every node $u$ at time $t$.

For a node $u$, let $J_u = [\min(C_u),\max(C_u)]$ be the interval that minimally covers $C_u$, and let $I_0(t)$ be the set of intervals $J_u$ over all nodes $u$ at level $t$. Define $I(t)$ to be the set of all maximal intervals (under set inclusion) in $I_0(t)$. Note that $|I(t)|$ is a lower bound on the number of nodes in $P$ at time $t$. For this section, it suffices to consider the error in approximation induced by the intervals in $I(t)$. Namely, if $P$ has error $\eps$ at time $t$, then it implies every interval of $I(t)$ is $\eps$-bound. 

\begin{lemma} \lemlab{lem:basic-props(0)}
$I(1) = \{[1,1]\}$.
\end{lemma}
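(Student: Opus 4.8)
The plan is to treat this as the trivial base case of the inductive description of the intervals $I(t)$, where essentially nothing needs to be computed. The one point to get right is the indexing convention: ``time $1$'' corresponds to the moment before any input symbol has been read, i.e.\ to the empty string $\epsilon$, at which the monotonic counter has value $\chi(\epsilon)=1$. Equivalently, one can simply invoke the fact recorded just above the definition that at time $t$ the counter assumes only values in $\{1,2,\dots,t\}$; specializing to $t=1$, the only attainable counter value is $1$.

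From here the argument is immediate. Fix any node $u$ of $P$ at level $1$. By definition $C_u=\{\chi(\sigma) : P(\sigma)=u\}$, where $\sigma$ ranges over the input sequences reaching $u$ at time $1$; by the previous observation every such $\chi(\sigma)$ equals $1$, so $C_u\subseteq\{1\}$, and since $C_u$ is nonempty we get $C_u=\{1\}$. Consequently $J_u=[\min(C_u),\max(C_u)]=[1,1]$. Since this holds for every node $u$ at level $1$ --- whether $P$ has a single node (the root) or several at that level --- the collection $I_0(1)$ of all intervals $J_u$ is the singleton $\{[1,1]\}$, and the unique maximal interval under set inclusion is $[1,1]$. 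Hence $I(1)=\{[1,1]\}$.

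There is no genuine obstacle here; the only thing requiring a moment's care is to state the time-$1$ convention precisely (the counter starts at $1$ on the empty input, so $|I(1)|$ is being bounded at the very first level), so that the reader sees $C_u=\{1\}$ rather than, say, $C_u\subseteq\{1,2\}$, which would already be the situation one level later.
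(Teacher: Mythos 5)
Your proof is correct and takes essentially the same route as the paper's, which is a one-line observation that the initial value of the monotonic counter is $1$; you simply unpack the indexing convention (time $1$ corresponds to the empty string, so $C_u=\{1\}$ for the unique starting node) and spell out the consequences for $J_u$ and $I(1)$.
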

\begin{proof}
This holds because the initial value of the monotonic counter is 1.
\end{proof}

\begin{lemma} \lemlab{lem:basic-props(a)}
Let $t' \ge t \ge 1$. For every interval $[k,\ell] \in I(t)$,
there exists an interval in $I(t')$ containing $[k,\ell]$. 
\end{lemma}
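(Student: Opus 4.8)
The plan is to reduce to the single-step case $t'=t+1$ and then iterate. I would argue by induction on $t'$: when $t'=t$ there is nothing to prove since $[k,\ell]$ itself lies in $I(t)$; for the step, it suffices to show that every interval $J_u$ with $u$ a node at level $t$ is contained in $J_w$ for some node $w$ at level $t+1$. This is phrased for all of $I_0(t)$ rather than just $I(t)\subseteq I_0(t)$, which is exactly what makes the induction go through cleanly: any $J_w\in I_0(t+1)$ is contained in a set-inclusion-maximal element of the finite family $I_0(t+1)$, i.e.\ in an element of $I(t+1)$, and chaining these containments from $t$ up to $t'$ then gives the lemma.

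For the single step, I would fix $u$ at level $t$ and write $J_u=[k,\ell]$ with $k=\min(C_u)$ and $\ell=\max(C_u)$, both attained because $C_u$ is a nonempty finite set; choose inputs $\sigma_1,\sigma_2$ reaching $u$ with $\chi(\sigma_1)=k$ and $\chi(\sigma_2)=\ell$. The key move is to use the monotonic-counter axiom $\{\chi(\sigma_1 a)-\chi(\sigma_1):a\in\Sigma\}=\{0,1\}$ to select a symbol $a_1$ with $\chi(\sigma_1 a_1)=k$, and then to let $w=P(\sigma_1 a_1)$ — which, since $P$ is a leveled read-once branching program, equals $P(\sigma a_1)$ for \emph{every} $\sigma$ reaching $u$. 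Then $k\in C_w$ via $\sigma_1 a_1$, so $\min(C_w)\le k$; and $\sigma_2 a_1$ also reaches $w$ with $\chi(\sigma_2 a_1)\ge\chi(\sigma_2)=\ell$ because the counter never decreases, so $\max(C_w)\ge\ell$. Hence $[k,\ell]\subseteq J_w$, completing the step.

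The only delicate point — and the only place where the precise definition of a monotonic counter enters — is that the non-incrementing symbol $a_1$ is chosen relative to the left-endpoint witness $\sigma_1$, and one must confirm that feeding this \emph{same} symbol to the right-endpoint witness $\sigma_2$ keeps the counter at least $\ell$; this holds since each step changes the counter by $0$ or $1$, never less. I expect no other obstacle: the remaining ingredients are just the determinism of OBDD transitions within a level and unwinding the definitions of $C_u$, $J_u$, $I_0(t)$, and $I(t)$.
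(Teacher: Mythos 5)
Your proof is correct and is essentially the same argument as the paper's: pick witnesses for the left and right endpoints of $J_u$, use the monotonic-counter axiom to choose a non-incrementing symbol for the left-endpoint witness, feed that same symbol to the right-endpoint witness using the determinism of the OBDD transition, and observe the resulting node's interval contains $[k,\ell]$; then pass to a maximal interval and induct. The only cosmetic difference is that you phrase the one-step claim for all of $I_0(t)$ and pass to maximality at the end, whereas the paper inducts directly on elements of $I(t)$, but both chains go through.
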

\begin{proof}
The statement is trivially true for $t'=t$ and we show that it holds for $t'=t+1$; it follows by induction that also holds for all $t' \ge t$. 
Let $u$ be the node in level $t$ such that $J = J_u = [k,\ell]$ is the interval associated with node $u$. Let $\sigma$ be an input sequence such that $P(\sigma) = u$ and $\chi(\sigma) = k$. Similarly, let $\rho$ be another input sequence such that $P(\rho) = u$ and $\chi(\rho) = \ell$.

Let $a \in \Sigma$ be such that $\chi(\sigma a) = \chi(\sigma) = k$. Let $v = P(\sigma a)$ be the node reached in level $t+1$. Then $v = P(\rho a)$ as well and $\chi(\rho a) \ge \chi(\rho) = \ell$. Thus $J_v \supseteq [k,\ell]$ implying that there is an interval in $I(t+1)$ that contains $[k,\ell]$.
\end{proof}

\begin{lemma} \lemlab{lem:basic-props(b)}
Let $t \ge 1$. For every interval $[k,\ell] \in I(t)$,
there exists an interval in $I(t+1)$ containing $[k+1,\ell+1]$. 
\end{lemma}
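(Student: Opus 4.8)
The plan is to mirror the proof of \lemref{lem:basic-props(a)}, but driving the argument with the \emph{incrementing} input symbol rather than a stabilizing one. Fix an interval $[k,\ell]\in I(t)$ and let $u$ be a node at level $t$ with $J_u=[k,\ell]$. First I would pick witnesses: input sequences $\sigma$ and $\rho$ with $P(\sigma)=P(\rho)=u$, $\chi(\sigma)=k$, and $\chi(\rho)=\ell$; these exist precisely because $k=\min C_u$ and $\ell=\max C_u$.

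Next I would invoke the monotonic counter property on $\rho$: there is a symbol $b\in\Sigma$ with $\chi(\rho b)=\chi(\rho)+1=\ell+1$. Let $v=P(\rho b)$ be the node reached at level $t+1$. Since $P$ is an OBDD and $P(\sigma)=P(\rho)=u$, reading $b$ from $u$ is deterministic, so $P(\sigma b)=v$ as well. The key observation is that the monotonic counter property also pins down $\chi(\sigma b)\in\{\chi(\sigma),\chi(\sigma)+1\}=\{k,k+1\}$, hence $\chi(\sigma b)\le k+1$. Therefore $C_v$ contains a value that is at most $k+1$ together with the value $\ell+1$, so the minimal covering interval satisfies $J_v=[\min C_v,\max C_v]\supseteq[k+1,\ell+1]$ (using $k\le\ell$). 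Finally, by definition $I(t+1)$ contains a maximal interval containing $J_v$, and that interval contains $[k+1,\ell+1]$, which is the claim. Unlike \lemref{lem:basic-props(a)}, no further inductive extension to later levels is needed, since the statement only concerns the single step from $t$ to $t+1$.

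There is essentially no obstacle here; the only point requiring care is which witness string supplies the incrementing symbol. If one instead picks the symbol incrementing $\sigma$, the counter on $\rho$ might fail to advance and one recovers only $[k+1,\ell]$. Choosing the symbol that advances the \emph{top} witness $\rho$, and then using the ``increases by at most $1$'' half of the monotonic counter property to bound the \emph{bottom} witness $\sigma$ by $k+1$, is exactly what makes the inclusion $[k+1,\ell+1]\subseteq J_v$ go through.
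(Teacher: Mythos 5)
Your proof is correct and follows essentially the same argument as the paper's: you pick the same witnesses $\sigma,\rho$, apply the incrementing symbol dictated by $\rho$, use determinism of the OBDD to get $P(\sigma b)=P(\rho b)=v$, and bound $\chi(\sigma b)\le k+1$ via the ``increases by at most $1$'' half of the monotonic counter property. The concluding remark about why one must increment the top witness $\rho$ rather than $\sigma$ is a nice clarification but does not change the substance.
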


\begin{proof}
Let $u$, $\sigma$ and $\rho$ be as in the proof of~\lemref{lem:basic-props(a)}.  
Let $a \in \Sigma$ be such that $\chi(\rho a) = \chi(\rho) + 1 = \ell+1$. Let $v = P(\rho a)$ be the node reached in level $t+1$. Then $v = P(\sigma a)$ as well and $\chi(\sigma a) \le \chi(\sigma) + 1 = k+1$. Thus $J_v \supseteq [k+1,\ell+1]$ implying that there is an interval in $I(t+1)$ that contains $[k+1,\ell+1]$.\end{proof}

Now fix any family $\{I(t)\}_{t \ge 1}$, where $I(t)$ for each ``time'' $t$ is a set of maximal intervals. The proof below depends only on the assumption the family satisfies \lemref{lem:basic-props(0)}, \lemref{lem:basic-props(a)}, and \lemref{lem:basic-props(b)} above. 
Our goal is to give a lower bound on $|I(t)|$ for some appropriate time $t$. 
We say that $k$ is \emph{present} at time $t$ if there exists an interval in $I(t)$ whose left endpoint is $k$.

\begin{lemma} \lemlab{lem:one-present}
1 is present at all times.
\end{lemma}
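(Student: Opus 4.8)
The plan is to chase the interval $[1,1]$ forward in time using the monotonicity property already recorded in \lemref{lem:basic-props(a)}. Concretely, by \lemref{lem:basic-props(0)} we have $I(1) = \{[1,1]\}$, so $[1,1] \in I(1)$. Fix any time $t \ge 1$ and apply \lemref{lem:basic-props(a)} with the interval $[1,1] \in I(1)$ and the target time $t' = t$: this yields an interval $[k,\ell] \in I(t)$ with $[1,1] \subseteq [k,\ell]$, i.e.\ $k \le 1 \le \ell$. It then remains to argue $k \ge 1$, and combining the two inequalities forces $k = 1$, so $[k,\ell]$ witnesses that $1$ is present at time $t$; since $t$ was arbitrary, $1$ is present at all times.

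The one point that needs a sentence of justification is the lower bound $k \ge 1$ on left endpoints. This is immediate from the construction of the family $\{I(t)\}$: each interval in $I(t)$ is (contained in, hence bounded below by the minimum of) some $J_u = [\min(C_u),\max(C_u)]$ where $C_u = \{\chi(\sigma) : P(\sigma) = u\}$, and the monotonic counter $\chi$ takes values in $\mathbb{N}\setminus\{0\}$ by definition; hence $\min(C_u) \ge 1$ and every interval appearing at any time has left endpoint at least $1$. (Alternatively, one can fold this into the abstract hypotheses by simply stipulating that the intervals in each $I(t)$ consist of positive integers, which is all that the remainder of the argument uses.)

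I do not expect any real obstacle here: the statement is essentially a one-line corollary of \lemref{lem:basic-props(0)} together with the ``persistence'' Lemma \lemref{lem:basic-props(a)}, plus the trivial observation that counter values are positive. The only care needed is to phrase it uniformly over all $t$ rather than just proving it for $t+1$ given $t$ — but \lemref{lem:basic-props(a)} already handles arbitrary $t' \ge t$, so a single application starting from $t=1$ suffices and no induction on $t$ is required in this lemma itself.
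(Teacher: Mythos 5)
Your proof is correct and takes essentially the same approach as the paper: apply \lemref{lem:basic-props(0)} to get $[1,1]\in I(1)$, then \lemref{lem:basic-props(a)} to obtain $[k,\ell]\supseteq[1,1]$ in $I(t)$, and conclude $k=1$ since $k\ge 1$. The paper treats the positivity of left endpoints as immediate; your extra sentence justifying it from the definition of the monotonic counter is a harmless elaboration of the same argument.
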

\begin{proof}
By~\lemref{lem:basic-props(0)}, $[1,1] \in I(1)$, and by \lemref{lem:basic-props(a)}, $I(t)$ contains an interval $[k,\ell] \supseteq [1,1]$ for each $t \ge 1$. Because $k \ge 1$ we have $k=1$. 
\end{proof}

We say that $k$ is \emph{exceptional} at time $t$ if $k$ is present at time $t$ but $k+1$ is absent at time $t+1$. We show first that if exceptional counts occur very few times then we get a lower bound $|I(t)|$ for some time $t$. Fix a time horizon $n$ and the corresponding subfamily $\{I(t)\}_{t=1}^{n+1}$. For each $h \ge 1$, let $\phi_h$ be the number of times $t \in \{1,2,\dots,n\}$ that some count $1 \le k \le h$ is exceptional at time $t$.

\begin{lemma} \lemlab{lem:excep-lb}
If $(\phi_h+1)h \le n$, then there exists $t_0 \in \{1,2,\dots,n+1\}$ such that $|I(t_0)| \ge h+1$. 
\end{lemma}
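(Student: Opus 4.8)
Let me think about what the lemma says. We have a family $\{I(t)\}$ of sets of maximal intervals satisfying three properties: $I(1) = \{[1,1]\}$; intervals persist (grow) over time; and intervals shift up by 1 over time. We define "$k$ present at time $t$" to mean some interval in $I(t)$ has left endpoint $k$. We define "$k$ exceptional at time $t$" to mean $k$ present at time $t$ but $k+1$ absent at time $t+1$. And $\phi_h$ counts how many times $t \le n$ some count $1 \le k \le h$ is exceptional. The claim: if $(\phi_h + 1)h \le n$, then $|I(t_0)| \ge h+1$ for some $t_0 \le n+1$.

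So let me think about the proof.

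The plan is to track, for each left endpoint value $k$ with $1 \le k \le h$, the "lifetime" of $k$ as a present value — that is, the set of times at which $k$ is present. First I would observe that by Lemma~\ref{lem:basic-props(b)}, if $k$ is present at time $t$ and $k$ is \emph{not} exceptional at time $t$, then $k+1$ is present at time $t+1$. Also by Lemma~\ref{lem:basic-props(a)}, once an interval with left endpoint $k$ appears, at the next time step it is contained in some interval with left endpoint $\le k$ — so the set of present values among $\{1,\dots,h\}$ evolves in a controlled way: without an exceptional event, the "frontier" moves up. Combined with Lemma~\ref{lem:one-present}, which says $1$ is always present, I want to argue that present values among $\{1,\dots,h\}$ propagate upward: starting from $1$ being present at time $1$, in the absence of exceptional events the value $1$ at time $1$ forces $2$ present at time $2$, forces $3$ present at time $3$, and so on up to $h$ present at time $h$; moreover $1$ is \emph{still} present at time $2$ (by Lemma~\ref{lem:one-present}), so it re-seeds the chain, giving $2$ present at time $3$, etc. So after a stretch of $h$ consecutive non-exceptional times, \emph{all} of $1,2,\dots,h$ are simultaneously present, plus the freshly-arrived value at the top keeps the count going; with a little care this yields $h+1$ distinct left endpoints present at once, hence $|I(t_0)| \ge h+1$ since distinct left endpoints force distinct maximal intervals.

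More carefully, the key quantitative step: divide $\{1,2,\dots,n\}$ into $\phi_h + 1$ blocks. Since exceptional events (for counts $\le h$) happen at most $\phi_h$ times, by pigeonhole one of these blocks — of length $\lceil n/(\phi_h+1)\rceil \ge h$ by the hypothesis $(\phi_h+1)h \le n$ — contains no time at which any count $\le h$ is exceptional. Within that exceptional-free window of length $\ge h$, I use the upward-propagation argument: Lemma~\ref{lem:one-present} keeps re-injecting the value $1$ at every time in the window, and each such injection marches upward one step per time unit without being killed (no exceptional events), so at the final time $t_0$ of the window all of $1, 2, \ldots, h$ are present; together with the fact that distinct present left-endpoints correspond to distinct maximal intervals, and accounting for one more interval (e.g., an interval whose left endpoint exceeds $h$, or refining the count to $h+1$ present values by starting the propagation one step before the window or noting the window has length strictly exceeding $h-1$), we conclude $|I(t_0)| \ge h+1$.

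The main obstacle I anticipate is the bookkeeping in the propagation argument — specifically, being careful about off-by-one issues: whether I get $h$ or $h+1$ distinct present left endpoints out of a window of length exactly $h$, and making sure Lemma~\ref{lem:basic-props(b)} is applied correctly (it says $[k,\ell] \in I(t)$ gives an interval in $I(t+1)$ containing $[k+1,\ell+1]$, so the left endpoint of that containing interval is $\le k+1$, not necessarily $= k+1$ — I need to combine this with the fact that $k$ not exceptional means $k+1$ \emph{is} present, i.e., some interval has left endpoint \emph{exactly} $k+1$). I would also need to double-check that "maximal intervals with distinct left endpoints are distinct" — which is immediate — and that we never need more than $n+1$ time steps, which is why the window sits inside $\{1,\dots,n\}$ and $t_0 \le n+1$. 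The structure of properties (\ref{lem:basic-props(0)})--(\ref{lem:basic-props(b)}) is exactly tailored to make this propagation go through, so once the indices are pinned down the argument should be short.
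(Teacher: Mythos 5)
Your approach is the same as the paper's: pigeonhole to find an exceptional-free window of length at least $h$ inside $\{1,\dots,n\}$, then propagate present counts upward using \lemref{lem:one-present} (to re-seed $1$ at every time) and \lemref{lem:basic-props(b)} combined with non-exceptionality (so ``$k$ present and non-exceptional at time $s$'' yields ``$k+1$ present at time $s+1$''). The one place you explicitly flag as unresolved -- how to get $h+1$ present left-endpoints rather than $h$ out of a window of length $h$ -- is exactly where the paper does a small but essential piece of bookkeeping that none of your suggested fixes quite gives. The correct resolution is to take $t_0 = t+h$, i.e., \emph{one step past the end} of the exceptional-free window $\{t,\dots,t+h-1\}$, and to run the induction up to $k=h+1$: the inductive statement is that $\{1,\dots,k\}$ are all present at time $t+k-1$. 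The step from $k=h$ to $k=h+1$ only requires non-exceptionality at time $t+h-1$, which is the \emph{last} time of the window, so it is available; this gives $\{2,\dots,h+1\}$ present at time $t+h$, and $1$ present as well by \lemref{lem:one-present}, hence $h+1$ distinct left endpoints and $|I(t+h)| \ge h+1$. The bound $t_0 = t+h \le n+1$ comes from $t+h-1 \le n$, which is why the lemma's conclusion is stated over $\{1,\dots,n+1\}$ rather than $\{1,\dots,n\}$. Your alternative suggestion -- that there should be ``an interval whose left endpoint exceeds $h$'' -- is not justified: nothing in the setup guarantees a present count above $h$, so that route would not close the gap.
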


\begin{proof}
In the interval $[1,n]$, mark the times $t$ when some count $1 \le k \le h$ is exceptional at time $t$. The unmarked times can be represented as a disjoint union of $\phi_h+1$ intervals. By the pigeonhole principle, there exists one interval with size at least $h$. Let $\{t, t+1, \dots,t+h-1\}$ be such that no count $1 \le k \le h$ is exceptional in that interval. We show by induction that every count in $\{1,2,\dots,k\}$ is present at time $t+k-1$ for $1 \le k \le h+1$. That would imply for $k=h+1$ that $|I(t+h)| \ge h+1$. Since $t+h-1 \le n$, the lemma holds with $t_0 = t+h$.

The base case $k=1$ follows by \lemref{lem:one-present}. Assume the statement holds for some $1 \le k \le h$. Then, since each count in $\{1,2,\dots,k\}$ is non-exceptional at time $t+k-1$, it follows by definition that each count in $\{2,\dots, k+1\}$ is present at time $t+k$. Together with \lemref{lem:one-present}, it follows that each count in $\{1,2,\dots, k+1\}$ is present at time $t+k$ as well.
\end{proof}

We will now bound $\phi_h$ so that the above lemma can be applied for $h$ as large as possible. The idea below is to show that if a single count is exceptional too many times, then it belongs to some interval of large length, violating the approximability guarantee. 

\begin{lemma}\lemlab{lem:excep}
Suppose $k$ is exceptional at each time $t \in E$ for some nonempty set $E$. Then for all $t>\max(E)$, there exists an interval in $I(t)$ containing $k$ whose right endpoint is at least $k+|E|$.
\end{lemma}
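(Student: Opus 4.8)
Write $E = \{t_1 < t_2 < \dots < t_s\}$, so that $s = |E|$ and $\max(E) = t_s$. I would prove, by induction on $j \in \{1,\dots,s\}$, the following strengthening of the lemma: for every time $t \ge t_j + 1$ there is an interval in $I(t)$ that contains $k$ and whose right endpoint is at least $k + j$. Taking $j = s$ then gives the lemma, since $t > \max(E)$ is the same as $t \ge t_s + 1$.

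\textbf{Base case $j = 1$.} Since $k$ is exceptional at $t_1$, it is present at $t_1$, so some interval $[k,\ell] \in I(t_1)$ has left endpoint exactly $k$. Apply \lemref{lem:basic-props(b)} to get $J \in I(t_1+1)$ with $J \supseteq [k+1,\ell+1]$. Exceptionality of $k$ at $t_1$ means $k+1$ is absent at $t_1+1$, so the left endpoint of $J$ — being at most $k+1$ yet not equal to $k+1$ — is at most $k$, while its right endpoint is at least $\ell+1 \ge k+1$. Hence $J$ contains $k$ and has right endpoint $\ge k+1$. For $t > t_1+1$, \lemref{lem:basic-props(a)} produces an interval of $I(t)$ containing $J$, which therefore contains $k$ and still has right endpoint $\ge k+1$.

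\textbf{Inductive step $j \to j+1$ (with $j+1 \le s$).} Since $t_{j+1} \ge t_j+1$, the inductive hypothesis applies at time $t_{j+1}$ and yields an interval $[a,b] \in I(t_{j+1})$ with $a \le k \le b$ and $b \ge k+j$. Also, exceptionality of $k$ at $t_{j+1}$ gives some interval $[k,\ell] \in I(t_{j+1})$ with left endpoint exactly $k$. The next step is to merge these into a single interval of $I(t_{j+1})$ with left endpoint exactly $k$ and right endpoint $\ge k+j$: if $a = k$ then $[a,b]$ already works; if $a < k$ then, since distinct members of $I(t_{j+1})$ are incomparable under inclusion, $[k,\ell] \not\subseteq [a,b]$, which (as $a \le k$) forces $\ell > b \ge k+j$, so $[k,\ell]$ works. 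Call the resulting interval $[k,\ell']$ with $\ell' \ge k+j$. Now apply \lemref{lem:basic-props(b)} to $[k,\ell']$ to obtain $J \in I(t_{j+1}+1)$ with $J \supseteq [k+1,\ell'+1]$; exactly as in the base case, absence of $k+1$ at $t_{j+1}+1$ forces the left endpoint of $J$ to be $\le k$, so $J$ contains $k$ and has right endpoint $\ge \ell'+1 \ge k+(j+1)$. Finally \lemref{lem:basic-props(a)} propagates this bound to all $t > t_{j+1}+1$, closing the induction.

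\textbf{Main obstacle.} The delicate point is the merging step. The inductive hypothesis only furnishes an interval \emph{containing} $k$, but to run \lemref{lem:basic-props(b)} and then pin down the left endpoint of its image via the absence of $k+1$, I need an interval whose left endpoint is \emph{exactly} $k$ — which is precisely what "present" means and what the exceptionality hypothesis supplies at each $t_j$. Reconciling the two facts is where the maximality (incomparability) of the intervals in $I(t)$ is essential; everything else is routine chaining of \lemref{lem:basic-props(a)} and \lemref{lem:basic-props(b)} with the definition of "exceptional". A secondary but necessary point is that the inductive claim must be stated for \emph{all} times $t \ge t_j+1$, not merely at the exceptional times, so that it can be invoked at the next exceptional time $t_{j+1}$; bridging the gaps between consecutive exceptional times is exactly the role of \lemref{lem:basic-props(a)}.
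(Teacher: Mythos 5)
Your proof is correct and follows essentially the same route as the paper's: both argue by induction on the number of exceptional times, chaining \lemref{lem:basic-props(a)} and \lemref{lem:basic-props(b)}, and both invoke the maximality (incomparability) of intervals in $I(t_{j+1})$ to reconcile the interval supplied by the inductive hypothesis with the presence witness starting at $k$. The only cosmetic difference is that the paper carries the initial right endpoint $\ell_0 \ge k$ through a parameterized hypothesis $P(E,t_0,\ell_0)$ and peels off $\max(E)$, whereas you build up from the smallest exceptional time with the slightly weaker but still sufficient lower bound $k+j$.
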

\begin{proof}
For every $s \in E$, since $k$ is exceptional at time $s$, it is also present at time $s$ by definition. Let $t_0$ be the minimum value in $E$ and let $[k,\ell_0] \in I(t_0)$ for some $\ell_0$ certify $k$'s presence at time $t_0$. We show the following statement, denoted $P(E,t_0,\ell_0)$, by induction on $|E|$: for each $t > \max(E)$ there exists an interval in $I(t)$ containing $k$ such that its right endpoint is at least $\ell_0 + |E|$. Since $\ell_0 \ge k$, this proves the lemma.

For $|E|=1$, because $[k,\ell_0] \in I(t_0)$, by \lemref{lem:basic-props(b)}, there exists an interval $[k',\ell'] \supseteq [k+1,\ell_0+1]$ in $I(t_0+1)$. Because $k+1$ is absent at time $t_0+1$, we have $k' \le k$. By \lemref{lem:basic-props(a)}, there exists $J \in I(t)$ with $J \supseteq [k',\ell'] \supseteq [k,\ell_0+1]$. Thus, $k \in J$ and the right endpoint of $J$ is at least $\ell_0 + 1 = \ell_0  + |E|$. 

For $|E|>1$, let $t_1 \ne t_0$ denote $\max(E)$. Let $[k,\ell_1] \in I(t_1)$ for some $\ell_1$ certify $k$'s presence at time $t_1$. Let $F = E \backslash \{t_1\}$ and observe that $t_0 = \min(F)$ as well. Apply the induction hypothesis $P(F,t_0,\ell_0)$: since $t_1 > \max(F)$, there exists an interval in $I(t_1)$ containing $k$ such that its right endpoint is at least $\ell_0 + |F|$. On the other hand, $[k,\ell_1]$ is maximal in $I(t)$, therefore $\ell_1 \ge \ell_0 + |F|$.

Apply the induction hypothesis $P(\{t_1\},t_1,\ell_1)$: for every $t>t_1 = \max(E)$, there exists an interval in $I(t)$ containing $k$ such that its right endpoint is at least $\ell_1 + 1 \ge \ell_0 + |F| + 1 = \ell_0 + |E|$.
\end{proof}

Finally, we consider the situation where every element of $I(n+1)$ is $\eps$-bound. By \lemref{lem:excep}, each $k$ can be exceptional at most $\eps(k)$ times in $\{1,2,\dots,n\}$. This gives an immediate bound on $\phi_h$ namely $\phi_h \le \sum_{k=1}^h \eps(k)$. To apply \lemref{lem:excep-lb}, we find the largest $h$ such that $\left(1+\sum_{k=1}^h \eps(k)\right)h \le n$.

For example, when $\eps(k) = \delta k$, where $\delta>0$ is constant, we have $\sum_{k=1}^h \eps(k) \le \delta h(h+1)/2$. Therefore there exists a good choice of $h$ with $h = \Theta(n^{1/3})$ that yields a polynomial bound for $|I(t_0)|$. For a multiplicative approximation $n^{1-\delta}$, where $\delta>0$ is constant, we can choose $h = \Theta(n^{\delta'/3})$ for some $0 < \delta' < \delta$ and still obtain a polynomial bound for $|I(t_0)|$. Finally, for an additive approximation $n^{1-\delta}$, we can choose $h = \Theta(n^{\delta'/2})$ for some $0 < \delta' < \delta$.

The proof of \thmref{thm:approx:count:lb} follows by noting that since $|I(t_0)|=\Omega(\poly(n))$, the number of states in any deterministic approximate counting algorithm must also be $\Omega(\poly(n))$. 
Therefore, the algorithm requires at least $\Omega(\log n)$ bits of space. 

\subsection{A Communication Complexity Model for White-Box Adversaries}

We can formulate the communication game in \thmref{thm:two:p:inf} in terms of a communication matrix.
The communication matrix differs from existing two-player communication games because the protocol may not necessarily succeed against all possible inputs to the players; there can be specific inputs to the protocol that cause failure, provided that these inputs cannot be found by a $T$-time randomized algorithm.
The communication model is particularly interesting due to the equality problem and its previously discussed complexity in this model, which depends on the runtime $T$. 
On the other hand, problems like set disjointness and index do not seem to exhibit such a dependence. 
Thus we believe that this communication complexity model may be of independent interest.

Assume that there exists a streaming algorithm $\mathcal A$ robust against $T$-time white-box adversaries that can be used to compute a function $f(x,y)$ using $s$ bits of communication with probability $p$.
In the communication protocol, Alice creates a stream $S_x$ that induces the input $x$.
Alice runs $\mathcal A$ on $S_x$ and communicates the $s$-bit state of $\mathcal A$ to Bob.
Bob then continues running $\mathcal A$ on a stream $S_y$ that induces the input $y$, starting from the $s$-bit state that Bob receives from Alice.
The output $f(x,y)$ is the output of $\cA$ when run on the stream $S_x\circ S_y$, where $\circ$ denotes concatenation.

We now define a communication matrix for the randomized one-way communication protocol induced by $\mathcal A$. 
Consider a matrix $M$ whose rows are indexed by tuples $(x,r_x)$, where $x$ is Alice's input and $r_x$ is Alice's randomness, and whose columns are indexed by tuples $(y,r_y)$, where $y$ is Bob's input and $r_y$ is Bob's randomness.
% We can assume that $r_x$ and $r_y$ are uniformly distributed on their supports.
The entry $M_{(x,r_x),(y,r_y)}$ of $M$ denotes the output of the two-player communication game when Alice holds $(x,r_x)$ and Bob holds $(y,r_y)$.
Since $\mathcal A$ uses $s$ space, there exists a partition of the rows of $M$ into $2^s$ parts such that if $(x,r_x)$ and $(x', r_{x'})$ are in the same part, then $M_{(x, r_x), (y, r_y)} = M_{(x', r_{x'}), (y, r_y)}$.
This corresponds to the fact that Alice sends Bob the same $s$-bit state $\te{state}(x,r_x)$ whether Alice held $(x,r_x)$ or $(x', r_{x'})$.
For each $(x, r_x)$, define
\begin{equation}\label{e1}
    p_{\te{state}(x,r_x)} = \min_y \Pr_{r_y} [M_{(x,r_x), (y,r_y)} = f(x,y)]
\end{equation}
which is the minimum probability that $\mathcal A$ outputs $f(x,y)$ over all possible inputs $y$ chosen by a white-box adversary.
Note that (\ref{e1}) is well-defined because $M_{(x, r_x), (y, r_y)} = M_{(x', r_{x'}), (y, r_y)}$ whenever $\te{state}(x,r_x) = \te{state}(x', r_{x'})$.
By the robustness of $\mathcal A$ against a white-box adversary, we have the guarantee that for all inputs $x$, $\mathbb{E}_{r_x}[p_{\te{state}(x,r_x)}] \ge p$.

We now consider the situation in which the white-box adversary $A$ is computationally bounded.
In this setting, the adversary may not be able to enumerate over all inputs $y$ and output the $y$ that minimizes $\Pr_{r_y} [M_{(x,r_x), (y,r_y)} = f(x,y)]$.
Hence a communication protocol $\mathcal A$ robust against computationally bounded white-box adversaries $A$ satisfies the weaker guarantee that
\begin{equation*}
    \mathbb{E}_{y = A(\te{state}(x,r_x))} \Pr_{r_y} [M_{(x,r_x), (y,r_y)} = f(x,y)] \ge p
\end{equation*}
for all computationally bounded adversaries $A$ and inputs $x$.

\section*{Acknowledgements}
Alec Sun was supported by the NSF Graduate Research Fellowship under Grant Nos. 1745016, and 2140739. 
David P. Woodruff and Samson Zhou were supported by a Simons Investigator Award and by the National Science Foundation under Grant No. CCF-1815840. 
This work was performed under the auspices of the U.S. Department of Energy by Lawrence Livermore National Laboratory under Contract DE-AC52-07NA27344.  LLNL-CONF-833470.
Sandeep Silwal is supported by an NSF Graduate Research Fellowship under Grant No. 1745302, NSF TRIPODS program (award DMS-2022448), and Simons Investigator Award.
\def\shortbib{0}
\bibliographystyle{alpha}
\bibliography{references}
\end{document}